\documentclass[11pt]{article}

\usepackage{graphicx}
\usepackage[usenames,dvipsnames,svgnames,table]{xcolor}
\usepackage[english]{babel}
\usepackage[pagebackref]{hyperref}
\hypersetup{colorlinks=true,urlcolor=blue,linkcolor=blue,citecolor=[rgb]{.0,.50,.32},}
\usepackage{amsmath,amssymb,amsthm,mathrsfs}
\usepackage{mathtools}
\usepackage{enumitem}

\usepackage{fullpage}
\hypersetup{pdfpagemode=UseNone}
\usepackage[margin=1cm]{caption}
\usepackage{thmtools}
\usepackage{thm-restate}
\usepackage[longend,ruled,linesnumbered,nokwfunc]{algorithm2e}
\SetKwInput{Input}{Input}
\SetKwInput{Output}{Output}

\usepackage[capitalize]{cleveref}

\newcommand{\ignore}[1]{}

\renewcommand{\S}{\mathcal{S}}

\newcommand{\R}{\mathbb{R}}

\newcommand{\hw}{\widehat{w}}
\newcommand{\mA}{A}

\newcommand{\mP}{P }
\newcommand{\mSigma}{\mathbf{\Sigma}}
\newcommand{\mDiag}{\mathrm{Diag}}
\newcommand{\mZero}{\mathbf{0}}

\newcommand{\totaliters}{T}
\newcommand{\fobj}{\mathcal{F}}
\newcommand{\primal}{\mathrm{vec}}
\newcommand{\dual}{\mathrm{mat}}
\newcommand{\ones}{\mathbf 1}

\DeclareMathOperator{\Tr}{tr}

\renewcommand{\min}{\mathrm{min}}
\newcommand{\argmin}{\mathrm{argmin}}
\renewcommand{\max}{\mathrm{max}}
\renewcommand{\epsilon}{\varepsilon}
\newcommand{\eps}{\epsilon}

\def\01{\{0,1\}}
\DeclareMathOperator{\polylog}{polylog}

\newtheorem{defin}{Definition}
\newtheorem{definition}[defin]{Definition}
\newtheorem{proposition}[defin]{Proposition}
\newtheorem{theorem}{Theorem}
\newtheorem*{theorem*}{Theorem}
\newtheorem{remark}[defin]{Remark}

\newtheorem{lemma}[defin]{Lemma}

\newtheorem*{claim*}{Claim}

\newtheorem*{conjecture*}{Conjecture}
\theoremstyle{definition}

\DeclareMathOperator{\diag}{diag}
\renewcommand{\vec}{\mathrm{vec}}

\DeclareMathOperator{\logdet}{logdet}

\newcommand{\ma}{A}
\newcommand{\mh}{H}
\newcommand{\mv}{V}
\newcommand{\mw}{W}

\newcommand{\defeq}{:=}
\newcommand{\norm}[1]{\|#1\|}
\newcommand{\normFull}[1]{\left\|#1\right\|}
\newcommand{\tr}{\mathrm{tr}}
\newcommand{\vzero}{\mathbf{0}}

\newcommand{\poly}{\mathrm{poly}}

\newcommand{\Sn}{\S^{n}}

\newcommand{\sigmaipw}{\sigma_{i}(\mw^{\frac{1}{2}-\frac{1}{p}}\ma)}
\newcommand{\sigmapw}{\sigma(\mw^{\frac12-\frac1p} \ma)}

\let\oldnl\nl
\renewcommand{\d}{\mathrm{d}}
\newcommand{\nonl}{\renewcommand{\nl}{\let\nl\oldnl}}
\newcommand{\intC}{\mathrm{int}\,\mathcal{C}}
\newcommand{\alphap}{\alpha_{p}}
\newcommand{\betap}{\beta_{p}}
\newcommand{\baralphap}{{\bar{\alpha}_{p}}}
\newcommand{\barbetap}{{\bar{\beta}_{p}}}

\newcommand\blfootnote[1]{%
  \begingroup
  \renewcommand\thefootnote{}\footnote{#1}%
  \addtocounter{footnote}{-1}%
  \endgroup
}

\title{Computing Lewis weights to high precision\\ using local relative smoothness}

\author{
Sander Gribling\thanks{Tilburg University, \texttt{s.j.gribling@tilburguniversity.edu}}
\and
Aaron Sidford\thanks{Stanford University, \texttt{\string{sidford,chenyiz\string}@stanford.edu}}
\and
Chenyi Zhang$^\dagger$
}

\date{}

\begin{document}

\maketitle

\begin{abstract}
We provide algorithms that compute $\epsilon$-estimates of the $\ell_p$-Lewis weights of a matrix $A \in \R^{m \times n}$ for $p \geq 4$ using $O(p^2 \log(m/\epsilon))$ rounds of leverage score computation, where $\ell_p$-Lewis weights and leverage scores are both standard measures of row importance. This improves upon the state-of-the-art round complexity of $O(p^3 \log(m/\epsilon))$ due to Fazel, Lee, Padmanabha, and Sidford (2022). We obtain our results by carefully applying a local variant of relatively smooth gradient descent to primal and dual forms of the $\ell_p$-Lewis weight optimization problem and providing tools to convert between different notions of approximate $\ell_p$-Lewis weights.\blfootnote{This work subsumes the note ``On computing approximate Lewis weights'' by Apers, Gribling, Sidford~\cite{apers2024lewis}.} 
\end{abstract}

\section{Introduction} 

The \emph{$\ell_p$-Lewis weights of a matrix $A \in \R^{m \times n}$
}, denoted $\sigma_p(A) \in \R^{m}_{\geq 0}$, are a fundamental measure of the importance of the rows of $A$~\cite{Lewis78,BLM89,cohen2015lp}. They arise in sampling schemes for sparsifying a matrix with respect to the $\ell_p$-norm~\cite{cohen2015lp,lee16,JLS22}, self-concordant barriers for linear programming~\cite{lee2019solving,BLLSSWW21}, optimal design problems in statistics, and geometric problems \cite{todd2016minimum,cohen2015lp,fazel22}. 

The $\ell_p$-Lewis weights can be viewed as an $\ell_p$-generalization of the \emph{leverage scores of $A$}, denoted $\sigma(A)$, a natural measure of the importance of a row in $\ell_2$. Leverage scores have many applications in statistics \cite{rudelson2007sampling}, randomized linear algebra \cite{drineas2012fast}, and graph algorithms \cite{spielman2008graph}.
In the case that $A$ has full (row-)rank (which we assume for simplicity), $\sigma(A)$ is defined as
\begin{equation*}
\sigma(A)_i \defeq a_i^\top\bigl(A^\top A\bigr)^{-1}a_i
\text{ and }
a_i^\top
\text{ is the $i$'th row of $A$ for all $i \in [m]$\,.
}
\end{equation*}
The $\ell_p$-Lewis weights of $A$ are the leverage scores of $A$ when $p = 2$. Otherwise, they are defined implicitly as the leverage scores after appropriate re-weighting the rows by these same leverage scores. Below we define them formally for what we call \emph{non-degenerate} matrices.\footnote{This is a mild restriction as zero rows have both  a leverage score and a Lewis weight of zero and rank-deficiency can be removed by carefully restricting to a smaller subspace or analyzed by replacing matrix inverses with pseudoinverses.} 

\begin{definition}[$\ell_p$-Lewis weights] 
For $p \in (0, \infty)$ the $\ell_p$-Lewis weights of non-degenerate, i.e., full-rank with no zero rows, $A\in\mathbb{R}^{m\times n}$, denoted $\sigma_p(A) \in \R^{m}_{> 0}$,
is the unique \cite{Lewis78,wojtaszczyk1991,cohen2015lp} 
positive $w \in \R^{m}_{> 0}$ where, for $W = \mDiag(w)$, $w = \sigma(W^{\frac{1}{2} - \frac{1}{p}} A)$.
\end{definition}

To further motivate $\ell_p$-Lewis weights, we consider two fundamental problems from statistics and geometry associated with a set of vectors $\{a_i\}_{i \in [m]} \subseteq \R^n$, see e.g., \cite{khachiyan1996rounding,todd2016minimum}.\footnote{We assume for simplicity that $\{a_i\}_{i \in [m]}$ is centrally symmetric, that is,  $\{a_i\}_{i \in [m]} = - \{a_i\}_{i \in [m]}$.} First, the $D$-optimal design problem in statistics associates the $\{a_i\}_{i \in [m]}$ to experiments and asks to assign probabilities $\lambda \in \R^m_{\geq 0}$ to them in order to minimize the determinant of the error covariance matrix $(\sum_{i \in [m]} \lambda_i a_i a_i^\top )^{-1}$, i.e., 
\begin{equation} \label{eq:Doptimal}
\min_{\lambda \in \R^m_{\geq 0}} -\logdet\Big(\sum_{i\in[m]} \lambda_i a_i a_i^\top\Big) \enspace \text{ s.t. } \enspace \ones^\top \lambda =1.
\end{equation}
This corresponds to designing the experiment to minimize the volume of the resulting confidence ellipsoid (for any fixed confidence level). The (Lagrange) dual problem,
\begin{equation} \label{eq:John}
\min_{M \in \Sn_{\succ 0}} -\logdet(M) \enspace \text{ s.t. } \enspace a_i^\top M a_i \leq 1 \enspace \forall i \in [m].
\end{equation}
is a fundamental geometric problem that computes the minimum volume ellipsoid  $\mathcal E \defeq \{x \in \R^n : x^\top M x \leq 1\}$ that contains the vectors $\{a_i\}_{i \in [m]}$, i.e., the John ellipsoid of the set $\{a_i\}_{i \in[m]}$~\cite{john1948}. Cohen and Peng~\cite{cohen2015lp} showed that an $\ell_p$-variant of \eqref{eq:John} is connected to Lewis weights: based on \cite{wojtaszczyk1991}, they introduced the convex program 
\begin{align} \label{eq:convexprog} 
\min_{M \in \Sn_{\succ 0}} -\logdet(M) \enspace \text{ s.t. } \enspace \sum_{i\in [m]} (a_i^\top M a_i)^{p/2} \leq n\,.
\end{align}
The optimal solution $M_{*}$ of \eqref{eq:convexprog} satisfies $M_{*}^{-1} = \sum_{i \in [m]} (a_i^\top M_{*} a_i)^{p/2-1} a_i a_i^\top$ and therefore encodes the Lewis weights of $A$ as $\sigma_p(A)_i = (a_i^\top M_{*} a_i)^{p/2}$.

Given their applications and connections, computing $\ell_p$-Lewis weights is a prominent structured optimization problem. Additionally, given the well-studied nature of leverage scores and their simple expression linear-algebraically, previous work studied how much more algorithmically challenging it is to compute Lewis weights \cite{cohen2015lp,lee16,lee2019solving,fazel22}. Similarly, in this paper we study the following central question:
\begin{center} \emph{How many leverage score computations, i.e., computing $\sigma(D A)$\\ for diagonal $D$,
suffice to estimate $\sigma_p(A)$?}
\end{center} 
We seek new algorithmic and analytic tools for answering this question.

\subsection{Prior work}
For $p \in (0, 4)$, it is straightforward to show the map $m(w) \defeq \sigma(\mDiag(w)^{\frac{1}{2} - \frac{1}{p}} A)$ is multiplicatively contractive for $w \in \R^{m}_{> 0}$~\cite{cohen2015lp}. Iteratively applying $m$ gives an algorithm that computes an $\epsilon$-estimate of $\sigma_p$, i.e., $\hat{w} \in \R^{m}_{> 0}$ with $(1 - \epsilon) \sigma_p(A) \leq \hat{w} \leq (1 + \epsilon) \sigma_p(A)$ using $O(\frac{1}{1-|1-p/2|} \cdot \log(\log(m/\eps)))$ leverage score computations \cite{cohen2015lp}. 

However, efficiently obtaining $\epsilon$-estimates of $\sigma_p(A)$ for $p \geq 4$ has been more challenging. Cohen and Peng~\cite{cohen2015lp} showed that, by applying the ellipsoid method to \eqref{eq:convexprog}, estimates can be computed in $O(m\cdot \poly(n)\log(1/\eps))$ \textit{time}. They also provided a recursive algorithm to compute high-accuracy estimates using $\Omega(n)$ leverage score computations. Additionally, Lee and Sidford~\cite{lee2019solving} show how to compute $\eps$-estimates using $O(\sqrt{n} \cdot p^2 \polylog(mn/\eps))$ leverage score computations. Their approach applies a descent method to a volumetric potential (equivalent to $\fobj_\primal$ defined later up to a change of coordinates)  that captures $\ell_p$-Lewis weights. They show that the Hessian is stable around the minimizer which makes the convex objective function \textit{locally} well conditioned. This ensures a $\log(1/\eps)$-dependence of the descent method once weights are found that are close enough to the minimizer. To find such initial weights they used a homotopy method that slowly increases~$p$. 

Only recently, Fazel, Lee, Padmanabhan, and Sidford~\cite{fazel22} provided the only known algorithms which compute $\epsilon$-estimates of Lewis weights for $p > 2$ using a \emph{nearly dimension-free} number of leverage score computations. Their method used $O(p^3 \log(mp/\eps))$ leverage score computations. The derivation and analysis of their algorithms leveraged the following convex optimization problem, where we let $V \defeq \diag(v)$,
\begin{align}
\label{eq:vec-opt-problem}
    \min_{v \in \R^{m}_{> 0}} \fobj_{\primal}(v)
    \text{ where }
    \fobj_{\primal}(v)\defeq -\log \det ( A ^\top  V A ) + \frac{1}{1+\alphap}\ones^\top v^{1+\alphap}\text{ for }\alphap \defeq \frac2{p-2}.
\end{align}
Optimality conditions imply that its minimizer $v_{*}$ satisfies $[v_{*}]_i^{\alphap} = a_i^\top (A^\top V A)^{-1} a_i$, and therefore $v_*^{1+\alphap} = \sigma_p(A)$.\footnote{This rescaling of the coordinates of the Lewis weights to the $\frac{1}{1+\alphap} = 1-\frac2p$ power is often convenient to work with and we use $v$ rather than $w$ to indicate vectors in this rescaled space.} Fazel et al.~\cite{fazel22} departed from contractivity analysis and instead performed an innovative, seemingly bespoke, analysis of \eqref{eq:vec-opt-problem}. 

The key insight of~\cite{fazel22} is that a type of quasi-Newton step significantly decreases $\fobj_{\primal}(v)$ when a geometrically motivated invariant holds. The invariant is $\rho_\max(v) \leq 1+\alphap$, where 
\begin{equation*}
    \rho_\max(v) \defeq \max_{i \in [m]} \, \rho_i(v)
    \enspace\text{where}\enspace
    \rho_i(v) \defeq \frac{a_i^\top (A^\top V A)^{-1} a_i}{v_i^{\alphap}}
    =
    \frac{\sigma_i(V^{\frac12}A)}{v_i^{1 + \alphap}}
    \enspace\text{for all}\enspace i \in [m].
\end{equation*}
Note that $\rho_i(v_{*})=1$ for all $i \in [m]$, which means that the distance from $\rho(v)$ to the all-ones vector is a proxy for closeness to (rescaled) Lewis weights. The quantity $\rho_\max(v)$ has a geometric interpretation:  
$\{x \in \R^n : x^\top A^\top V A x\leq 1 \} \subseteq \{x \in \R^n: \|V^{-\alphap/2} A x\|_\infty \leq \sqrt{\rho_\max(v)}\}$, which can be viewed as a notion of rounding~\cite{fazel22}.\footnote{Indeed, $a_i^\top (A^\top V A)^{-1} a_i \leq \rho_\max(v) v_i^{-\alphap}$. Hence, the rescaled vectors $\{ a_i/(v_i^{\alphap/2}\sqrt{\rho_\max(v)})\}_{i \in [m]}$ belong to the ellipsoid $\{x \in \R^n: x^\top  (A^\top VA)^{-1} x \leq 1\}$. The statement follows by considering the polar of each set.}

To ensure that the geometric invariant is maintained, Fazel et al.~\cite{fazel22} introduced a rounding procedure and provided an algorithm that uses $O(p^3 \log(mp/\eps))$ leverage score computations and alternates between applying the rounding procedure and applying the quasi-Newton step. The quasi-Newton step can be written as updating $v$ to $v^+$ where $v_i^+ = \left(1+\eta \frac{\rho_i(v)-1}{\rho_i(v)+1}\right) v_i$, and the step-size $\eta$ is $1/3$ for $p \geq 4$. Additionally, they provided another algorithm which avoids the rounding procedure by varying the step-size $\eta$ per coordinate $i$ depending on whether $\rho_i(v) \geq 1$ or $\rho_i(v)<1$; it also uses $O(p^3 \log(mp/\eps))$ leverage score computations. 

There are additional algorithms that compute weaker approximations than $\epsilon$-estimates of $\ell_p$-Lewis weights \cite{cohen2015lp,lee16}. To motivate these notions, recall that if $w$ is an $\ell_p$-Lewis weight vector, then it satisfies the fixed-point equation $w = \sigma(W^{\frac12-\frac1p}A)$, and therefore $\|w\|_1 = \|\sigma(W^{\frac12-\frac1p} A)\|_1 = n$. By relaxing the fixed-point equation to a one- or two-sided inequality, we arrive at the following (increasingly strong up to constants depending on $p$) notions of approximate $\ell_p$-Lewis weights.

\begin{definition}[Lewis weight approximations]
Let $\ma \in \R^{m \times n}$ be a non-degenerate, $w \in \R^m_{> 0}$, $0< \eps <1$, and $p >0$. Then we say
\begin{itemize}
\item
$w$ is 
a \emph{one-sided $\epsilon$-approximation} of $\sigma_p(A)$ if $\sigma(W^{\frac12-\frac1p}A) \leq (1+\eps) w$ and $\|w\|_1 \leq (1+\eps)n$. 
\item 
$w$ is a \emph{two-sided $\epsilon$-approximation} of $\sigma_p(A)$ if $(1-\eps)  \sigma(W^{\frac12-\frac1p}A) \leq w \leq (1+\eps) \sigma(W^{\frac12-\frac1p}A)$.
\item
$w$ is an \emph{$\eps$-estimate} of $\sigma_p(A)$  if $(1-\eps)\sigma_p(A) \leq w \leq (1+\eps) \sigma_p(A)$.
\end{itemize}
\end{definition}
For many $\ell_p$-embedding and -regression problems, the weakest one-sided approximation suffices, even when $\|w\|_1 = O(d)$, see~\cite{talagrand1990embedding,cohen2015lp,Woodruff2023online}. Lee showed that iteratively applying the map $m(w)$ for $T = O(\log(m/n)/\eps)$ iterations and outputting the average of the iterates results in a one-sided $\eps$-approximation~\cite[Theorem~5.3.4]{lee16}. For some applications in optimization, however, a stronger notion of estimates are used~\cite{lee2019solving,apers2023quantum-journal}. A natural question is how the various notions are related to each other. The only previously known conversion is that a two-sided $\eps$-approximation is also an $O(\eps p^2 \sqrt{n})$-estimate~\cite[Lemma 14]{fazel22}.

\subsection{Our results}

In this paper we develop two new algorithms for computing $\epsilon$-estimates of $\ell_p$-Lewis weights for $p > 2$. Our algorithms use only $O(p^2\log(m/\epsilon))$ leverage score computations, improving upon the prior nearly-dimension free results by a factor of $p$.  (See Table~\ref{tab:summary}.)

\begin{table}[ht]
\small
\centering
\begin{tabular}{ccc}
\hline
\# Computes  & Optimality & Reference \\ \hline \hline
$O(p^3\log(mp/\epsilon))$  & $\epsilon$-estimate & \cite{fazel22} \\ \hline
 $O(\log(m/n)/\eps)$ & $\epsilon$-one-sided & \cite{lee16} \\ \hline
$O(p^2\log(m/\epsilon))$  & $\epsilon$-estimate & Algorithm~\ref{alg:relative-smoothness-dual} \\ \hline
$O(p^2\log(mp/\epsilon))$ & $\epsilon$-estimate & Algorithm~\ref{alg:relative-smoothness} \\ \hline
\end{tabular}
\caption{Comparison between the prior state of the art and our work, for the regime $p \geq 4$. The number of computes measures the number of leverage score computations. } 
\label{tab:summary}
\end{table}

Moreover, we show how to obtain these results by a fairly straightforward algorithm  (the complete pseudocode is given later in Algorithm~\ref{alg:relative-smoothness-dual}): starting from the all-ones vector $v^{(0)} = \ones$, it performs the following iteration $T=O(p^2\log(m/\epsilon))$ many times 
\begin{align} 
\label{eq:algintro}
\framebox{\quad \rule[-1.2em]{0pt}{3em} $v_i^{(t+1)} = \Big(1+ \frac{\rho_i(v^{(t)})^{1/\alphap}-1}{L} \Big)v_i^{(t)},  \qquad \forall i \in [m]$ \quad } 
\end{align}
where $L$ is a suitably chosen step-size, and outputs $\widehat w = \hat v^{1+\alphap}$ for $\hat v = (a_i^\top (A^\top V^{(T)} A)^{-1} a_i)^{1/\alphap}$. 

\begin{restatable}{theorem}{theomat} \label{thm:relative-smoothness-dual-main}
For $p>2$,  \Cref{alg:relative-smoothness-dual} outputs an $\epsilon$-estimate of $\sigma_p(A)$ in $O(p^2\log(mp\alphap/\epsilon))$ iterations. Each iteration computes the leverage scores of $DA$ of some diagonal matrix $D$.
\end{restatable}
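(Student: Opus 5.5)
The plan is to read the update \eqref{eq:algintro} as a local relatively smooth gradient step on a \emph{dual} (matrix) form of the Lewis weight problem, and then convert the resulting approximation into an $\epsilon$-estimate. Concretely, dualize \eqref{eq:vec-opt-problem} in the spirit of \eqref{eq:convexprog}. Take the objective
\[
\fobj_\dual(M) \defeq -\logdet M + \tfrac{1}{\beta}\textstyle\sum_i (a_i^\top M a_i)^{\beta}
\qquad\text{with } \beta=p/2,
\]
or rather its reparametrization in the weights $v$ through $M=(A^\top V A)^{-1}$. Under this reparametrization $\rho_i(v)^{1/\alphap}$ is exactly the ratio that appears in the gradient, so that \eqref{eq:algintro} is a mirror-descent step with entropy-like reference function $h(v)=-\sum_i\log v_i$ (or $\sum_i v_i\log v_i$) and step $1/L$. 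The first step is to verify this identity, and to check that the fixed point of the iteration is $v_*$ with $v_*^{1+\alphap}=\sigma_p(A)$.

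The second step is to establish \emph{local} relative smoothness and relative strong convexity. On the region where $\rho(v)$ is within constant factors of $\ones$ (equivalently, where $v$ is multiplicatively close to the previous iterate), we want
\[
\mu\,\nabla^2 h \preceq \nabla^2 \fobj \preceq L\,\nabla^2 h, \qquad L/\mu=O(p^2).
\]
The Hessian of $-\logdet(A^\top VA)$ in $v$ is $V^{-1}(\mathrm{Diag}(\sigma)-P^{(2)})V^{-1}$, where $P^{(2)}$ is the Schur square of the projection. This term is sandwiched between $0$ and $V^{-1}\mathrm{Diag}(\sigma)V^{-1}$. The power term contributes a diagonal Hessian with entries of order $\alphap\,\rho_i$ or $1/\alphap$, depending on the parametrization; this is where the ratio $O(1/\alphap^2)=O(p^2)$ should come from. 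We then show the step is multiplicatively bounded, $|v^{(t+1)}_i/v^{(t)}_i-1|\le \max_i \rho_i^{1/\alphap}/L$, so that the standard one-step relative-smoothness descent lemma applies. Together with the three-point inequality, this gives
\[
\fobj(v^{(t)})-\fobj_*\le (1-\mu/L)^t\bigl(\fobj(\ones)-\fobj_*\bigr).
\]
The initial gap is $O(\log m+n\cdot\mathrm{poly}(p))$, bounded by comparing $\ones$ with $v_*$, whose entries lie in $[\,\mathrm{poly}(1/m),1]$. Hence $T=O(p^2\log(mp\alphap/\epsilon))$ iterations suffice to reach function error $\delta=\mathrm{poly}(\epsilon/(mp))$.

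The main obstacle is the unbounded factor $\rho_i^{1/\alphap}$, which can be as large as $m^{p/2}$ initially. This makes the smoothness constant non-uniform. I would handle it by first showing the invariant $\rho_\max(v^{(t)})^{1/\alphap}\le O(L)$, or more simply $v^{(t+1)}\le 2v^{(t)}$ after the first step, by monotonicity. To get this, I would choose $L$ so that a coordinate with large $\rho_i$ grows at most by a constant factor. Then I would argue relative smoothness only along the segment between consecutive iterates. If needed, I would run a short initial phase, of $O(\log m)$ steps, to bring $\rho$ into a constant range. This follows the local-relative-smoothness framework of the title.

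The final step converts the function-value error into an $\epsilon$-estimate. Relative strong convexity gives a Bregman-distance bound, hence $\|\log v^{(T)}-\log v_*\|_\infty\le O(\sqrt{\delta}\,\mathrm{poly}(p))$. The output $\hat v_i=(a_i^\top(A^\top V^{(T)}A)^{-1}a_i)^{1/\alphap}$ is a leverage-score map of $v^{(T)}$. Multiplicative perturbation of $V$ changes the leverage scores by at most the same factor, so $\hat v^{1+\alphap}$ is within $1\pm\epsilon$ of $\sigma_p(A)$ once $\delta$ is polynomially small. Each iteration computes only $\sigma(V^{1/2}A)$, that is, the leverage scores of $DA$ with $D=V^{1/2}$, plus one more computation for the output.
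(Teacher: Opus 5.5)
\textbf{The central identification in your first step is false.} After substituting $M=(A^\top VA)^{-1}$, the objective becomes
$G(v)\defeq\fobj_\dual\bigl((A^\top VA)^{-1}\bigr)=\logdet(A^\top VA)+\frac2p\sum_i\tau_i(v)^{p/2}$, where $\tau_i(v)\defeq a_i^\top(A^\top VA)^{-1}a_i$. Its gradient is $\partial_iG(v)=\tau_i-\sum_j\tau_j^{\betap}\bigl(a_i^\top(A^\top VA)^{-1}a_j\bigr)^2$. This couples all coordinates. So no Bregman step on $G$ with a separable $h$ such as $-\sum_i\log v_i$ or $\sum_iv_i\log v_i$ reproduces the coordinatewise rule \eqref{eq:algintro}.

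The two differ even for $A=I$. The $-\sum_i\log v_i$ step gives $1/v_i^{+}=(1+\frac1L)/v_i-\frac1Lv_i^{-p/2-1}$. In contrast, \eqref{eq:algintro} gives $v_i^{+}=(1-\frac1L)v_i+\frac1Lv_i^{-\betap}$.

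Worse, $G$ is not convex. For $A=I$ its second derivative is $-v_i^{-2}+(\frac p2+1)v_i^{-p/2-2}$, which is negative for large $v_i$. So there is no relative strong convexity in $v$ to feed into the three-point argument.

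The missing idea is to keep the Bregman step in the matrix variable with $h_\dual(M)=-\logdet M$. Since $\nabla h_\dual(M)=-M^{-1}$, the step reads $[M^{(t+1)}]^{-1}=(1-\frac1L)[M^{(t)}]^{-1}+\frac1L\sum_i(a_i^\top M^{(t)}a_i)^{\betap}a_ia_i^\top$. This update is affine in $M^{-1}$ and preserves the form $M^{-1}=A^\top VA$. Using $(a_i^\top M^{(t)}a_i)^{\betap}=\Phi_i(v^{(t)})v_i^{(t)}$, it is exactly \eqref{eq:algintro} with $\Phi_i=\rho_i^{1/\alphap}$.

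In matrix space three facts hold:
\begin{itemize}
\item $\fobj_\dual$ is globally $1$-strongly convex relative to $h_\dual$, because the power term is convex in $M$.
\item At $M=(A^\top VA)^{-1}$ one has $\nabla^2\fobj_\dual(M)\preceq(1+\betap\Phi_\max(v))\nabla^2h_\dual(M)$.
\item Local smoothness must be checked along the matrix segment $(1-\lambda)M^{(t)}+\lambda M^{(t+1)}$, which is generally not of the form $(A^\top VA)^{-1}$.
\end{itemize}
Together these give $\mu=1$ and $L=O(p\barbetap)=O(p^2)$.

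\textbf{Several supporting claims are also wrong}, though they matter less once you work in the right space.

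First, the obstacle you single out does not occur. Since $\rho_i(\ones)=\sigma_i(A)\le 1$, we have $\Phi_\max(v^{(0)})\le 1$. The invariant $\Phi_\max(v^{(t)})\le 4\barbetap\le L$ then propagates, because $v^{(t+1)}\ge(1-\frac1L)v^{(t)}$ gives $\Phi_i(v^{(t+1)})\le(1-\frac1L)^{-\betap}\bigl(1+\frac{\Phi_i(v^{(t)})-1}{L}\bigr)^{-1}\Phi_i(v^{(t)})$. No warm-up phase is needed.

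Second, the Hessian you quote, $V^{-1}(\mDiag(\sigma)-P^{(2)})V^{-1}$, is not that of $-\logdet(A^\top VA)$; the correct Hessian is $V^{-1}P(v)^{(2)}V^{-1}$. Moreover it belongs to $\fobj_\primal$, not to the dual objective.

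Third, the entries of $v_*$ are not bounded below by $\mathrm{poly}(1/m)$: a row of tiny norm has arbitrarily small Lewis weight. So a quantity like $D_h(v_*,\ones)$ with $h=-\sum_i\log v_i$ is not controlled by $m$. Note also that the framework contracts $D_h(x^*,x^{(t)})$ starting from $D_h(x^*,x^{(0)})$, not the function gap starting from the initial gap. In matrix space, $D_{h_\dual}\bigl(M_*,(A^\top A)^{-1}\bigr)\le m-n$, using only $v_*\le\ones$.

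Finally, you need not pass through $\|\log v^{(T)}-\log v_*\|_\infty$. That quantity is not directly controlled by closeness of $M^{(T)}$ to $M_*$, since $v\mapsto A^\top VA$ is not injective. Instead, $D_{h_\dual}(M_*,M^{(T)})\le\delta^2/4$ already makes $M^{(T)}$ and $M_*$ spectrally $\delta$-close. Hence each $a_i^\top M^{(T)}a_i$ is within a $1\pm\delta$ factor of $a_i^\top M_*a_i$. Because the output raises this to the power $p/2$, you need $\delta=O(\epsilon/p)$, which costs only a $\log p$ factor in $\totaliters$.
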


Together with \cite{cohen2015lp}, Theorem~\ref{thm:relative-smoothness-dual-main} gives the state-of-the-art rates for computing the $\ell_p$-Lewis weights for all regimes of $p$.

Excitingly, rather than a particularly tailored analysis of a potential function, we analyze this algorithm using \emph{relative smoothness and relative strong-convexity}~\cite{lu2018relatively}, which are general regularity assumptions used in analyzing gradient-based methods for convex optimization, see also \cite{bauschke2017,tseng2008accelerated}. Via a simple extension, we prove that relatively smooth gradient descent converges at rates similar to those established in \cite{lu2018relatively} even when only a local variant of relative smoothness holds. We show how \eqref{eq:algintro} is essentially equivalent to applying this method to a suitable objective. Additionally, we show that the convergence guarantees of this method directly correspond to computing $\epsilon$-estimates of Lewis weights.

Complementing this result, we show that relative smoothness can also be applied directly to~\eqref{eq:vec-opt-problem}, the optimization problem considered in \cite{fazel22}. We show that replacing \eqref{eq:algintro} with 
\begin{align} 
\framebox{\quad \rule[-1.2em]{0pt}{3em}  $v_i^{(t+1)}\leftarrow\Big(1+\frac{\rho_i(v^{(t)})-1}{L}\Big)^{1/\alphap}v_i^{(t)},\quad\forall i\in[m]$ \label{lin:fobj-rs-update-intro} \quad }
\end{align}
for suitably chosen $L$ optimizes $\fobj_{\primal}(v)$ to accuracy $\epsilon > 0$ in $O(p^2\log(mp^2\alphap/\epsilon))$ iterations.  
However, as in \cite{fazel22}, significant work is needed to convert the convergence in function value to a guarantee on closeness to Lewis weights. We later provide \Cref{alg:relative-smoothness} which does this and analyze it in several steps. First, we show that the iterates from~\eqref{lin:fobj-rs-update-intro} in fact converge to one-sided approximations. 

\begin{restatable}{theorem}{theoapproxone}
\label{thm:primal_algorithm_one_sided_guarantee}
For $p>2$, \Cref{alg:relative-smoothness} with parameter $\hat\epsilon$ produces, after $T=O(p^2\log(mp^2\alphap/\hat\epsilon))$ iterations, a vector $w\coloneqq [v^{(T)}]^{1+\alphap}$ that is a one-sided $\hat\epsilon$-approximation of $\sigma_p(A)$.
\end{restatable}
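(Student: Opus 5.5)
The plan is to separate the statement into two pieces. First, a function-value guarantee for the update \eqref{lin:fobj-rs-update-intro} on $\fobj_\primal$. Second, a conversion showing that any $v$ with $\fobj_\primal(v)-\fobj_\primal(v_*)\le\delta$ gives $w=v^{1+\alphap}$ that is a one-sided approximation, with $\delta=\poly(\hat\epsilon/(mp\alphap))$.

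For the first piece I use the local relatively smooth gradient descent framework. Change variables to $u=v^{\alphap}$ (equivalently, take the reference function $h(v)=\frac{1}{\alphap(1+\alphap)}\sum_i v_i^{1+\alphap}$ up to scaling). The mirror step for $h$ then reads $u_i^+=u_i-\frac1L\nabla_i\fobj_\primal(v)$. Since $\nabla_i\fobj_\primal(v)=v_i^{\alphap}(1-\rho_i(v))$, this gives exactly $v_i^{+}=(1+\frac{\rho_i(v)-1}{L})^{1/\alphap}v_i$.

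Next I verify two properties on the region $\{v:\rho_\max(v)\le C\}$, or on a multiplicative neighborhood of each iterate.
\begin{itemize}
\item \emph{Local relative smoothness with $L=O(p)$.} The Hessian of $-\logdet(A^\top VA)$ is $V^{-1}(\Sigma-P^{(2)})V^{-1}\preceq V^{-1}\Sigma V^{-1}$, where $\Sigma=\mDiag(\sigma(V^{1/2}A))$. Since $\sigma_i(V^{1/2}A)=\rho_iv_i^{1+\alphap}$, the condition $\rho_i$ bounded compares this Hessian with $\nabla^2 h\asymp \diag(v_i^{\alphap-1})$, up to constants times $1/\alphap\approx p$.
\item \emph{Relative strong convexity with $\mu=\Omega(1/p)$.} This comes from the $\frac{1}{1+\alphap}\ones^\top v^{1+\alphap}$ term, whose Hessian is $\alphap\diag(v^{\alphap-1})$.
\end{itemize}
The local lemma then gives $\fobj_\primal(v^{(T)})-\fobj_\primal(v_*)\le(1-\mu/L)^T(\fobj_\primal(v^{(0)})-\fobj_\primal(v_*))$, with $\mu/L=\Omega(1/p^2)$. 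The initial gap from $v^{(0)}=\ones$ is $O(n\log m\cdot\poly(p))$, which produces the $\log(mp^2\alphap/\hat\epsilon)$ term. I must also check that iterates stay in the region where local smoothness holds. To do this, I would show that $\rho_\max$ does not increase above a constant once it is below $1+O(1)$. A single step changes $v_i$ multiplicatively by at most a constant factor when $L\gtrsim p$, so leverage scores move boundedly; $\rho_\max(\ones)$ may be as large as $m$, and I would handle this with an initial phase, or by choosing $L$ adaptively as $\max(\rho_\max,\cdot)$ while noting that $\rho_\max$ contracts.

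For the conversion, one-sidedness requires $\rho_i(v)\le1+\hat\epsilon$ for all $i$ and $\|v^{1+\alphap}\|_1\le(1+\hat\epsilon)n$. The norm condition follows from stationarity in the scaling direction. For $g(t)=\fobj_\primal(tv)$ we have $g'(1)=-n+\|v^{1+\alphap}\|_1$, and a small function gap together with smoothness along this ray bounds $|g'(1)|$. For the coordinate condition, suppose $\rho_i(v)>1+\hat\epsilon$. Increasing $v_i$ alone, which is exactly one gradient step in coordinate $i$, decreases $\fobj_\primal$ by roughly $v_i^{1+\alphap}(\rho_i-1)^2/L$ by the descent lemma.

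The main obstacle is that $v_i^{1+\alphap}$ could be tiny, so the gap alone gives no uniform bound. I would avoid relying on the gap here and instead use the last iterate directly. The output is after a gradient step, and the function-value decrease at step $T$ is at most the gap. This gives $\sum_i v_i^{1+\alphap}(\rho_i-1)^2/L\le\delta$, i.e.\ $\sum_i \sigma_i(\rho_i-1)^2/\rho_i\lesssim L\delta$. A term with $\rho_i>1+\hat\epsilon$ still needs a lower bound on $\sigma_i$. I expect to supply this via a separate argument, namely the monotonicity of the ratio $\rho_i$ under the update, together with a lower bound $v_i\ge\poly(1/(mp))$ maintained throughout. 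Both feed the $\log(mp^2\alphap/\hat\epsilon)$ dependence.
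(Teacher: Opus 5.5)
Your first half matches the paper's route: the update as a mirror step for $h_\primal$, relative strong convexity, local relative smoothness under a bound on $\rho_\max$, and \Cref{prop:primal-gradient-scheme-convergence-local}. There are some slips:
\begin{itemize}
\item $\rho_i(\ones)=\sigma_i(A)\le 1$, so no initial phase or adaptive $L$ is needed.
\item A one-step invariant cannot sit at an absolute constant level. Through the other coordinates, a step can multiply $\rho_i$ by up to $(1-1/L)^{-\betap}$. This is why the paper maintains $\rho_\max\le 4\barbetap$ (\Cref{lem:algorithm_rhomax_control}) and takes $L=32p\barbetap$.
\item The framework bounds the gap via $D_{h_\primal}(v_*,\ones)\le 2m$, not by contracting the function gap. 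The initial gap can be of order $m/(1+\alphap)$, which is harmless inside the logarithm.
\end{itemize}
Your norm argument is correct and cleaner than the paper's split in \Cref{lem:function_value_gap_to_one_sidedness}. With $r=\|v^{1+\alphap}\|_1/n$, one has $\fobj_\primal(v)-\min_{t>0}\fobj_\primal(tv)=\frac{n}{1+\alphap}(r-1-\ln r)$. So a gap of at most $\Delta$ already forces $r\le 1+\hat\epsilon$, with no use of $\rho_\max$.

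The genuine gap is the coordinatewise condition $\rho_\max(v^{(T)})\le 1+\hat\epsilon$. You leave it open, and the mechanism you sketch cannot supply it.
\begin{itemize}
\item Both the gap (via \Cref{lem:fobj_primal_gap_lower}) and the last step's decrease weigh coordinate $i$ by $v_i^{1+\alphap}=\sigma_i/\rho_i$.
\item A small gap does not bound $\rho_\max$. Halving $[v_*]_i$ for a row with tiny Lewis weight gives $\rho_i\ge 2^{\alphap}>1+\hat\epsilon$. The gap grows only by $O(\sigma_p(A)_i)$, which can be arbitrarily small for fixed $m,p$ by shrinking that row.
\item Your hoped-for lower bound $v_i\ge\poly(1/(mp))$ is not available. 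The dynamics only give $v_i^{(T)}\ge(1-1/L)^{\betap T}$, i.e.\ $v_i^{1+\alphap}\ge(1-1/L)^{(1+\betap)T}$. This decays strictly faster than the gap bound $\approx 2m(1-1/L)^{T}$, so it can never beat the gap.
\item $\rho_i$ is not monotone along the iteration. We have $\rho_i(v^+)/\rho_i(v)=(v_i^+/v_i)^{-\alphap}\cdot a_i^\top(A^\top V^+A)^{-1}a_i\big/a_i^\top(A^\top VA)^{-1}a_i$, and the second factor exceeds $1$ whenever other coordinates shrink.
\end{itemize}

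The missing idea is a per-coordinate contraction whose cross-coordinate error is bounded, uniformly in $i$, by the gap.
\begin{itemize}
\item \Cref{lem:contraction-strong-version} puts the second factor above in $[1-3\theta_i,1+3\theta_i]$, where $\theta_i=\sum_j|v_j^+/v_j-1|\,\sigma_j(v)\cos^2\gamma_{ij}$.
\item For the algorithm's step, $|v_j^+/v_j-1|\le\frac{2}{\alphap L}|\rho_j-1|$.
\item Cauchy--Schwarz with $\sum_j\sigma_j\cos^2\gamma_{ij}=1$ (\Cref{lem:sum-of-cosines}) and \Cref{lem:fobj_primal_gap_lower} then give $\theta_i\le\theta(v)=O\big(\frac{1}{\alphap L}\sqrt{(\fobj_\primal(v)-\fobj_\primal(v_*))/\alphap}\big)$. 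This has no dependence on $\sigma_i$ or $v_i$: the gap only has to control the weighted movement of the other coordinates.
\item The result is \Cref{lem:rho-contraction}: $\rho_i(v^{(t+1)})\le\max\{1+15L\theta(v^{(t)}),(1-\frac{\rho_i(v^{(t)})-1}{4L})\rho_i(v^{(t)})\}$.
\item After $T/2$ steps the gap is at most $\Delta$ (\Cref{prop:primal_alg_func_decrease}), so $15L\theta\le\hat\epsilon$. \Cref{lem:contraction-sequence-convergence} then brings every $\rho_i$ from at most $4\barbetap$ to below $1+\hat\epsilon$ within the remaining $O(L\log(1/(\alphap\hat\epsilon)))$ steps.
\end{itemize}
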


Then, we establish two new results that show how to convert a one-sided approximation to either a two-sided approximation or a multiplicative estimate, where $\barbetap\coloneqq \max\{1,1/\alphap\}=\max\big\{1,\frac{p-2}{2}\big\}$.
\begin{restatable}{theorem}{theoonetotwo}\label{thm:one-sided-to-two-sided} 
For $p\geq 2$, if $w$ is a one-sided $\epsilon_{\mathrm{one}}$-approximation of $\sigma_p(A)$ and $\widehat w\coloneqq \sigma(w^{\frac12-\frac1p})^{\frac p2}/w^{\betap}$, then $\widehat w$ is a two-sided $\epsilon_{\mathrm{two}}$-approximation of $\sigma_p(A)$ for $\epsilon_{\mathrm{two}}=3\barbetap n\epsilon_{\mathrm{one}}(1+\epsilon_{\mathrm{one}})^{\barbetap}$.
\end{restatable}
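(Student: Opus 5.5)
The plan is to read the formula for $\widehat w$ as a ``one more fixed-point step written in $M$-coordinates'' and then compare two Gram matrices spectrally. I read $\betap$ as $\frac p2-1$, which is what makes the powers below cancel. Write $W=\mDiag(w)$, $M\defeq A^\top W^{1-\frac2p}A$, $s\defeq\sigma(W^{\frac12-\frac1p}A)$ and $r_i\defeq s_i/w_i$.

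\textbf{Step 1 (rewrite $\widehat w$).} Since $s_i=w_i^{1-\frac2p}\,a_i^\top M^{-1}a_i$, we have $s_i^{p/2}=w_i^{\frac p2-1}(a_i^\top M^{-1}a_i)^{p/2}$, and therefore
\[
\widehat w_i=(a_i^\top M^{-1}a_i)^{p/2}.
\]
This also gives $\widehat w_i^{\,1-\frac2p}=w_i^{1-\frac2p}\,r_i^{\betap}$. Hence $\widehat M\defeq A^\top \widehat W^{1-\frac2p}A=\sum_i w_i^{1-\frac2p}r_i^{\betap}a_ia_i^\top$, and
\[
\frac{\sigma(\widehat W^{\frac12-\frac1p}A)_i}{\widehat w_i}=\frac{\widehat w_i^{1-\frac2p}\,a_i^\top\widehat M^{-1}a_i}{\widehat w_i}=\frac{a_i^\top \widehat M^{-1}a_i}{a_i^\top M^{-1}a_i},
\]
using $\widehat w_i^{1-\frac2p}\,a_i^\top M^{-1}a_i=\widehat w_i$. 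So the theorem reduces to showing $(1-\delta)M\preceq\widehat M\preceq(1+\delta')M$ with suitable $\delta,\delta'$.

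\textbf{Step 2 (upper bound).} One-sidedness gives $r_i\le 1+\eps_{\mathrm{one}}$. Therefore $\widehat M\preceq(1+\eps_{\mathrm{one}})^{\betap}M$, which yields $\sigma(\widehat W^{\frac12-\frac1p}A)_i\ge(1+\eps_{\mathrm{one}})^{-\betap}\widehat w_i$.

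\textbf{Step 3 (lower bound, the main step).} Write
\[
M-\widehat M=\sum_i w_i^{1-\frac2p}(1-r_i^{\betap})\,a_ia_i^\top\preceq\sum_{i:r_i<1}w_i^{1-\frac2p}(1-r_i^{\betap})\,a_ia_i^\top\eqqcolon P\succeq0.
\]
Bound the operator norm of $M^{-1/2}PM^{-1/2}$ by its trace, which equals $\sum_{r_i<1}s_i(1-r_i^{\betap})$.
\begin{itemize}
\item For $r\in[0,1]$ we have $1-r^{\betap}\le\barbetap(1-r)$: by convexity when $\betap\ge1$, and because $r^{\betap}\ge r$ when $\betap<1$.
\item Also $s_i(1-r_i)=r_i(w_i-s_i)\le w_i-s_i$ whenever $r_i<1$.
\item Since $\sum_i s_i=n$ and $\sum_i w_i\le(1+\eps_{\mathrm{one}})n$, we get $\sum_i(w_i-s_i)_+\le\eps_{\mathrm{one}}n+\sum_i(s_i-w_i)_+$.
\item Using $s_i-w_i\le\eps_{\mathrm{one}}w_i$, the last sum is at most $\eps_{\mathrm{one}}(1+\eps_{\mathrm{one}})n$.
\end{itemize}
Altogether the trace is at most roughly $2\barbetap n\eps_{\mathrm{one}}(1+\eps_{\mathrm{one}})$. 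Hence $\widehat M\succeq(1-2\barbetap n\eps_{\mathrm{one}}(1+\eps_{\mathrm{one}}))M$, and so $\sigma(\widehat W^{\frac12-\frac1p}A)_i\le\widehat w_i/(1-\delta)$ with $\delta\defeq 2\barbetap n\eps_{\mathrm{one}}(1+\eps_{\mathrm{one}})$.

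\textbf{Step 4 (constants).} Combine the two ratio bounds into the two-sided inequalities $(1-\eps_{\mathrm{two}})\sigma(\widehat W^{\frac12-\frac1p}A)\le\widehat w\le(1+\eps_{\mathrm{two}})\sigma(\widehat W^{\frac12-\frac1p}A)$. Elementary inequalities such as $(1+\eps)^{\betap}-1\le\betap\eps(1+\eps)^{\betap}$ should give $\eps_{\mathrm{two}}=3\barbetap n\eps_{\mathrm{one}}(1+\eps_{\mathrm{one}})^{\barbetap}$. The case where this quantity is $\ge1$ is vacuous, so it suffices to treat $\delta<1$.

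The main obstacle is Step 3. The lower spectral bound must come from an aggregate ($\ell_1$/trace) control, because the one-sided guarantee gives no pointwise lower bound on $r_i$. Converting that trace bound to an operator-norm bound is exactly where the factor $n$ enters.
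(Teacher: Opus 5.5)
Your proof is correct, and its skeleton matches the paper's. You rewrite $\widehat w_i=(a_i^\top M^{-1}a_i)^{p/2}$ so that $\sigma_i(\widehat W^{\frac12-\frac1p}A)/\widehat w_i=a_i^\top\widehat M^{-1}a_i/a_i^\top M^{-1}a_i$. Then you compare $\widehat M$ with $M$ spectrally through a trace bound, which is where the factor $n$ enters.

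The difference lies in how that comparison is done. The paper handles both directions at once:
\begin{itemize}
\item Lemma~\ref{lem:mult_diff} bounds the full trace norm $\|M^{-1/2}(\widehat M-M)M^{-1/2}\|_1$ by $\|\widehat w-s\|_1=\sum_i s_i|r_i^{\betap}-1|$.
\item Lemma~\ref{lem:distances} bounds this by $\barbetap(1+\eps_{\mathrm{one}})^{\barbetap}\|w-s\|_1$ via the mean-value estimate $|1-x^c|\le c\max\{1,x\}^{c-1}|1-x|$.
\item It finishes with $\|w-s\|_1\le 3\eps_{\mathrm{one}}n$ from Lemma~\ref{lem:abs_diff_2}.
\end{itemize}
You split asymmetrically instead. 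The bound $\widehat M\preceq(1+\eps_{\mathrm{one}})^{\betap}M$ follows pointwise from $r_i\le 1+\eps_{\mathrm{one}}$. Only the reverse inequality uses a trace, restricted to the PSD part over $\{i:r_i<1\}$. There the identity $s_i(1-r_i)=r_i(w_i-s_i)$ and the mass balance $\sum_i(w_i-s_i)_+\le\eps_{\mathrm{one}}n+\sum_i(s_i-w_i)_+$ replace Lemma~\ref{lem:abs_diff_2}.

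What your route buys:
\begin{itemize}
\item It is slightly sharper. The inequality $\widehat w\le(1+\eps_{\mathrm{two}})\,\sigma(\widehat W^{\frac12-\frac1p}A)$ actually holds with the dimension-free factor $(1+\eps_{\mathrm{one}})^{\betap}$.
\item The other side needs only $2\barbetap n\eps_{\mathrm{one}}(1+\eps_{\mathrm{one}})$ rather than $3\barbetap n\eps_{\mathrm{one}}(1+\eps_{\mathrm{one}})^{\barbetap}$.
\item It makes visible that the factor $n$ is needed in one direction only.
\end{itemize}

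What the paper's route buys: it yields the two-sided trace-norm estimate of Lemma~\ref{lem:mult_diff} as a reusable by-product. That estimate is what Lemma~\ref{lem:ln-sigma-norm} consumes in the later conversion to $\eps$-estimates. Your argument would supply it too if you also bounded the trace of the $\{i:r_i>1\}$ part, for example via $s_i(r_i-1)=r_i(s_i-w_i)\le(1+\eps_{\mathrm{one}})(s_i-w_i)_+$.
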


\begin{restatable}{theorem}{theoonetomulti}\label{thm:one-sided-to-multiplicative}
For $p>2$, suppose $w$ is a one-sided $\epsilon_{\mathrm{one}}$-approximation of $\sigma_p(A)$ satisfying
\[
\epsilon_{\mathrm{one}}\leq \frac{1}{\barbetap n}\min\left\{\frac{1}{96(p-2)^2(4p-7)^2},\frac1{50}\right\}.
\]
Define $\widehat w \in \mathbb R^m_{>0}$ by $\widehat w_i \coloneqq \sigma_i(w^{\frac12-\frac1p})^{\frac p2}/w_i^{\betap}$ for each $i \in [m]$. Then $\widehat w$ is an $\epsilon_{\mathrm{est}}$-estimate of $\sigma_p(A)$, where $\epsilon_{\mathrm{est}}=2(p-2)(4p-7)\sqrt{6\barbetap n\epsilon_{\mathrm{one}}}$.
\end{restatable}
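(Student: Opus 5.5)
The plan is to go through function values of the convex program \eqref{eq:vec-opt-problem} rather than through \Cref{thm:one-sided-to-two-sided}. Chaining \Cref{thm:one-sided-to-two-sided} with the conversion of \cite[Lemma 14]{fazel22} would lose an extra $\sqrt n$. The target $\epsilon_{\mathrm{est}}\propto\sqrt{\barbetap n\epsilon_{\mathrm{one}}}$ instead looks like "square root of an optimality gap of size $O(\barbetap n\epsilon_{\mathrm{one}})$", and this suggests a strong-convexity-type argument.

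\textbf{Step 1 (gap bound).} Set $v\coloneqq w^{1/(1+\alphap)}$, so that $w=v^{1+\alphap}$ and $W^{1-2/p}=V$. I will show that the one-sided guarantee gives a small duality gap:
\[
\fobj_\primal(v')-\fobj_\primal(v_*)\le c\,\barbetap n\epsilon_{\mathrm{one}}
\]
for an absolute constant $c$ and a suitably rescaled point $v'$. The certificate is the dual (John-type) problem \eqref{eq:convexprog}, evaluated at $M=(A^\top VA)^{-1}$ rescaled by a scalar. The condition $\sigma(W^{\frac12-\frac1p}A)\le(1+\epsilon_{\mathrm{one}})w$ says $a_i^\top M a_i\le(1+\epsilon_{\mathrm{one}})v_i^{\alphap}$. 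Combined with $\|w\|_1\le(1+\epsilon_{\mathrm{one}})n$, this makes $M/(1+\epsilon_{\mathrm{one}})^{O(\barbetap)}$ feasible for \eqref{eq:convexprog}, which bounds $-\logdet$ at the optimum from one side. The primal value at $v$ bounds it from the other side, using $\ones^\top v^{1+\alphap}=\|w\|_1$. The difference of the two bounds is $O(\barbetap n\epsilon_{\mathrm{one}})$, and the factor $\barbetap$ comes from exponents like $(1+\epsilon)^{1/\alphap}$. This step should be routine bookkeeping of exponents, using the optimality conditions $v_*^{\alphap}=\sigma(V_*^{1/2}A)/v_*$ stated after \eqref{eq:vec-opt-problem}.

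\textbf{Step 2 (gap $\Rightarrow$ coordinatewise closeness).} Set $\hat v\coloneqq\widehat w^{1/(1+\alphap)}$; by construction $\hat v_i^{\alphap}$ is (up to the $w$-scaling) $a_i^\top(A^\top VA)^{-1}a_i$, i.e.\ one step of the fixed-point map applied to $v$. The goal is $\|\log\hat v-\log v_*\|_\infty\lesssim(p-2)\sqrt{\barbetap n\epsilon_{\mathrm{one}}}$.

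I would work in logarithmic coordinates $x=\log v$. There the Hessian of $-\logdet(A^\top VA)$ is $\mathrm{Diag}(\sigma)-P^{(2)}$, where $P^{(2)}$ is the entrywise square of the projection matrix, and the Hessian of the second term is $\mathrm{Diag}((1+\alphap)v^{1+\alphap})$. The Schur-product bound $0\preceq\mathrm{Diag}(\sigma)-P^{(2)}\preceq\mathrm{Diag}(\sigma)$ then gives a lower bound on the Hessian that is stable under multiplicative perturbations of $v$ (as in \cite{lee2019solving}). This yields a gap lower bound of the form $\gtrsim\alphap\sum_i v_{*,i}^{1+\alphap}\Delta_i^2$, where $\Delta=\log v-\log v_*$.

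That is only a weighted $\ell_2$ bound, so to reach $\ell_\infty$ I will exploit that $\hat v$ is a fixed-point-map image of $v$. For $\log\hat v_i-\log v_{*,i}$ I write a first-order expansion in $\Delta$ of $\log(a_i^\top(A^\top VA)^{-1}a_i)$. The derivative of this quantity in $\log v_j$ is $-P_{ij}^2/\sigma_i$, whose row sums are $1$. Cauchy--Schwarz with weights $\sigma_j$ then converts the weighted $\ell_2$ bound into a per-coordinate bound; this conversion is where the $\sqrt{\,\cdot\,}$ and the factors $(p-2)$, $(4p-7)$ (from $1/\alphap$ and the second-order remainder) enter.

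\textbf{Main obstacle.} The hardest step is controlling the second-order terms in Step 2 without losing extra factors of $\sqrt n$. That control needs a bootstrap: first show $\|\Delta\|_\infty\le 1/2$, say, using the smallness hypothesis on $\epsilon_{\mathrm{one}}$ (this is exactly where the constraint $\epsilon_{\mathrm{one}}\le\frac{1}{\barbetap n}\min\{\frac{1}{96(p-2)^2(4p-7)^2},\frac1{50}\}$ is used), and then apply Hessian stability on that region. I would try a continuity argument along the segment from $\log v_*$ to $\log v$, checking the constants at the end to match $\epsilon_{\mathrm{est}}=2(p-2)(4p-7)\sqrt{6\barbetap n\epsilon_{\mathrm{one}}}$.
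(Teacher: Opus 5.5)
\textbf{Verdict: genuine gap.} Step~2 relies on a bootstrap that is false.

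Your Step~1 is sound. With $M=(A^\top VA)^{-1}$ one has $\sum_i(a_i^\top Ma_i)^{p/2}=\sum_i\rho_i(v)^{1+\betap}w_i\le(1+\epsilon_{\mathrm{one}})^{\betap}n$, so a rescaled $M$ is feasible for \eqref{eq:convexprog}, and the resulting gap is of order $n\epsilon_{\mathrm{one}}$.

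The failing step is the bootstrap: for the raw point $v=w^{1/(1+\alphap)}$ there is no bound $\|\log v-\log v_*\|_\infty\le 1/2$, however small $\epsilon_{\mathrm{one}}$ is. Start from $w=\sigma_p(A)$ and pick a coordinate $i_0$ of tiny Lewis weight $\delta$ (e.g.\ a row scaled by a small factor). Inflate it to $w_{i_0}=\epsilon_{\mathrm{one}}n/2$.
\begin{itemize}
\item This only increases $H=A^\top W^{1-\frac2p}A$, so every ratio $\sigma_i(W^{\frac12-\frac1p}A)/w_i=w_i^{-2/p}a_i^\top H^{-1}a_i$ can only decrease from its original value $1$.
\item Also $\|w\|_1\le(1+\epsilon_{\mathrm{one}})n$, so $w$ is still a one-sided $\epsilon_{\mathrm{one}}$-approximation.
\item Yet $w_{i_0}/\sigma_p(A)_{i_0}=\epsilon_{\mathrm{one}}n/(2\delta)$ is unbounded.
\end{itemize}
One-sidedness only controls the total $\ell_1$ excess of $w$ over $\sigma(W^{\frac12-\frac1p}A)$. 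That excess can sit on rows of negligible leverage, which is exactly why the postprocessing $w\mapsto\widehat w$ is needed at all. Consequently, two steps of your plan cannot be carried out:
\begin{itemize}
\item the continuity argument on the segment from $\log v_*$ to $\log v$;
\item the linearization of $\log(a_i^\top(A^\top VA)^{-1}a_i)$ in $\Delta$, which, as in Lemma~\ref{lem:contraction-strong-version}, needs $v_*/2\le v\le 3v_*/2$.
\end{itemize}
Separately, there is a sign slip. In $\log v$ coordinates $-\logdet(A^\top VA)$ is concave, with Hessian $P^{(2)}-\mathrm{Diag}(\sigma)$, not its negative. The weighted $\ell_2$ bound you want holds globally anyway, from $\fobj_\primal(v)-\fobj_\primal(v_*)\ge D_{h_{\primal}}(v,v_*)$ (Lemma~\ref{lem:fobj-relative-smoothness}), with no stability argument needed.

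\textbf{The missing idea.} Measure closeness through the matrix $A^\top VA$, where an inflated row only contributes its leverage mass, rather than coordinatewise in $v$. The paper proceeds in three steps.
\begin{itemize}
\item It proves $\|H^{-1/2}(\widehat H-H)H^{-1/2}\|_1\le\|\widehat w-\sigma(w^{\frac12-\frac1p})\|_1\le 6\barbetap n\epsilon_{\mathrm{one}}$ (Lemmas~\ref{lem:mult_diff} and~\ref{lem:distances}).
\item It converts this trace-norm bound into $\|\ln\hat\rho\|_{\widehat\Sigma}\le 4\epsilon_{\mathrm{two}}^{1/2}$ (Lemma~\ref{lem:ln-sigma-norm}). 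This is where the square root actually comes from.
\item It runs the homotopy of Lemma~\ref{lem:integral} from the postprocessed point $\hat v=\widehat w^{1-2/p}$, where $\|\ln\hat\rho\|_\infty\le 2\epsilon_{\mathrm{two}}$ already holds, and integrates the $\ell_\infty$ derivative bound in $t$.
\end{itemize}

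\textbf{A possible repair of your route.} If you want to keep the gap-based approach, replace the $\ell_\infty$ bootstrap by a splitting of coordinates.
\begin{itemize}
\item Coordinates with $v_j/v_{*,j}\notin[\frac12,\frac32]$ have total leverage mass $\sum_j\sigma_j(v_*)|v_j/v_{*,j}-1|$ bounded by a $p$-dependent multiple of the gap, via $D_{h_{\primal}}(v,v_*)$. They can therefore be treated as a small trace-norm perturbation of $A^\top V_*A$.
\item Only the remaining coordinates would then enter Lemma~\ref{lem:contraction-strong-version} and your Cauchy--Schwarz step with $\sum_j\sigma_j\cos^2\gamma_{ij}=1$.
\end{itemize}
As written, however, the argument depends on a false intermediate claim.
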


Applying the postprocessing from Theorem~\ref{thm:one-sided-to-multiplicative} to the final iterate of \Cref{alg:relative-smoothness} yields an $\epsilon$-estimate as reflected in the following \Cref{thm:primal-algorithm-guarantee}.

\begin{restatable}{theorem}{theovec}\label{thm:primal-algorithm-guarantee}
For $p>2$, Algorithm~\ref{alg:relative-smoothness} outputs an $\epsilon$-estimate of $\sigma_p(A)$ in $O(p^2\log(mp^2\alphap/\epsilon))$ iterations. Each iteration computes the leverage scores of $DA$ of some diagonal matrix $D$.
\end{restatable}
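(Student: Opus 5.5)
The plan is to combine \Cref{thm:primal_algorithm_one_sided_guarantee} with \Cref{thm:one-sided-to-multiplicative}, choosing the inner accuracy $\hat\epsilon$ of \Cref{alg:relative-smoothness} so that the resulting one-sided approximation is good enough for the post-processing step.

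\textbf{Step 1: invert the error map of \Cref{thm:one-sided-to-multiplicative}.} Given a target $\epsilon\in(0,1)$, we set $\epsilon_{\mathrm{one}}=\hat\epsilon$. We need two things. First, $\hat\epsilon$ must satisfy the hypothesis
\[
\hat\epsilon\leq \frac{1}{\barbetap n}\min\left\{\frac{1}{96(p-2)^2(4p-7)^2},\frac{1}{50}\right\}.
\]
Second, we need $2(p-2)(4p-7)\sqrt{6\barbetap n\hat\epsilon}\le\epsilon$, which is equivalent to $\hat\epsilon\le \epsilon^2/(24\barbetap n (p-2)^2(4p-7)^2)$. Since $\epsilon<1$, the first branch of the hypothesis is implied by this choice up to a constant. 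So we take
\[
\hat\epsilon \defeq \frac{\epsilon^2}{\barbetap n}\min\left\{\frac{1}{96(p-2)^2(4p-7)^2},\frac1{50}\right\}.
\]
This choice satisfies both requirements.

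\textbf{Step 2: run the algorithm and post-process.} By \Cref{thm:primal_algorithm_one_sided_guarantee}, after $T=O(p^2\log(mp^2\alphap/\hat\epsilon))$ iterations, $w=[v^{(T)}]^{1+\alphap}$ is a one-sided $\hat\epsilon$-approximation. Applying \Cref{thm:one-sided-to-multiplicative}, the vector $\widehat w_i=\sigma_i(w^{\frac12-\frac1p})^{p/2}/w_i^{\betap}$ is then an $\epsilon$-estimate. Computing $\widehat w$ costs one additional leverage score computation, namely of $DA$ with $D=W^{\frac12-\frac1p}$. Each iteration of \Cref{alg:relative-smoothness} uses the leverage scores of $V^{1/2}A$ to form $\rho(v)$, so each step is indeed a single leverage score computation of a diagonally rescaled $A$.

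\textbf{Step 3: the iteration count (the main point to check).} We have
\[
\log(1/\hat\epsilon)=O\bigl(\log(1/\epsilon)+\log n+\log\barbetap+\log p\bigr).
\]
Here $\barbetap\le p$ and $n\le m$. So
\[
\log(mp^2\alphap/\hat\epsilon)=O(\log(mp^2\alphap/\epsilon)),
\]
since the extra $\log n$ and $\log p$ terms are absorbed by the $\log m$ and $\log p^2$ already present, and the factor $2$ from $\epsilon^2$ is absorbed into the constant. The total is therefore $O(p^2\log(mp^2\alphap/\epsilon))+1$ leverage score computations, as claimed.

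The only potential subtlety is that \Cref{alg:relative-smoothness} must be run with this specific $\hat\epsilon$ and must include the final post-processing step. I expect the algorithm to be defined to do exactly this, so the proof reduces to the parameter bookkeeping above.
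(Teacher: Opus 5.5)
Your argument is the paper's argument. You compose \Cref{thm:primal_algorithm_one_sided_guarantee} with \Cref{thm:one-sided-to-multiplicative}, note that the returned vector $\hw_i=(a_i^\top(A^\top V^{(T)}A)^{-1}a_i)^{p/2}$ is exactly the post-processed vector (because $w^{1-2/p}=v^{(T)}$), and absorb the extra logarithmic terms.

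The one place you differ is the choice of $\hat\epsilon$, and there your version is the one that works. \Cref{alg:relative-smoothness} as written sets $\hat\epsilon=\frac{\epsilon}{\barbetap n}\min\{\cdots\}$, which is linear in $\epsilon$. So your expectation that the algorithm ``is defined to do exactly this'' is not literally met. With the paper's linear choice, \Cref{thm:one-sided-to-multiplicative} only yields $\epsilon_{\mathrm{est}}=2(p-2)(4p-7)\sqrt{6\barbetap n\hat\epsilon}=\sqrt{\epsilon}/2$ whenever the first term of the min is active (e.g.\ for all $p\ge 4$). That exceeds $\epsilon$ for every $\epsilon<1/4$, so the paper's one-line conclusion does not follow as written.

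Your scaling $\hat\epsilon\propto\epsilon^2$ gives $\epsilon_{\mathrm{est}}\le\epsilon/2$; equivalently, you can run the algorithm with accuracy parameter $\epsilon^2$. Your Step~3 correctly shows this changes only a constant factor inside the logarithm, so the stated $O(p^2\log(mp^2\alphap/\epsilon))$ bound holds for the algorithm with this rescaled $\hat\epsilon$.
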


Additionally, we establish two new results using our conversion tools. First, in \Cref{thm:approx-min-to-two-sided} we give a postprocessing step that transforms any approximate minimizer $v$ of $\fobj_\primal$ satisfying $\rho_\max(v) \leq 1+\epsilon$ into a two-sided approximation. Compared to the postprocessing step in \cite[Lemma~1]{fazel22}, our approach does not incur a dimension-dependent polynomial factor loss in accuracy. Second, in \Cref{sec:improved-analysis-Lee's-alg} we provide an improved analysis of a variant of \cite[Algorithm~6]{lee16}, obtaining two-sided $\epsilon$-approximations from $O(pn\log m/\epsilon)$ approximate leverage-score computations to accuracy $O(\epsilon/(pn))$.

\begin{restatable}{theorem}{theoapproxtotwo} \label{thm:approx-min-to-two-sided} 
For $p>2$ and $\epsilon\leq\min\big\{\frac1{1000},\frac1{50\baralphap}\big\}$, suppose $v\in\mathbb R^m_{>0}$ satisfies $\fobj_\primal(v)-\fobj_\primal(v_*)\leq \epsilon^3$ and $\rho_{\max}(v)\leq 1+\epsilon$. Define $\widetilde w\in\mathbb R^m_{>0}$ coordinatewise by setting $\widetilde w_i=(\sigma_i(v)/v_i)^{1+1/\alphap}$ if $\rho_i(v)\leq 1-\epsilon$, and $\widetilde w_i=v_i^{1+\alphap}$ otherwise. Then $\widetilde w$ is a two-sided $50\max\{\alphap,1\}\epsilon$-approximation of $\sigma_p(A)$. 
\end{restatable}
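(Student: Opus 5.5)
The plan is to reduce the claim to a spectral closeness statement between $A^\top V A$ and $A^\top \widetilde V A$, where $\widetilde w=\widetilde v^{1+\alphap}$.

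\textbf{Reformulation.} Reading $\sigma_i(v)$ as $\sigma_i(V^{\frac12}A)$, we have $\sigma_i(v)/v_i=a_i^\top(A^\top VA)^{-1}a_i=\rho_i(v)v_i^{\alphap}$. Hence $\widetilde w=\widetilde v^{1+\alphap}$ with
\[
\widetilde v_i=\rho_i(v)^{1/\alphap}v_i \ \text{ on } S\defeq\{i:\rho_i(v)\le 1-\epsilon\},\qquad \widetilde v_i=v_i \ \text{ off } S.
\]
Since $W^{\frac12-\frac1p}=\widetilde V^{\frac12}$, we get $\sigma_i(\widetilde W^{\frac12-\frac1p}A)/\widetilde w_i=\rho_i(\widetilde v)$. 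So it suffices to show $\rho_i(\widetilde v)\in[1-O(\baralphap\epsilon),1+O(\baralphap\epsilon)]$ for all $i$.

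Because $\widetilde v\le v$, we have $A^\top\widetilde VA\preceq A^\top VA$. Everything therefore reduces to a lower bound $A^\top\widetilde VA\succeq(1-\delta)A^\top VA$ with $\delta=O(\max\{1,1/\alphap\}\epsilon)$.

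\textbf{Concluding from the spectral bound.} Given this bound:
\begin{itemize}
\item For $i\notin S$, we have $\rho_i(\widetilde v)=a_i^\top(A^\top\widetilde VA)^{-1}a_i/v_i^{\alphap}\in[\rho_i(v),\rho_i(v)/(1-\delta)]\subseteq[1-\epsilon,(1+\epsilon)/(1-\delta)]$, using $\rho_{\max}(v)\le1+\epsilon$.
\item For $i\in S$, we have $\widetilde v_i^{\alphap}=\rho_i(v)v_i^{\alphap}$, so $\rho_i(\widetilde v)=a_i^\top(A^\top\widetilde VA)^{-1}a_i/\bigl(a_i^\top(A^\top VA)^{-1}a_i\bigr)\in[1,1/(1-\delta)]$.
\end{itemize}
Both cases give a two-sided $O(\delta+\epsilon)$ guarantee. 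Tracking constants should yield $50\max\{\alphap,1\}\epsilon$; the precise dependence on $\alphap$ versus $1/\alphap$ is fixed at the end.

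\textbf{Proving the spectral bound.} Set $H=A^\top VA$. Then
\[
\lambda_{\max}\bigl(H^{-1/2}(A^\top VA-A^\top\widetilde VA)H^{-1/2}\bigr)\le\tr(\cdot)=\sum_{i\in S}\sigma_i(v)\bigl(1-\rho_i(v)^{1/\alphap}\bigr).
\]
On $S$ we use $1-\rho_i^{1/\alphap}\le\min\{1,\max\{1,1/\alphap\}(1-\rho_i)\}$ together with $1-\rho_i\ge\epsilon$. This bounds the sum by $\frac{c\,\max\{1,1/\alphap\}}{\epsilon}\sum_{i\in S}\sigma_i(v)(1-\rho_i)^2$. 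It then remains to prove
\[
\sum_{i\in S}\sigma_i(v)(1-\rho_i(v))^2\le C\,\max\{1,\alphap\}\,\bigl(\fobj_\primal(v)-\fobj_\primal(v_*)\bigr)\le C\max\{1,\alphap\}\epsilon^3 .
\]
Combining the two displays gives $\delta=O(\epsilon^2)$ times the $\alphap$-dependent factors, which is comfortably below $O(\epsilon)$.

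\textbf{Main obstacle: the gap-to-gradient inequality.} I would prove it with an explicit test point $v''$, using $\fobj_\primal(v_*)\le\fobj_\primal(v'')$. Take $v''_i=v_i\bigl(1-c(1-\rho_i)\bigr)$ for $i\in S$, for a small constant $c$ (possibly scaled by $\min\{1,\alphap\}$), and $v''_i=v_i$ otherwise.

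The gradient is $\nabla_i\fobj_\primal(v)=v_i^{\alphap}(1-\rho_i)$. So the first-order decrease is
\[
c\sum_{i\in S}v_i^{1+\alphap}(1-\rho_i)^2=c\sum_{i\in S}\sigma_i(v)(1-\rho_i)^2/\rho_i ,
\]
which is at least $c\sum_{i\in S}\sigma_i(v)(1-\rho_i)^2$ since $\rho_i\le1$.

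The second-order terms must be controlled separately:
\begin{itemize}
\item The regularizer $\frac{1}{1+\alphap}\ones^\top v^{1+\alphap}$ has second-order term at most $O(c^2\alphap)\sum_{i\in S} v_i^{1+\alphap}(1-\rho_i)^2$. This is where the $\max\{1,\alphap\}$ factor enters.
\item For $-\logdet$, write $A^\top V''A=H^{1/2}(I-X)H^{1/2}$ with $X\succeq0$ and $\tr X=c\sum_{i\in S}\sigma_i(1-\rho_i)$. Use $-\log\det(I-X)\le\tr X+\tr(X^2)/(2(1-\|X\|))$ and $\tr X^2\le\sum_{i,j}(\cdot)$, bounded via $\sigma_i\le1$ and Cauchy–Schwarz by $c^2\sum_{i\in S}\sigma_i(1-\rho_i)^2$ after suitably restricting the step.
\end{itemize}
If $\|X\|$ is not small a priori, I would first cap each step at a constant fraction, so that $X\preceq\frac12 I$. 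Choosing $c$ small then makes the quadratic terms at most half the linear decrease. That yields the displayed inequality, and hence the theorem.
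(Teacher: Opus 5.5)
Your argument is correct. Its skeleton matches the paper's. The paper also controls $H^{-1/2}(\widetilde H-H)H^{-1/2}$ through the trace $\sum_{i\in S}\sigma_i(v)\,|1-\rho_i(v)^{1/\alphap}|$ (Lemma~\ref{lem:mult_diff_S}), and your observation that $\widetilde v\le v$, hence $\widetilde H\preceq H$, makes the upper half of that two-sided bound unnecessary. It then does the same case split on $i\in S$ versus $i\notin S$.

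The real difference is how you turn the gap bound $\fobj_\primal(v)-\fobj_\primal(v_*)\le\epsilon^3$ into a bound on $\sum_{i\in S}v_i^{1+\alphap}$.
\begin{itemize}
\item The paper imports Lemma~6 of \cite{fazel22} (Lemma~\ref{lem:fobj_primal_gap_lower}), a global lower bound $\frac{1}{6\baralphap}\sum_{i\in[m]}v_i^{1+\alphap}(\rho_i-1)^2/(\rho_i+1)$ on the gap. It then bounds each summand by $2v_i^{1+\alphap}$ (Lemmas~\ref{lem:function_value_gap_to_small} and~\ref{lem:post-processing-error}).
\item You prove the restricted version you need from scratch with the test point $v''$, and this is dimension-free. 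Since $s_i=c(1-\rho_i)\le c$, you automatically have $X\preceq cI$, so no capping is needed. The term $\tr X=\sum_{i\in S}s_i\sigma_i$ cancels against the regularizer's linear term, leaving exactly $-c\sum_{i\in S}v_i^{1+\alphap}(1-\rho_i)^2$. Because $\sigma_i\le v_i^{1+\alphap}$ on $S$, all quadratic terms are at most $c^2\bigl(\alphap+\frac{1}{2(1-c)}\bigr)\sum_{i\in S}v_i^{1+\alphap}(1-\rho_i)^2$. Taking $c=1/(4\baralphap)$ gives $\sum_{i\in S}v_i^{1+\alphap}(1-\rho_i)^2\le 8\baralphap\bigl(\fobj_\primal(v)-\fobj_\primal(v_*)\bigr)$.
\end{itemize}
The paper's route is more economical, since the cited lemma also covers coordinates with $\rho_i>1$ and is reused elsewhere in the paper. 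Yours is self-contained and only moves $v$ in the easy direction, shrinking coordinates with $\rho_i<1$. It also makes visible that the factor $\baralphap$ comes from the curvature of the regularizer.

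Three slips need fixing when you write it out.
\begin{itemize}
\item[(a)] The step constant must scale like $\min\{1,1/\alphap\}=1/\baralphap$, not $\min\{1,\alphap\}$. The regularizer's quadratic term has relative size $c\alphap$, so it is absorbed only when $c\alphap\lesssim 1$.
\item[(b)] The bound $\tr X^2\le\sum_{i\in S}s_i^2\sigma_i$ does not follow from $\sigma_i\le1$ and the pointwise Cauchy--Schwarz bound $(u_i^\top u_j)^2\le\sigma_i\sigma_j$; that route loses a factor up to $n$. Instead use $s_is_j\le\frac12(s_i^2+s_j^2)$ together with $\sum_{j\in[m]}(u_i^\top u_j)^2=\sigma_i$, where $u_j=(A^\top VA)^{-1/2}v_j^{1/2}a_j$. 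This identity is the one behind Lemma~\ref{lem:sum-of-cosines}.
\item[(c)] The refined branch $1-\rho_i^{1/\alphap}\le\barbetap(1-\rho_i)$ gives $\delta=O(\baralphap\barbetap\epsilon^2)$. The hypothesis does not control $\barbetap\epsilon$ when $p$ is large, so this is not ``comfortably below $O(\epsilon)$''. Use the other branch of your minimum, $1-\rho_i^{1/\alphap}\le 1$. It gives $\delta\le\epsilon^{-2}\sum_{i\in S}v_i^{1+\alphap}(1-\rho_i)^2\le 8\baralphap\epsilon$, hence a final error of at most $\delta+2\epsilon\le 10\baralphap\epsilon$, well within $50\baralphap\epsilon$.
\end{itemize}
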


Though not the main focus of our work, we briefly discuss the runtime of our algorithms due to leverage score computations. Exact leverage scores of $DA$ can be computed by first computing $G = A^\top D^2 A$ in time $O(m n^{\omega-1})$, then computing $H = G^{-1} A^\top D$ in time $O(m n^{\omega-1})$, and then computing the inner product of column $i$ of $H$ with row $i$ of $DA$ in time $O(mn)$ for all $i$. To the best of our knowledge, there is no better runtime to compute the leverage scores to high precision, though faster randomized algorithms for approximately computing leverage scores are known~\cite{spielman2008graph,clarkson2017low}. The conditioning of $D$ affects this procedure through the required bit precision. Hence, we view controlling the range of $D$ as an interesting open problem. We note that in both our algorithms, the diagonal scaling $D^{(t)}=(V^{(t)})^{1/2}$ changes by only a constant multiplicative factor in each coordinate between consecutive iterations, see Remarks~\ref{rem:dual-coordinatewise-scaling-change}, \ref{rem:primal-coordinatewise-scaling-change}.

\subsection{Approach}

Here we provide a brief overview of our approach. First, we briefly sketch the relative smoothness and convexity framework. (Formal definitions are deferred to \cref{sec:relative_smoothness_framework}.) For differentiable functions $f$ and $h$, we say that $f$ is $\mu$-strongly convex and $L$-smooth relative to $h$ when 
\begin{equation}
\label{eq:intro-rel}
\mu D_h(x,y)  \leq f(x)-f(y)-\langle \nabla f(y),x-y\rangle \leq L D_h(x,y) \qquad \forall x,y, 
\end{equation}
where $D_h(x,y) \defeq h(x)-h(y) - \langle \nabla h(y),x-y\rangle$ is the Bregman divergence associated with $h$. For twice-differentiable $f$ and $h$, Lu, Freund, and Nesterov~\cite{lu2018relatively} show that \eqref{eq:intro-rel} is equivalent to $\mu \nabla^2 h(x) \preceq \nabla^2 f(x) \preceq L \nabla^2 h(x)$ for all $x$. They also show, roughly, that when $0<\mu<L$, the gradient descent scheme 
\begin{align*}\label{eqn:relative-smoothness-update}
x^{(t+1)}\leftarrow\argmin_{x\in \mathcal{C}}\left\{f(x^{(t)})+\langle\nabla f(x^{(t)}),x-x^{(t)}\rangle+LD_h(x,x^{(t)})\right\}
\end{align*}
converges linearly at a rate $1-\frac{\mu}{L}$ in function value, and in Bregman distance to the minimizer, e.g., $D_h(x^*,x^{(t)})\leq \frac L\mu\left(1-\frac\mu L\right)^{t}D_h(x^*,x^{(0)})$ where $x^*$ is the minimizer of $f$.

In a nutshell, the update step in \eqref{eq:algintro} arises from a careful extension of this framework applied to the convex optimization problem over positive definite $n$-by-$n$ matrices  
\begin{equation} 
\begin{aligned}
\label{eq:mat-opt-problem}
     &\min_{M \in \Sn_{\succ 0}} \, 
    \fobj_{\dual}(M), \text{ where}\\
    &
    \fobj_{\dual}(M) \defeq -\logdet(M) + \frac{1}{1 + \betap} \sum_{i \in[m]} (a_i^\top M  a_i)^{1+\betap}
    \text{ for }
    \betap \defeq \frac{p - 2}{2}\,. 
    \end{aligned}
\end{equation}
This problem is essentially dual to \eqref{eq:vec-opt-problem}; see Lemma~\ref{lem:duality} and note that $\betap = \alphap^{-1}$. The minimizer $M_{*}$ of \eqref{eq:mat-opt-problem} satisfies $M_{*}^{-1} = A^\top \overline V A$, where $\overline V\defeq \sigma_p(A)^{\frac1{1+\alphap}}$, and thus encodes the Lewis weights. The update from \eqref{eq:algintro} corresponds to the gradient descent method applied the above objective where $M^{-1} = A^\top V A$. It is easy to see that $\fobj_\dual$ is $1$-strongly convex relative to $h_\dual(M) = -\logdet(M)$. Moreover, the Bregman divergence associated to $h_\dual$ roughly measures spectral closeness between matrices: $D_{h_\dual}(M_{*},M) \leq \eps^2/4$ implies $(1-\eps) M_{*} \preceq M \preceq (1+\eps) M_{*}$ (Lemma~\ref{lem:bregman_divergence_to_spectral_difference}), which shows that near-optimal points of $\fobj_\dual$ provide $\eps$-estimates of the Lewis weights.

The only remaining challenge is to establish the relative smoothness of $\fobj_\dual$ with respect to $h_\dual$. Unfortunately, a sufficient global bound is unknown, even for sub-level sets. Instead, we introduce \textit{local relative smoothness between iterates}, a straightforward extension of relative smoothness that just holds between the iterates. The idea of using different local (or adaptive) notions of smoothness has been used before, e.g., in \cite{sidford2018coordinate,malitsky2020adaptive,latafat2025}. In particular, Li et al.~\cite{li2018general} developed a ball-local version of relative smoothness, requiring the relative-smoothness inequality to hold uniformly within a neighborhood of each point, while Godeme et al.~\cite{godeme2023provable} used local relative strong convexity on a prescribed neighborhood, typically around a solution.

We prove in \cref{sec:relative_smoothness_framework} that such local relative smoothness suffices for linear convergence. In particular, we show that when $M = (A^\top V A)^{-1}$ for some $V = \mDiag(v)$ with $v \in \R^m_{>0}$, then
\[
\nabla^2\fobj_\dual(M ) \preceq \left(1+\betap \Phi_\max(v)\right) \nabla^2 h_\dual(M),
\]
where 
\begin{equation*}
    \Phi_\max(v) \defeq \max_{i \in [m]} \, \Phi_i(v)
    \text{ where }
    \Phi_i(v) \defeq \frac{(a_i^\top (A^\top V A)^{-1} a_i)^{\betap}}{v_i}
    =
    \rho_i(v)^{1/\alphap}
    \text{ for all } i \in [m]\,.
\end{equation*}
Applying this gradient descent scheme to $\fobj_\dual$ and $h_\dual$ results in the iterates \eqref{eq:algintro}, when written in terms of $v$. To establish convergence, we set $L = 32p\,\max\{\betap,1\}$ and show that $\Phi_\max(v)$ is uniformly bounded in the segment between each pair of iterates $v^{(t)}$ and $v^{(t+1)}$. 

As discussed, to further showcase the approach, we then apply the same local relative smoothness framework to the potential $\fobj_\primal$ used in \cite{lee2019solving,fazel22}. In this case, it is easy to see that $\fobj_\primal$ is $1$-strongly convex relative to $h_\primal(v) = \frac{1}{1+\alphap} \ones^\top v^{1+\alphap}$. We show that for any $v \in \R^m_{>0}$, 
\[
\nabla^2 \fobj_\primal(v) \preceq \left(1 + \alphap^{-1} \rho_{\max}(v) \right) \nabla^2 h_\primal(v).
\]
In a similar fashion as for \eqref{eq:algintro}, we establish \textit{local} relative smoothness between iterates when $L=32p/\alphap$, thus establishing convergence in function value. As discussed earlier, with more work we are able to use this to obtain approximations of Lewis weights. One component of this reduction is an efficient conversion of one-sided approximations into two-sided approximations (see \cref{sec:one_sided_to_others}). We do so by defining a transformation $w \mapsto \widehat w$ such that $\|\rho(\widehat w^{\frac12-\frac1p})-1\|_\infty$ is controlled by the spectral approximation quality of $A^\top W^{\frac12-\frac1p}A$ by $A^\top \widehat W^{\frac12-\frac1p} A$, which can in turn be bounded by $\|\widehat w - \sigma(w^{\frac{1}{2}-\frac1p})\|_1$ and is small whenever $w$ is a one-sided approximation.

\subsection{Preliminaries: paper organization and notation}

In the remainder of this paper we extend the relative smoothness framework (\cref{sec:relative_smoothness_framework}), apply it to $\fobj_\dual$ (\cref{sec:matrix_function_definition}) and $\fobj_\primal$ (\cref{sec:applying_rs_to_primal}), and establish conversion results between the notions of approximation (\cref{sec:one_sided_to_others}). We conclude this introduction by providing our notation.

\paragraph{General notation.}
We use lowercase letters for vectors and capital letters for matrices. When the context is clear, a capital letter additionally denotes the diagonal matrix formed from its lowercase counterpart, e.g., $W=\mDiag(w)$. The all-zero and all-one vectors of appropriate dimension are denoted by $\vzero,\ones$, respectively. For $u,v\in\R^n$, we write $u\le v$ to denote the entrywise inequality $u_i\le v_i$ for all $i\in[n]$, and write $u\approx_\epsilon v$ if $(1-\epsilon)u_i\leq v_i\leq (1+\epsilon)u_i$ for all $i\in[n]$. We use $\mDiag(u)$ to denote the diagonal matrix with entries $\mDiag(u)_{ii}=u_i$. For any matrices $A, B$, we write $A \succeq B$, or equivalently, $B \preceq A$ when $A-B$ is positive semidefinite. Moreover, we define $\langle A,B\rangle\coloneqq \Tr(A^\top B)$.
We write $\Sn$ for the space of symmetric $n$-by-$n$ matrices. For a matrix $A \in \Sn$ we use $\norm{A}$ and $\norm{A}_1$ to denote its spectral norm and Schatten $1$-norm respectively. For any convex set $\mathcal{C}$, we use $\intC$ to denote its interior. We use $\otimes$ to denote the Kronecker product.

\paragraph{Lewis weight notation.}
For a matrix $A\in\R^{m\times n}$, we write $v_{*}(A)\coloneqq\sigma_p(A)^{\frac{1}{1+\alphap}}$ when $p$ is clear from context, and write $v_{*}$ when the underlying matrix is clear from the context. Denote $V_*=\diag(v_*)$. For any $p>2$, we denote $\alphap=\frac{2}{p-2}$, $\betap=1/\alphap$, $\baralphap=\max\{1,\alphap\}$, and $\barbetap=\max\{1,\betap\}$.

\section{Locally relatively smooth gradient descent framework}\label{sec:relative_smoothness_framework}

In this section, we present a straightforward local extension of the relative smoothness framework introduced in~\cite{lu2018relatively}, where the relative smoothness condition only holds locally, along linear combinations of selected pairs of points. This differs from the local relative smoothness framework of~\cite{li2018general}, in which relative smoothness holds when restricted to a ball around any given point.

\begin{definition}[Local relative smoothness]  
Let $f,h:\mathcal C\to\R$ be differentiable functions on a convex set $\mathcal C$, and let $x,y\in\intC$. We say that $f$ is $L$-smooth relative to $h$ between $x,y$ if we have
\[
f((1-\lambda)x+\lambda y)\leq f(x)+(1-\lambda)\langle\nabla f(x),y-x\rangle+LD_h((1-\lambda)x+\lambda y,x),\quad\forall \lambda\in[0,1].
\]
If $f$ and $h$ are twice differentiable, this condition is equivalent to 
\[
\nabla^2 f((1-\lambda)x+\lambda y) \preceq L\nabla^2 h((1-\lambda)x+\lambda y),\quad\forall \lambda\in[0,1].
\]
\end{definition}

We show that for any objective function $f$ defined on $\mathcal{C}$ that is $\mu$-strongly convex relative to some known convex function $h$, repeatedly performing the following update
\begin{align}\label{eqn:relative-smoothness-update}
x^{(t+1)}\leftarrow\argmin_{x\in \mathcal{C}}\left\{f(x^{(t)})+\langle\nabla f(x^{(t)}),x-x^{(t)}\rangle+LD_h(x,x^{(t)})\right\}
\end{align}
converges to the minimizer of $f$, given that $f$ is $L$-smooth relative to $h$ between each $x^{(t)}$ and $x^{(t+1)}$.

\begin{proposition}\label{prop:primal-gradient-scheme-convergence-local}
Let $f,h:\mathcal C\to\R$ be differentiable functions on a convex set $\mathcal C$ where $f$ is $\mu$-strongly convex relative to $h$ for some $\mu \geq 0$, and $h$ is convex. If in the updating scheme~(\ref{eqn:relative-smoothness-update})
there exists $L>0$ such that $f$ is $L$-smooth relative to $h$ between $x^{(t)}$ and $x^{(t+1)}$ for every iteration $t$, then
\begin{align*}
D_h(x,x^{(t)})+\frac1L\sum_{k\in[t]}\left(1-\frac\mu L\right)^{t-k}(f(x^{(k)})-f(x))\leq \left(1-\frac\mu L\right)^{t}D_h(x,x^{(0)}),\quad\forall x\in \mathcal C,t\in \mathbb{N}^*.
\end{align*}
Consequently, for $x^*\coloneqq\argmin_{x\in\mathcal C}f(x)$ and when $\mu>0$,
\[
D_h(x^*,x^{(t)})\leq \left(1-\frac\mu L\right)^{t}D_h(x^*,x^{(0)})
\ \text{  and  } \ 
f
(x^{(t)})-f(x^*)\leq \frac{\mu(1-\mu/L)^t}{1-(1-\mu/L)^t}\cdot D_h(x^*,x^{(0)}).
\]

\end{proposition}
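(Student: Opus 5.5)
The plan is to follow the classical Lu--Freund--Nesterov argument. The first step is a one-step inequality: for every $x\in\mathcal C$,
\[
D_h(x,x^{(t+1)})+\tfrac1L\bigl(f(x^{(t+1)})-f(x)\bigr)\le \Bigl(1-\tfrac\mu L\Bigr)D_h(x,x^{(t)}).
\]
The proposition then follows by unrolling. The only place local smoothness enters is the endpoint $\lambda=1$ between $x^{(t)}$ and $x^{(t+1)}$, i.e.
\[
f(x^{(t+1)})\le f(x^{(t)})+\langle\nabla f(x^{(t)}),x^{(t+1)}-x^{(t)}\rangle+LD_h(x^{(t+1)},x^{(t)}).
\]
I would obtain this by integrating the Hessian form $\nabla^2 f\preceq L\nabla^2 h$ twice along the segment. (The displayed first-order form should be read with $\lambda$ rather than $(1-\lambda)$ in front of the inner product.)

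\textbf{Three-point property.} Let $\phi(x)=\langle\nabla f(x^{(t)}),x\rangle+LD_h(x,x^{(t)})$, so that $x^{(t+1)}$ minimizes $\phi$ over $\mathcal C$. I will assume the argmin exists and lies in $\intC$, which is implicit in the scheme.
\begin{itemize}
\item First-order optimality gives $\langle \nabla f(x^{(t)})+L(\nabla h(x^{(t+1)})-\nabla h(x^{(t)})),\,x-x^{(t+1)}\rangle\ge0$ for all $x\in\mathcal C$.
\item The Bregman three-point identity $D_h(x,x^{(t)})-D_h(x,x^{(t+1)})-D_h(x^{(t+1)},x^{(t)})=\langle\nabla h(x^{(t+1)})-\nabla h(x^{(t)}),x-x^{(t+1)}\rangle$ converts this into
\[
\langle\nabla f(x^{(t)}),x^{(t+1)}\rangle+LD_h(x^{(t+1)},x^{(t)})+LD_h(x,x^{(t+1)})\le \langle\nabla f(x^{(t)}),x\rangle+LD_h(x,x^{(t)}).
\]
\end{itemize}

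\textbf{Chaining the bounds.} Start from the smoothness upper bound on $f(x^{(t+1)})$, apply the three-point inequality, and finish with relative strong convexity, $f(x^{(t)})+\langle\nabla f(x^{(t)}),x-x^{(t)}\rangle\le f(x)-\mu D_h(x,x^{(t)})$. This yields
\[
f(x^{(t+1)})\le f(x)+(L-\mu)D_h(x,x^{(t)})-LD_h(x,x^{(t+1)}),
\]
which is the one-step inequality after dividing by $L$. Multiplying the step-$k$ inequality by $(1-\mu/L)^{t-k}$ and summing over $k$ telescopes the $D_h$ terms and gives exactly the claimed bound with the sum over $k\in[t]$. The unrolling needs $1-\mu/L\ge0$. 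This holds because relative strong convexity and local smoothness between $x^{(t)}$ and $x^{(t+1)}$ give $\mu D_h\le LD_h$ on that pair, assuming $h$ is not affine there; I would note this as a harmless side condition.

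\textbf{Consequences.} Take $x=x^*$ in the summed bound. Each $f(x^{(k)})-f(x^*)\ge0$, so dropping the sum gives the $D_h$ bound. For the function-value bound, take $x=x^{(t)}$ in the one-step inequality; since $D_h\ge0$ by convexity of $h$, this shows $f(x^{(t+1)})\le f(x^{(t)})$, so the iterates are monotone. Hence every summand is at least $f(x^{(t)})-f(x^*)$, and
\[
\tfrac1L\sum_{k=1}^t(1-\mu/L)^{t-k}=\frac{1-(1-\mu/L)^t}{\mu}.
\]
Dropping $D_h(x^*,x^{(t)})\ge0$ and dividing gives the stated bound.

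The only delicate step is the three-point optimality condition on a constrained $\mathcal C$ (existence and interiority of the argmin, and differentiability of $h$ there). The rest is bookkeeping.
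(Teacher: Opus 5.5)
Your proposal is correct and follows essentially the same route as the paper: the endpoint smoothness bound, the three-point inequality, and relative strong convexity are chained into $f(x^{(t+1)})\le f(x)+(L-\mu)D_h(x,x^{(t)})-LD_h(x,x^{(t+1)})$, which is then unrolled, and monotonicity at $x=x^{(t)}$ gives the function-value bound. The only differences are minor. You derive the three-point inequality from first-order optimality plus the Bregman identity instead of citing Tseng's lemma. You also make explicit two points the paper leaves implicit, both of which are right: the definition's inner-product coefficient should read $\lambda$ rather than $(1-\lambda)$, and the unrolling needs $1-\mu/L\ge 0$ (which holds in the paper's application, $\mu=1<L$).
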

The proof of Proposition~\ref{prop:primal-gradient-scheme-convergence-local} is inspired by Theorem 3.1 of \cite{lu2018relatively}. Our contribution is to extend their analysis to the setting in which relative smoothness holds only locally rather than globally, and to measure convergence using both the Bregman distance to the minimizer as well as the function value gap. The key step for proving Proposition~\ref{prop:primal-gradient-scheme-convergence-local} is to establish the following lemma.

\begin{lemma}\label{lem:relative-smoothness-iteration-decrease}
    In the setting of Proposition~\ref{prop:primal-gradient-scheme-convergence-local}, for each iteration $t$ and any $x\in\mathcal C$ we have
    \begin{align*}
    D_h(x,x^{(t+1)})\leq \left(1-\frac\mu L\right)D_h(x,x^{(t)})+\frac1L(f(x)-f(x^{(t+1)})).
\end{align*}
\end{lemma}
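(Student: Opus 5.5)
We follow the standard three-point argument of Lu, Freund, and Nesterov, replacing global relative smoothness by its local version between $x^{(t)}$ and $x^{(t+1)}$. There are three ingredients. First, optimality of $x^{(t+1)}$ for the subproblem in~\eqref{eqn:relative-smoothness-update} gives a three-point inequality. Second, local relative smoothness applied at the endpoint $\lambda=1$ gives an upper bound on $f(x^{(t+1)})$. Third, relative strong convexity gives a lower bound on $f(x)$.

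\emph{Step 1 (three-point property).} The function $\phi(z)\coloneqq \langle\nabla f(x^{(t)}),z-x^{(t)}\rangle+LD_h(z,x^{(t)})$ is convex in $z$, because $h$ is convex and the other terms are affine. Since $x^{(t+1)}$ minimizes $\phi$ over $\mathcal C$, the first-order condition gives $\langle \nabla f(x^{(t)})+L(\nabla h(x^{(t+1)})-\nabla h(x^{(t)})),\,x-x^{(t+1)}\rangle\ge 0$ for all $x\in\mathcal C$. We then use the Bregman three-point identity
\[
D_h(x,x^{(t)})-D_h(x,x^{(t+1)})-D_h(x^{(t+1)},x^{(t)})=\langle \nabla h(x^{(t+1)})-\nabla h(x^{(t)}),x-x^{(t+1)}\rangle .
\]
Combining the two yields
\[
\langle\nabla f(x^{(t)}),x^{(t+1)}-x^{(t)}\rangle+LD_h(x^{(t+1)},x^{(t)})\le \langle\nabla f(x^{(t)}),x-x^{(t)}\rangle+LD_h(x,x^{(t)})-LD_h(x,x^{(t+1)}).
\]
We assume that $x^{(t+1)}$ lies in $\intC$, where $h$ is differentiable, as is implicit in the scheme.

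\emph{Step 2 (descent via local smoothness).} We read the definition of local relative smoothness between $x^{(t)}$ and $x^{(t+1)}$ as the twice-differentiable formulation: the Hessian inequality holds along the whole segment. Integrating it along the segment gives the usual upper model at the endpoint,
\[
f(x^{(t+1)})\le f(x^{(t)})+\langle\nabla f(x^{(t)}),x^{(t+1)}-x^{(t)}\rangle+LD_h(x^{(t+1)},x^{(t)}).
\]
The displayed first-order formulation of the definition has a factor $(1-\lambda)$ rather than $\lambda$ in front of the gradient term, so we instead use the equivalent integrated Hessian form. This is the only place where locality enters. We only need the smoothness inequality between consecutive iterates, not between $x$ and the iterates, so the local hypothesis suffices.

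\emph{Step 3 (strong convexity and assembling).} Relative $\mu$-strong convexity gives
\[
f(x^{(t)})+\langle\nabla f(x^{(t)}),x-x^{(t)}\rangle\le f(x)-\mu D_h(x,x^{(t)}).
\]
We chain Step 2, then Step 1, then this bound:
\[
f(x^{(t+1)})\le f(x)-\mu D_h(x,x^{(t)})+LD_h(x,x^{(t)})-LD_h(x,x^{(t+1)}).
\]
Rearranging and dividing by $L>0$ gives exactly the claimed inequality, $D_h(x,x^{(t+1)})\le(1-\mu/L)D_h(x,x^{(t)})+\frac1L(f(x)-f(x^{(t+1)}))$.

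The main obstacle is Step 1. It requires justifying the first-order optimality condition for a constrained minimizer, which means ensuring that the argmin exists and lies where $h$ is differentiable. We also need $h$ convex so that the subproblem is convex. Everything after Step 1 is direct algebra.
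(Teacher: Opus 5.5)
Your proof is correct and matches the paper's argument: you use local smoothness at $\lambda=1$ and the three-point property, then relative strong convexity and rearrangement, and the only real difference is that you derive the three-point property inline from first-order optimality and the Bregman three-point identity rather than citing Tseng. You route Step 2 through the Hessian form, which needs twice-differentiability, whereas the paper uses the intended first-order inequality (with $\lambda$ in place of the typo $1-\lambda$) at $\lambda=1$, which is exactly what you need and holds under mere differentiability.
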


To prove \cref{lem:relative-smoothness-iteration-decrease} we use the well-known three-point property of Bregman divergences.
\begin{lemma}[Three-Point Property, \cite{tseng2008accelerated}]\label{lem:three-point-property}
Let $\varphi\colon\mathcal C\to\R$ be convex. Given $z\in\R^d$, let $z^+ := \arg\min_{x\in \mathcal C} \{ \varphi(x) + D_h(x,z) \}$. Then,
\[
\varphi(x) + D_h(x,z) \;\ge\; \varphi(z^+) + D_h(z^+,z) + D_h(x,z^+),
\quad \forall x \in \mathcal C.
\]
\end{lemma}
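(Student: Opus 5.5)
The plan is to prove the three-point property directly from the optimality of $z^+$ and the convexity of $\varphi$. I will use a one-dimensional perturbation argument along a segment. This avoids subgradient calculus, so $\varphi$ only needs to be convex, and the only regularity required is differentiability of $h$ at $z$ and $z^+$ (which is implicit in writing $D_h(\cdot,z)$ and $D_h(\cdot,z^+)$). The argument has two ingredients: a variational inequality characterizing $z^+$, and the purely algebraic three-point identity for Bregman divergences.

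\emph{Step 1 (variational inequality).} Fix $x\in\mathcal C$ and, for $\lambda\in(0,1]$, set $x_\lambda\defeq z^++\lambda(x-z^+)$, which lies in $\mathcal C$ by convexity. Since $z^+$ minimizes $\varphi(\cdot)+D_h(\cdot,z)$ over $\mathcal C$, we have $\varphi(x_\lambda)+D_h(x_\lambda,z)\ge \varphi(z^+)+D_h(z^+,z)$. Convexity of $\varphi$ gives $\varphi(x_\lambda)\le(1-\lambda)\varphi(z^+)+\lambda\varphi(x)$. Combining the two and dividing by $\lambda$ yields
\[
\varphi(x)-\varphi(z^+)+\frac{D_h(x_\lambda,z)-D_h(z^+,z)}{\lambda}\ge 0 .
\]
Now let $\lambda\downarrow 0$. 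Because $D_h(\cdot,z)=h(\cdot)-h(z)-\langle\nabla h(z),\cdot-z\rangle$ is differentiable at $z^+$ with gradient $\nabla h(z^+)-\nabla h(z)$, the difference quotient converges to $\langle \nabla h(z^+)-\nabla h(z),x-z^+\rangle$. Therefore
\[
\varphi(x)\ge\varphi(z^+)+\langle \nabla h(z)-\nabla h(z^+),x-z^+\rangle .
\]

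\emph{Step 2 (three-point identity).} Expand the three divergences using the definition $D_h(a,b)=h(a)-h(b)-\langle\nabla h(b),a-b\rangle$. All the $h$-values cancel, leaving the exact identity
\[
D_h(x,z)-D_h(z^+,z)-D_h(x,z^+)=\langle \nabla h(z^+)-\nabla h(z),x-z^+\rangle .
\]

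\emph{Step 3 (combine).} Add the inequality from Step 1 to the identity from Step 2; the inner-product terms cancel and we obtain $\varphi(x)+D_h(x,z)\ge \varphi(z^+)+D_h(z^+,z)+D_h(x,z^+)$, which is the claim.

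The only delicate point is the limit in Step 1, i.e.\ the existence of the one-sided directional derivative of $D_h(\cdot,z)$ at $z^+$ along $x-z^+$. If $z^+$ lies on the boundary of $\mathcal C$, I would assume, as is implicit in the statement, that $h$ is differentiable at $z^+$ in the sense that the first-order expansion holds along feasible directions. In the paper's applications $z^+$ is an interior point ($x^{(t+1)}\in\intC$), so this is automatic. Existence of the minimizer $z^+$ is taken as given by the statement.
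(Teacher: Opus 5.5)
Your proof is correct. The paper gives no proof of this lemma; it only cites Tseng. Your argument is the standard one from that literature: the variational inequality $\varphi(x)\ge\varphi(z^+)+\langle\nabla h(z)-\nabla h(z^+),x-z^+\rangle$ at the minimizer, added to the exact three-point identity for Bregman divergences. The one difference is that you derive that inequality from a one-sided difference quotient along $[z^+,x]$ instead of quoting $0\in\partial\varphi(z^+)+\nabla h(z^+)-\nabla h(z)+N_{\mathcal C}(z^+)$. This avoids any subdifferential sum rule, shows that convexity of $h$ is never used, and covers the paper's use with $LD_h=D_{Lh}$ verbatim.
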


\begin{proof}[Proof of Lemma~\ref{lem:relative-smoothness-iteration-decrease}]
By the $L$-locally relative smoothness condition, for any iteration $t$, we have
\begin{align*}
    f(x^{(t+1)})
    \leq f(x^{(t)})+\langle\nabla f(x^{(t)}),x^{(t+1)}-x^{(t)}\rangle+LD_h(x^{(t+1)},x^{(t)}).
\end{align*}
Applying Lemma~\ref{lem:three-point-property} with $\varphi(x)\coloneqq\langle\nabla f(x^{(t)}),x-x^{(t)}\rangle$ and using the fact that $x^{(t+1)}=\argmin_{x\in\mathcal C}\big\{\varphi(x)$ $ +LD_h(x,x^{(t)})\big\}$, we obtain that for any $x\in\mathcal C$,
\begin{align*}
\langle\nabla f(x^{(t)}),x^{(t+1)}-x^{(t)}\rangle
&\leq \langle\nabla f(x^{(t)}),x-x^{(t)}\rangle+LD_h(x,x^{(t)})\\
&\qquad\qquad-LD_h(x^{(t+1)},x^{(t)})-LD_h(x,x^{(t+1)}).
\end{align*}
Therefore,
\begin{align}
    f(x^{(t+1)})
    &\leq f(x^{(t)})+\langle\nabla f(x^{(t)}),x-x^{(t)}\rangle+LD_h(x,x^{(t)})-LD_h(x,x^{(t+1)})\nonumber\\
    &\leq f(x)+(L-\mu) D_h(x,x^{(t)})-LD_h(x,x^{(t+1)})
    \label{eq:monotonicity-reason},
\end{align}
or equivalently,
\begin{align*} 
    D_h(x,x^{(t+1)})\leq \left(1-\frac\mu L\right)D_h(x,x^{(t)})+\frac1L(f(x)-f(x^{(t+1)})).
\end{align*}
\end{proof}

\begin{proof}[Proof of Proposition~\ref{prop:primal-gradient-scheme-convergence-local}.]
By Lemma~\ref{lem:relative-smoothness-iteration-decrease}, for each iteration $t$ and any $x\in\mathcal C$ we have
\begin{align*} 
    D_h(x,x^{(t+1)})\leq \left(1-\frac\mu L\right)D_h(x,x^{(t)})+\frac1L(f(x)-f(x^{(t+1)})).
\end{align*}
For iteration $t$ this yields 
\begin{align*}
    D_h(x,x^{(t)})\leq \left(1-\frac\mu L\right)^{t}D_h(x,x^{(0)})+\frac1L\sum_{k\in[t]}\left(1-\frac\mu L\right)^{t-k}(f(x)-f(x^{(k)})).
\end{align*}
Substituting $x=x^*$ into the inequality above gives
\begin{align*}
    D_h(x^*,x^{(t)})\leq \left(1-\frac\mu L\right)^{t}D_h(x^*,x^{(0)}),
\end{align*}
and
\begin{align*}
\sum_{k\in[t]}\left(1-\frac\mu L\right)^{t-k}(f(x^{(k)})-f(x^*))
&\leq L\left(1-\frac\mu L\right)^tD_h(x^*,x^{(0)})-D_h(x^*,x^{(t)})\\
&\leq L\left(1-\frac\mu L\right)^tD_h(x^*,x^{(0)}),
\end{align*}
where we used that $f(x^*)-f(x^{(k)})\leq 0$ for all $k$, and that $D_h(x^*,x^{(t)})\geq 0$ since $h$ is convex. Substituting $x=x^{(t)}$ in \eqref{eq:monotonicity-reason}, we obtain $f(x^{(t+1)})\leq f(x^{(t)})-LD_h(x,x^{(t+1)})\leq f(x^{(t)})$, which gives
\begin{align*}
    \sum_{k\in[t]}\left(1-\frac\mu L\right)^{t-k}(f(x^{(k)})-f(x^*))
    &\geq \sum_{k\in[t]}\left(1-\frac\mu L\right)^{t-k}(f(x^{(t)})-f(x^*))\\
    &=\frac{L}{\mu}\Big(1-\Big(1-\frac\mu L\Big)^t\Big)(f(x^{(t)})-f(x^*)),
\end{align*}
and therefore
\[
f(x^{(t)})-f(x^*)\leq \frac{\mu(1-\mu/L)^t}{1-(1-\mu/L)^t}\cdot D_h(x^*,x^{(0)}).
\]
\end{proof}

\section{Algorithm based on a matrix potential \texorpdfstring{$\fobj_\dual$}{Fmat}}\label{sec:matrix_function_definition}

In this section, we present an algorithm that computes $\epsilon$-estimates of Lewis weights by approximately solving \eqref{eq:mat-opt-problem} via the locally relatively smooth gradient descent framework in Section~\ref{sec:relative_smoothness_framework}. Throughout the section, we let $h_\dual(M)\coloneqq-\logdet(M)$ for all $M\in\mathcal S^n_{\succ 0}$, and denote $\bar\epsilon=\frac{\epsilon}{2(1+\alphap)}$.

\subsection{Properties of \texorpdfstring{$\fobj_\dual$}{Fmat}}

Here we present several properties of $\fobj_\dual$, including its duality with $\fobj_\vec$, explicit formulas for its gradient, Hessian and optimum, and its relative strong convexity and local smoothness properties with respect to $h_\dual$.

\begin{lemma}\label{lem:duality}
The optimization problems \eqref{eq:vec-opt-problem} and \eqref{eq:mat-opt-problem} are dual to each other in the following sense:
\begin{align*}
    \underset{M\succ 0}{\min}\,\fobj_\dual(M)
    =n-\underset{w>0}{\min}\,\fobj_\primal(w)
\end{align*}
\end{lemma}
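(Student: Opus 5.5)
The plan is to prove the identity by computing both optimal values explicitly. I will exhibit a minimizer of each problem built from the same vector $v_*=\sigma_p(A)^{\frac{1}{1+\alphap}}$ and then compare the values. No general duality theorem is needed: both objectives are convex, so a stationary point is a global minimizer.

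\emph{Step 1 (primal optimum).} $\fobj_\primal$ is convex on $\R^m_{>0}$: $-\logdet(A^\top VA)$ is convex in $v$ because $V\mapsto A^\top VA$ is linear and $-\logdet$ is convex, and $v\mapsto v_i^{1+\alphap}$ is convex. Its gradient is
\[
[\nabla\fobj_\primal(v)]_i=-a_i^\top(A^\top VA)^{-1}a_i+v_i^{\alphap}.
\]
Take $v_*$ with $V_*=\mDiag(v_*)$, so that $w=v_*^{1+\alphap}$ is the Lewis weight vector. The fixed-point definition of Lewis weights gives $\frac12-\frac1p=\frac{1}{2(1+\alphap)}$, hence $W^{\frac12-\frac1p}=V_*^{1/2}$ and $w_i=v_{*,i}\,a_i^\top(A^\top V_*A)^{-1}a_i$. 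Dividing by $v_{*,i}$ yields $v_{*,i}^{\alphap}=a_i^\top(A^\top V_*A)^{-1}a_i$, so the gradient vanishes and $v_*$ is a minimizer. Moreover,
\[
\ones^\top v_*^{1+\alphap}=\sum_i v_{*,i}\,a_i^\top(A^\top V_*A)^{-1}a_i=\tr\big((A^\top V_*A)^{-1}A^\top V_*A\big)=n.
\]
Therefore $\min\fobj_\primal=-\logdet(A^\top V_*A)+\frac{n}{1+\alphap}$.

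\emph{Step 2 (dual optimum).} $\fobj_\dual$ is convex on $\Sn_{\succ0}$: $-\logdet$ is convex, and $(a_i^\top Ma_i)^{1+\betap}$ is a convex increasing power of a nonnegative linear function. Its gradient is
\[
\nabla\fobj_\dual(M)=-M^{-1}+\sum_i (a_i^\top Ma_i)^{\betap}a_ia_i^\top.
\]
Set $M_*=(A^\top V_*A)^{-1}$. By Step 1, $a_i^\top M_*a_i=v_{*,i}^{\alphap}$. Since $\alphap\betap=1$, this gives $(a_i^\top M_*a_i)^{\betap}=v_{*,i}$, so the sum equals $A^\top V_*A=M_*^{-1}$ and the gradient vanishes. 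The same computation gives $\sum_i(a_i^\top M_*a_i)^{1+\betap}=\sum_i v_{*,i}^{\alphap+1}=n$. Therefore $\min\fobj_\dual=\logdet(A^\top V_*A)+\frac{n}{1+\betap}$.

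\emph{Step 3 (combine).} Adding the two optimal values, the $\logdet$ terms cancel, leaving
\[
\frac{n}{1+\alphap}+\frac{n}{1+\betap}=\frac{n}{1+\alphap}+\frac{\alphap n}{\alpha_p+1}=n,
\]
where the middle equality uses $\betap=1/\alphap$. This is exactly the claim. The only point needing care is the existence of $v_*$ with the fixed-point property, which is supplied by the cited existence and uniqueness of Lewis weights for non-degenerate $A$. An alternative route would be weak duality via the Lagrangian, but the explicit computation above avoids it entirely.
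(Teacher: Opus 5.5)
Your proof is correct, but it takes a different route from the paper.

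\textbf{The paper's route.} It argues by Lagrangian duality. It writes $-\logdet(A^\top VA)=\max_{M\succ 0}[\logdet M+n-\Tr(MA^\top VA)]$ and swaps $\min_v$ with $\max_M$ using Sion's minimax theorem, after restricting both to compact sets. It then evaluates the inner minimum over $v$ coordinatewise. This is the convex conjugate of $x\mapsto\frac{1}{1+\alphap}x^{1+\alphap}$, and it produces exactly $-\fobj_\dual(M)$.

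\textbf{Your route.} You exhibit both optimizers directly from the Lewis-weight fixed point, namely $v_*$ and $M_*=(A^\top V_*A)^{-1}$. You check that each is stationary, which gives global optimality by convexity. You evaluate both values using $\ones^\top v_*^{1+\alphap}=n$. Finally, the $\logdet$ terms cancel and $\frac{1}{1+\alphap}+\frac{1}{1+\betap}=1$.

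\textbf{What your approach buys.} It is shorter and entirely elementary. It avoids the compactness bookkeeping needed to invoke Sion; in particular, the paper's claim that $\fobj_\primal\to\infty$ whenever any single $v_i\to 0$ needs more care than is given there. It also identifies the minimizer $M_*$ along the way, which the paper establishes separately in Lemma~\ref{lem:fobj_dual_properties}.

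\textbf{What it costs.} It takes the existence of Lewis weights as an external input, via the cited existence and uniqueness results. The paper's argument also needs this, in the form of a minimizer of $\fobj_\primal$. It also requires guessing the optimizers in advance.

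\textbf{What the paper's approach buys.} The minimax derivation explains where $\fobj_\dual$ comes from, as the Lagrange dual of $\fobj_\primal$. It also gives, for free, the weak-duality inequality $\fobj_\primal(v)+\fobj_\dual(M)\ge n$ for every pair $(v,M)$. That inequality certifies suboptimality gaps without knowing $v_*$, which your value-matching argument does not provide.
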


\begin{proof}
Observe that
\begin{align}
\label{eq:minimax-form}
\underset{v>0}{\min}\,\fobj_\primal(v)
&=\underset{v>0}{\min}\,\underset{M\succ 0}{\max}\,
\bigg[\log\det(M)+n-\mathrm{Tr}(M A^\top V A)+\frac{1}{1+\alphap}\sum_{i\in[m]} v_i^{1+\alphap}\bigg]\,.
\end{align}
Define
\[
\Phi(v,M)\coloneqq\log\det(M)-\mathrm{Tr}(M A^\top V A)+\frac{1}{1+\alphap}\sum_{i\in[m]} v_i^{1+\alphap}.
\]
Then, 
$\Phi(v,M)$ is convex with respect to $v\in\R^m_{>0}$ and concave with respect to $M\in\R^{n\times n}_{\succ 0}$. Moreover, $\fobj_\primal$ diverges to $+\infty$ whenever any coordinate $v_i\to 0$ or $v_i\to\infty$. Therefore, $\fobj_\primal$ admits a finite minimizer, and both the minimization over $v$ and the maximization over $M$ in \eqref{eq:minimax-form} may be restricted to compact convex subsets without changing their values. Applying Sion’s minimax theorem on these restricted domains then yields
\begin{align*}
\underset{v>0}{\min}\,\fobj_\primal(v)
&=\underset{M\succ 0}{\max}\,\underset{v>0}{\min}\,\bigg[\log\det(M)+n-\mathrm{Tr}(M A^\top V A)+\frac{1}{1+\alphap}\sum_{i\in[m]} v_i^{1+\alphap}\bigg]
\end{align*}
Furthermore, for any $M\in\R^{n\times n}_{\succ 0}$,
\begin{align*}
\min_{v>0}\Phi(v,M)&=\log\det(M)+\sum_{i\in[m]} \inf_{v_i>0}\left(\frac{1}{1+\alphap}v_i^{1+\alphap}-v_i\, a_i^\top M a_i\right)\\
&=\log\det(M)-\frac2p\sum_{i\in[m]} (a_i^\top M a_i)^{p/2}=-\fobj_\dual(M),
\end{align*}
which gives
\begin{align*}
    \underset{M\succ 0}{\min}\,\fobj_\dual(M)
    =n-\underset{v>0}{\min}\,\fobj_\primal(v).
\end{align*}
\end{proof}

\begin{lemma}\label{lem:fobj_dual_properties}
For any $M \in \Sn_{\succ 0}$ the gradient and Hessian of $\fobj_\dual$ have the following expressions: 
\begin{align*}
\nabla \fobj_\dual(M ) 
&= -M ^{-1} + \sum_{i\in[m]} (a_i^\top M  a_i)^{\betap} a_i a_i^{\top}\text{ and }\\
\nabla^2 \fobj_\dual(M) 
&= M ^{-1} \otimes M ^{-1} + \betap \sum_{i\in[m]} (a_i^\top M  a_i)^{\betap-1} a_i a_i^\top \otimes a_i a_i^{\top}\,.
\nonumber
\end{align*}
Moreover, $\fobj_\dual$ has a unique minimizer  $M_{*}$ in $\Sn_{\succ 0}$ that satisfies $M_{*}^{-1} = A^\top  V_*A$.
\end{lemma}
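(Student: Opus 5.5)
The plan is to compute the first two derivatives of the two pieces of $\fobj_\dual$ separately, and then to get existence and uniqueness of the minimizer from strict convexity and coercivity.

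For the log-determinant term we use the standard facts $\nabla(-\logdet M)=-M^{-1}$ and $D^2(-\logdet M)[H,H]=\tr(M^{-1}HM^{-1}H)$. In vectorized form the latter is exactly $\vec(H)^\top (M^{-1}\otimes M^{-1})\vec(H)$.

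For the $i$-th summand we set $g_i(M)=\frac{1}{1+\betap}(a_i^\top M a_i)^{1+\betap}$. Since $a_i^\top M a_i=\langle a_ia_i^\top,M\rangle$ is linear in $M$, the chain rule gives
\[
\nabla g_i(M)=(a_i^\top M a_i)^{\betap}a_ia_i^\top.
\]
Differentiating once more gives
\[
\nabla^2 g_i(M)=\betap(a_i^\top M a_i)^{\betap-1}\,\vec(a_ia_i^\top)\vec(a_ia_i^\top)^\top .
\]
We identify this rank-one operator with $a_ia_i^\top\otimes a_ia_i^\top$ acting on symmetric $H$, since $\langle a_ia_i^\top,H\rangle^2=(a_i^\top Ha_i)^2$. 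The expression is well defined because non-degeneracy gives $a_i\neq 0$, so $a_i^\top Ma_i>0$. Summing the two contributions yields the stated gradient and Hessian.

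For the minimizer, the Hessian is $\succeq M^{-1}\otimes M^{-1}\succ 0$, so $\fobj_\dual$ is strictly convex on $\Sn_{\succ0}$ and has at most one minimizer. For existence we use coercivity, which is the step needing the most care.

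\textbf{Boundary.} If some eigenvalue of $M$ tends to $0$ while the others stay bounded, then $-\logdet M\to+\infty$ and the sum stays bounded below by $0$.

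\textbf{Infinity.} If $\norm{M}\to\infty$ along a unit eigenvector $u$, full column rank of $A$ gives some $i$ with $a_i^\top u\neq0$. Then $\sum_i a_i^\top Ma_i \ge c\,\lambda_{\max}(M)$ for a constant $c>0$ depending only on $A$, and more strongly $\sum_i a_i^\top M a_i\ge c\tr(M)$, because $\sum_i a_ia_i^\top\succ0$. Hence the polynomial term grows like $\tr(M)^{1+\betap}$, while $-\logdet M\ge -n\log(\tr M/n)$ decreases only logarithmically.

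So sublevel sets are compact in $\Sn_{\succ0}$, and a minimizer $M_*$ exists.

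Finally we identify $M_*$. Setting the gradient to zero gives
\[
M_*^{-1}=\sum_i (a_i^\top M_*a_i)^{\betap}a_ia_i^\top=A^\top \overline V A,\qquad \bar v_i=(a_i^\top M_*a_i)^{\betap}.
\]
Then $\bar v_i^{\alphap}=a_i^\top M_*a_i=a_i^\top(A^\top\overline VA)^{-1}a_i$, using $\alphap\betap=1$. This is exactly the optimality condition for $\fobj_\primal$ stated after \eqref{eq:vec-opt-problem}. Equivalently, $\bar v^{1+\alphap}=\sigma(\overline V^{1/2}A)$, and with $w=\bar v^{1+\alphap}$ we have $W^{\frac12-\frac1p}=\overline V^{1/2}$ because $(1+\alphap)(\frac12-\frac1p)=\frac12$. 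So $w$ satisfies the Lewis weight fixed point. By uniqueness of Lewis weights, $w=\sigma_p(A)$, hence $\bar v=v_*$ and $M_*^{-1}=A^\top V_*A$.
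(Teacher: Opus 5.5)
Your proposal is correct and follows the same route as the paper. You differentiate termwise for the gradient and Hessian, use $\nabla^2\fobj_\dual(M)\succeq M^{-1}\otimes M^{-1}\succ 0$ for uniqueness, and use the stationarity condition $M_*^{-1}=\sum_i (a_i^\top M_* a_i)^{\betap}a_ia_i^\top$ to identify the minimizer with the rescaled Lewis weights. Your coercivity argument and your explicit appeal to uniqueness of Lewis weights supply two steps the paper's proof only asserts: it passes directly from strict convexity to existence of a minimizer, and from the fixed-point equation to $u=v_*$.
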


\begin{proof}
For any $M\in\R^{d\times d}_{\succ 0}$ and $H\in\R^{d\times d}$, we have 
\[
\frac{\d}{\d t}\Big|_{t=0}\log\det(M+tH)=\operatorname{Tr}(M^{-1}H)=\langle M^{-1},H\rangle,
\]
and since $p/2 - 1 = \frac{p - 2}{2} = \betap$
\[
\frac{\d}{\d t}\Big|_{t=0}(a_i^\top(M+tH)a_i)^{p/2}
=\frac{p}{2}(a_i^\top M a_i)^{\betap}a_i^\top H a_i
\]
for any $i\in[m]$. Hence,
\[
\frac{\d \fobj_\dual(M+tH)}{\d t}\Big|_{t=0}
=\Big\langle -M^{-1}+\sum_{i\in[m]} (a_i^\top M a_i)^{\betap}a_i a_i^\top,H\Big\rangle,
\]
which gives
\[
\nabla \fobj_\dual(M)
=-M^{-1}+\sum_{i\in[m]} (a_i^\top M a_i)^{\betap}a_i a_i^\top.
\]
Similarly, for any $M\in\R^{d\times d}_{\succ 0}$ and $H\in\R^{d\times d}$ we have
\[
\lim_{t\to 0}\frac{\nabla\fobj_\dual(M+tK)-\nabla\fobj_\dual(M)}{t}
= M^{-1}KM^{-1}
+\frac1\alphap\sum_{i\in[m]} (a_i^\top M a_i)^{\betap-1}(a_i^\top K a_i)\,a_i a_i^\top,
\]
which gives
\[
\frac{\d}{\d t}\Big|_{t=0}
\big\langle \nabla F(M+tK), H \big\rangle=\big\langle M^{-1}KM^{-1},H\big\rangle
+\frac1\alphap\sum_{i\in[m]} (a_i^\top M a_i)^{\betap-1}
\langle a_i a_i^\top,H\rangle \langle a_i a_i^\top,K\rangle,
\]
for all $M\in\R^{d\times d}_{\succ 0}$ and $H,K\in\R^{d\times d}$, which implies
\[
\nabla^2 \fobj_\dual(M)
= M^{-1} \otimes M^{-1}
+\frac1\alphap\sum_{i\in[m]} (a_i^\top M a_i)^{\betap-1}
a_i a_i^\top \otimes a_i a_i^\top.
\]
Note that $\nabla^2 \fobj_\dual(M)\succ 0$. Hence, $\fobj_\dual$ is strictly convex and has a unique minimizer $M_{*}$ satisfying $\nabla \fobj_\dual(M_{*}) =0$, or equivalently,
\[
M_{*} ^{-1} = \sum_{i\in[m]} (a_i^\top M_{*}  a_i)^{\betap} a_i a_i^{\top}.
\]
Let $u\in\R^m_{>0}$ be the vector with coordinates $[u_*]_i=(a_i^\top M_{*}  a_i)^{\betap}$. Then, we have $M_{*}=\big(AUA\big)^{-1}$ and
\[
[u]_i=\big(a_i^\top \big(AUA\big)^{-1}a_i\big)^{\betap},
\]
showing that $u=v_*$.
\end{proof}

\begin{lemma}\label{lem:dual_relative_convex}
$\fobj_\dual$ is $1$-relatively strongly convex with respect to $h_\dual$. Moreover, if $M=(A^\top VA)^{-1}$ for some $V=\mDiag(v)$ with $v\in\R^m_{>0}$, we have
\begin{equation}
\label{eq:dual_smoothness_local}
\nabla^2\fobj_\dual(M ) \preceq \left(1+\betap \Phi_\max(v)\right) \nabla^2 h_\dual(M).
\end{equation}
\end{lemma}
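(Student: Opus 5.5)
Both claims follow from the Hessian formula in Lemma~\ref{lem:fobj_dual_properties}, together with $\nabla^2 h_\dual(M) = M^{-1}\otimes M^{-1}$ for $h_\dual(M) = -\logdet(M)$. The latter holds because the gradient of $-\logdet$ is $-M^{-1}$, and the derivative of this gradient in direction $K$ is $M^{-1}KM^{-1}$.

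For relative strong convexity, I will note that $\nabla^2 \fobj_\dual(M) - \nabla^2 h_\dual(M) = \betap\sum_i (a_i^\top M a_i)^{\betap-1} a_ia_i^\top\otimes a_ia_i^\top$. As a quadratic form in a symmetric direction $H$, this equals $\betap\sum_i (a_i^\top M a_i)^{\betap-1}(a_i^\top H a_i)^2 \ge 0$. Hence $\nabla^2\fobj_\dual \succeq \nabla^2 h_\dual$ everywhere on $\Sn_{\succ 0}$. By the equivalence of Lu–Freund–Nesterov (integrate the Hessian inequality along segments, which lie in the convex domain), this gives $1$-relative strong convexity.

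For \eqref{eq:dual_smoothness_local}, I need to show that for every symmetric $H$,
\[
\betap\sum_i (a_i^\top M a_i)^{\betap-1}(a_i^\top H a_i)^2 \le \betap\Phi_\max(v)\,\Tr(M^{-1}HM^{-1}H),
\]
where $M^{-1}=A^\top VA$. First I substitute $H = M^{1/2} G M^{1/2}$ with $G$ symmetric, so the right side becomes $\betap\Phi_\max \|G\|_F^2$. Next I write $b_i = M^{1/2}a_i$, so that $a_i^\top M a_i = \|b_i\|^2$ and $a_i^\top H a_i = b_i^\top G b_i$. The key bound is the following: since $(a_i^\top M a_i)^{\betap} = \Phi_i(v)\, v_i \le \Phi_\max v_i$, each term satisfies
\[
(a_i^\top M a_i)^{\betap-1}(b_i^\top G b_i)^2 \le \Phi_\max\, v_i\, \frac{(b_i^\top G b_i)^2}{\|b_i\|^2}\le \Phi_\max\, v_i\, b_i^\top G^2 b_i,
\]
where the last step is Cauchy–Schwarz: $(b^\top G b)^2 \le \|b\|^2\|Gb\|^2$. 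Summing over $i$ gives
\[
\Phi_\max \Tr\Bigl(G^2 \textstyle\sum_i v_i b_ib_i^\top\Bigr) = \Phi_\max\Tr\bigl(G^2 M^{1/2}A^\top VAM^{1/2}\bigr) = \Phi_\max \Tr(G^2) = \Phi_\max\|G\|_F^2,
\]
because $M^{1/2}(A^\top VA)M^{1/2}=I$. Adding back the $M^{-1}\otimes M^{-1}$ term yields $\nabla^2\fobj_\dual(M)\preceq (1+\betap\Phi_\max(v))\nabla^2 h_\dual(M)$.

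The main obstacle is spotting the split $(a_i^\top Ma_i)^{\betap-1} = (a_i^\top M a_i)^{\betap}/\|b_i\|^2$. This split absorbs exactly one factor of $\|b_i\|^2$ through Cauchy–Schwarz, and it converts the remaining power into $\Phi_i v_i$, whose $v_i$ weights reassemble $M^{-1}$. Beyond that, the only care needed is to restrict to symmetric directions, which is consistent with the Kronecker formulas.
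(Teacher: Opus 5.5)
Your proof is correct and is the paper's argument written as a quadratic form: the paper bounds $M^{1/2}a_ia_i^\top M^{1/2}\preceq (a_i^\top M a_i) I$ inside the Kronecker product, which is exactly your Cauchy--Schwarz step $(b_i^\top G b_i)^2\le \|b_i\|^2\, b_i^\top G^2 b_i$. It then uses $(a_i^\top M a_i)^{\betap}=\Phi_i(v)v_i\le \Phi_{\max}(v)v_i$ and $M^{1/2}A^\top V A M^{1/2}=I$ as you do, and your relative strong convexity argument likewise matches the paper's.
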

\begin{proof}
The $1$-relative strong convexity of $\fobj$ with respect to $h_{\dual}$ follows from the convexity of $(a_i^\top M a_i)^{1 + \betap}= (a_i^\top M  a_i)^{p/2}$. More formally, 
observe that 
\[
\nabla^2 \fobj_\dual(M)
= \nabla^2h_\dual(M)
+ \frac1\alphap\sum_{i\in[m]} (a_i^\top M a_i)^{\betap-1}
a_i a_i^\top \otimes a_i a_i^\top\succeq \nabla^2h_\dual(M),
\]
proving that $\fobj_\dual$ is $1$-relatively strongly convex with respect to $h_\dual(M)$.

We now establish \eqref{eq:dual_smoothness_local}. First, using $M^{1/2} a_i a_i^\top M^{1/2} \preceq a_i^\top M a_i I$ and monotonicity of the Kronecker product, we observe that 
\begin{align*}
\sum_{i\in[m]} (a_i^\top M  a_i)^{\betap-1} M ^{1/2} a_i a_i^\top M ^{1/2} \otimes M ^{1/2} a_i a_i^\top M ^{1/2} \preceq M ^{1/2} \left( \sum_{i\in[m]} (a_i^\top M  a_i)^{\betap} a_i a_i^{\top}\right) M ^{1/2} \otimes I. \end{align*}
If $M=(A^\top VA)^{-1}$ for some $V=\mDiag(v)$, we have
\begin{align}
\label{eq:matrix-product-to-the-power-of-beta}
(a_i^\top M  a_i)^{\betap}=(a_i^\top (A^\top VA)^{-1}  a_i)^{\betap}=\Phi_i(v)v_i
\end{align}
and therefore
\begin{align*}
&\sum_{i\in[m]} (a_i^\top M  a_i)^{\betap-1} M ^{1/2} a_i a_i^\top M ^{1/2} \otimes M ^{1/2} a_i a_i^\top M ^{1/2} \\
&\quad\qquad\preceq M ^{1/2} \left( \sum_{i\in[m]} v_i \Phi_i(v) a_i a_i^\top \right) M ^{1/2}  \otimes I \nonumber\\
&\quad\qquad\preceq \Phi_\max(v) M ^{1/2} \left(\sum_{i\in[m]} v_ia_i a_i^\top \right) M ^{1/2}  \otimes I \nonumber\\
&\quad\qquad= \Phi_\max(v) I\otimes I\,.
\end{align*}
Equation \eqref{eq:dual_smoothness_local} then follows as $\nabla^2(-\logdet(M)) = M^{-1} \otimes M^{-1}$ and therefore 
\[
\sum_{i\in[m]} (a_i^\top M a_i)^{\betap-1}a_i a_i^\top \otimes a_i a_i^\top\preceq\Phi_\max(v) \nabla^2 h_\dual(M).
\]
\end{proof}

\begin{lemma} 
$D_{h_\dual}\big((A^\top  V_{*}A)^{-1},(A^\top A)^{-1}\big)\leq m-n.$
\end{lemma}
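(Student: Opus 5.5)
Write $M_*=(A^\top V_*A)^{-1}$ and $M_0=(A^\top A)^{-1}$. The plan is to expand the Bregman divergence of $h_\dual=-\logdet$ explicitly and then bound each term. Since $\nabla h_\dual(M)=-M^{-1}$, we have
\[
D_{h_\dual}(M_*,M_0) = -\logdet(M_*)+\logdet(M_0)+\Tr(M_0^{-1}M_*) - n
= \Tr(A^\top A\, M_*) - \logdet(A^\top A\, M_*) - n .
\]

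For the trace term, $\Tr(A^\top A M_*)=\sum_i a_i^\top M_* a_i = \sum_i [v_*]_i^{\alphap}$. Here I use the optimality relation $[v_*]_i^{\alphap}=a_i^\top (A^\top V_*A)^{-1}a_i$ stated after \eqref{eq:vec-opt-problem}, equivalently Lemma~\ref{lem:fobj_dual_properties}. I also have $\sigma_p(A)_i = [v_*]_i^{1+\alphap}\in(0,1]$, since Lewis weights are leverage scores, and these sum to $n$. Because $v_*\le 1$ coordinatewise and $\alphap\ge 0$, the values $[v_*]_i^{\alphap}$ are at most $1$. The naive bound $\sum_i [v_*]_i^{\alphap}\le m$ is therefore immediate but loses $n$, so I need to handle the $\logdet$ term carefully.

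For the $\logdet$ term, write $-\logdet(A^\top A M_*) = \logdet(A^\top V_*A)-\logdet(A^\top A)$. Since $v_*\le \ones$, we get $A^\top V_* A\preceq A^\top A$, so this term is $\le 0$. The total is then at most $\sum_i [v_*]_i^{\alphap} - n$, and to get $m-n$ it suffices that $\sum_i [v_*]_i^{\alphap}\le m$, which holds since each term is at most $1$.

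The only step to double-check is that Lewis weights lie in $(0,1]$, i.e.\ $v_*\le\ones$. This follows from the fixed-point definition: $\sigma_p(A)=\sigma(W^{\frac12-\frac1p}A)$ is a leverage-score vector, and leverage scores lie in $[0,1]$. With that, the argument is a short chain of inequalities: the trace term is at most $m$, the $\logdet$ term is nonpositive, and subtracting $n$ gives $D_{h_\dual}(M_*,M_0)\le m-n$.
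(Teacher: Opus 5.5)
Your proposal is correct and follows the paper's proof essentially step for step. You expand the Bregman divergence of $-\logdet$, discard the log-determinant term as nonpositive using $A^\top V_* A \preceq A^\top A$ (from $v_* \le \ones$), and bound the trace term $\sum_i [v_*]_i^{\alphap} - n$ by $m-n$ because each $[v_*]_i \le 1$.
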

\begin{proof}
    By the definition of the Bregman divergence, we have
    \begin{align*}
        &D_{h_\dual}\big((A^\top  V_{*}A)^{-1},(A^\top A)^{-1}\big)\\
        &\qquad=-\logdet((A^\top  V_{*}A)^{-1})+\logdet ((A^\top A)^{-1})+\big\langle A^\top A,(A^\top  V_{*}A)^{-1}-(A^\top A)^{-1}\big\rangle.
    \end{align*}
    Using $A^\top V_* A \preceq A^\top A$, we have
    \begin{align*}
        -\logdet((A^\top  V_{*}A)^{-1})+\logdet ((A^\top A)^{-1})=\logdet(A^\top  V_{*}A)-\logdet (A^\top A)\leq 0.
    \end{align*}
    Moreover, since $[v_{*}]_i= [\sigma_p(A)]_i^{\frac{\betap}{1+\betap}}\leq 1$ for any $i\in[m]$, we have
    \begin{align*}
        \big\langle A^\top A,(A^\top  V_{*}A)^{-1}-(A^\top A)^{-1}\big\rangle
        =\Tr[A(A^\top  V_{*}A)^{-1}A^\top ]-n=\sum_{i\in[m]}[v_*]_i ^{1/\betap}-n\leq m-n.
    \end{align*}
    Hence, we can conclude that $D_{h_\dual}\big((A^\top  V_{*}A)^{-1},(A^\top A)^{-1}\big)\leq m-n$.
\end{proof}

\subsection{Applying the local relative smoothness framework to \texorpdfstring{$\fobj_\dual$}{Fmat}}\label{sec:applying_rs_to_dual}

Here we give our algorithm that solves \eqref{eq:mat-opt-problem} by iteratively performing the update in \eqref{eq:algintro}.  
We show that in every iteration of the algorithm, the iterate $M^{(t)}$ obtained from the primal gradient scheme is of the form $M^{(t)} = (A^\top V^{(t)}A)^{-1}$, where $V^{(t)} = \mDiag(v^{(t)})$ for $v^{(t)} \in \R^m_{>0}$, see \cref{lem:dual_algo_relative_smoothness_interpretation}. The algorithm and its analysis are therefore stated in terms of the sequence of vectors $v^{(t)}$. 
We then establish in \Cref{lem:dual_algorithm_rhomax_control} that $\Phi_{\max}(v^{(t)})$ remains bounded for any iteration, and use this in \Cref{lem:relative-smoothnes-between-iterations-dual} to verify local relative smoothness between consecutive iterates. The local relative smoothness framework then allows us to establish~\Cref{thm:relative-smoothness-dual-main}.

\begin{algorithm}[htbp!]
\DontPrintSemicolon
\SetAlgoLined
\caption{High-precision algorithm using the matrix potential $\fobj_\dual$}\label{alg:relative-smoothness-dual}
\KwIn{non-degenerate $A \in \mathbb{R}^{m \times n}$, $p> 2$, $\epsilon>0$
}
Set $L=32p\barbetap$, $\totaliters=\lceil 4L\log(2m/\bar\epsilon)\rceil$ where $\bar \epsilon=\frac{\epsilon}{2(1+\alphap)}$, and $v^{(0)}_i=1$ for all $i\in[m]$.

\For{$t= 0, 1, \dotsc, \totaliters-1$}{
$v_i^{(t+1)}
\leftarrow\Big(1+ \frac{\Phi_i(v^{(t)})-1}{L} \Big)v_i^{(t)}$, $\forall i\in[m]$\label{lin:fobj-rs-update-dual}\qquad \tcp*{Recall $\Phi_i(v) = \frac{(a_i^\top (A^\top V A)^{-1} a_i)^{\betap}}{v_i}$.}
}
Return $\widehat w\in\R^m_{>0}$, where $\widehat w_i=(a_i^\top(A^\top V^{(\totaliters)}A)^{-1}a_i)^{1 + \betap}$.
\end{algorithm}

\theomat* 

\begin{lemma}\label{lem:dual_algo_relative_smoothness_interpretation}
For any iteration $t$ in Algorithm~\ref{alg:relative-smoothness-dual}, we have\footnote{Since $\fobj_\dual$ is convex and $\lim_{M\to\partial\{M\mid M\succeq 0\}}\fobj_\dual(M)=+\infty$, the minimizer is always attained in the interior. 
}
\begin{align*}
M ^{(t+1)}=\argmin_{M  \succeq 0}\left\{\fobj_\dual(M ^{(t)})+\langle \nabla \fobj_\dual(M ^{(t)}),M -M ^{(t)}\rangle+L D_{h_\dual}(M ,M ^{(t)})\right\}
\end{align*}
\end{lemma}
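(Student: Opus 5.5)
The plan is to solve the mirror-descent subproblem in closed form and check that it matches the algorithm's update. Here $M^{(t)}=(A^\top V^{(t)}A)^{-1}$ (inductively; at $t=0$, $V^{(0)}=I$), and $M^{(t+1)}$ is meant to be $(A^\top V^{(t+1)}A)^{-1}$ with $v^{(t+1)}$ from line~\ref{lin:fobj-rs-update-dual}.

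The objective of the subproblem is
\[
\langle \nabla\fobj_\dual(M^{(t)}),M\rangle+L\big(-\logdet M+\langle (M^{(t)})^{-1},M\rangle\big)
\]
plus constants. Since $D_{h_\dual}(M,M^{(t)})=-\logdet M+\logdet M^{(t)}+\langle (M^{(t)})^{-1},M-M^{(t)}\rangle$, this function is strictly convex on $\Sn_{\succ 0}$ and blows up at the boundary (as the footnote notes). So it suffices to find a stationary point in the interior. Setting the gradient to zero gives
\[
\nabla\fobj_\dual(M^{(t)})-L M^{-1}+L(M^{(t)})^{-1}=0,
\qquad\text{i.e.}\qquad
M^{-1}=(M^{(t)})^{-1}+\tfrac1L\nabla\fobj_\dual(M^{(t)}).
\]

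Next, substitute Lemma~\ref{lem:fobj_dual_properties}, $\nabla\fobj_\dual(M^{(t)})=-(M^{(t)})^{-1}+\sum_i (a_i^\top M^{(t)}a_i)^{\betap}a_ia_i^\top$. Combine this with $(M^{(t)})^{-1}=\sum_i v_i^{(t)}a_ia_i^\top$ and with \eqref{eq:matrix-product-to-the-power-of-beta}, which says $(a_i^\top M^{(t)}a_i)^{\betap}=\Phi_i(v^{(t)})v_i^{(t)}$. This gives
\[
M^{-1}=\sum_i\Big(1-\tfrac1L+\tfrac{\Phi_i(v^{(t)})}{L}\Big)v^{(t)}_i a_ia_i^\top=A^\top V^{(t+1)}A,
\]
which is exactly the algorithm's update.

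It remains to check that this stationary point lies in the domain. Since $L\ge 1$ and $\Phi_i>0$, the coefficients $1+(\Phi_i-1)/L$ are strictly positive. Hence $v^{(t+1)}>0$, and $A^\top V^{(t+1)}A\succ 0$ because $A$ has full rank. So the stationary point is a valid interior point, and by strict convexity it is the unique minimizer. This also closes the induction that every $M^{(t)}$ has the form $(A^\top V^{(t)}A)^{-1}$.

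The only mild obstacle is justifying that the $\argmin$ over $M\succeq0$ is attained at an interior stationary point. This follows from strict convexity of $-\logdet$ together with $\logdet\to-\infty$ at the boundary, so the first-order condition characterizes the minimizer.
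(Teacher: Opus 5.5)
Your proof is correct and follows essentially the same route as the paper. Both write the first-order condition of the mirror-descent subproblem, substitute the gradient from Lemma~\ref{lem:fobj_dual_properties} together with \eqref{eq:matrix-product-to-the-power-of-beta}, and read off $[M^{(t+1)}]^{-1}=A^\top V^{(t+1)}A$. Your extra checks, that $v^{(t+1)}>0$ because $L\ge 1$ and that the interior stationary point is the unique minimizer by strict convexity, spell out what the paper leaves implicit.
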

\begin{proof}
Given the choice of $h_\dual$, we have that
\begin{align*}
D_{h_\dual}(M,M^{(t)}) = -\logdet(M) + \logdet(M^{(t)}) + \langle [M^{(t)}]^{-1}, M-M^{(t)}\rangle
\end{align*}
and, by Lemma~\ref{lem:fobj_dual_properties}, stationarity is
\begin{align}\label{eq:dual_gradient_zero_condition}
\begin{aligned}
&0 = \nabla \fobj_\dual(M ^{(t)})+L\nabla D_{h_\dual}(M,M^{(t)})\big|_{M=M^{(t+1)}}\\
&\qquad=\nabla \fobj_\dual(M ^{(t)}) + L\big(-[M^{(t+1)}]^{-1}+[M^{(t)}]^{-1}\big)\\
&\qquad=(L-1)[M^{(t)}]^{-1}+ \sum_{i\in[m]} (a_i^\top M^{(t)} a_i)^{\betap} a_i a_i^\top-L[M^{(t+1)}]^{-1}\,.
\end{aligned}
\end{align}
Substituting $M^{(t)}= (A^\top V^{(t)}A)^{-1}$ for any $t$, we have $(a_i^\top M^{(t)}  a_i)^{\betap}=\Phi_i(v^{(t)})v_i^{(t)}$ by \eqref{eq:matrix-product-to-the-power-of-beta}. Then, \eqref{eq:dual_gradient_zero_condition} equals
\begin{align*}
    \sum_{i\in[m]}\left((L+\Phi_i(v^{(t)})-1)v^{(t)}_i-Lv^{(t+1)}_i\right)a_ia_i^\top=0, 
\end{align*}
which is satisfied when $v_i^{(t+1)}
 = \Big(1+ \frac{\Phi_i(v^{(t)})-1}{L} \Big)v_i^{(t)}$ for all $i\in[m]$. 
Since $D_{h_\dual}(M,M^{(t)})$ is convex, we can conclude that
\begin{align*}
M ^{(t+1)}=\argmin_{M  \succeq 0}\left\{\fobj_\dual(M ^{(t)})+\langle \nabla \fobj_\dual(M ^{(t)}),M -M ^{(t)}\rangle+L D_{h_\dual}(M ,M ^{(t)})\right\}.
\end{align*}
\end{proof}

\begin{lemma}\label{lem:dual_algorithm_rhomax_control}
    For any iteration $t$ in Algorithm~\ref{alg:relative-smoothness-dual}, if $\Phi_\max(v^{(t)}) \leq 4\bar{\betap} \leq L$ then $\Phi_\max(v^{(t+1)})\leq 4\bar{\betap}$. 
\end{lemma}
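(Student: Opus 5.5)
The plan is to compare $v^{(t+1)}$ with $v^{(t)}$ coordinatewise and track how the leverage-type quantities $a_i^\top (A^\top V A)^{-1} a_i$ change. Write $\Phi=\Phi_i(v^{(t)})$, $v=v^{(t)}$, $v^+=v^{(t+1)}$, and set $c_i\defeq 1+\frac{\Phi_i(v)-1}{L}$, so that $v_i^+=c_i v_i$.

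First, the multipliers are controlled. Since $\Phi_i\ge 0$ and $L\ge 32p\barbetap$, we have $c_i\ge 1-\frac1L$. The hypothesis $\Phi_\max(v)\le 4\barbetap\le L$ gives $c_i\le 1+\frac{4\barbetap}{L}\le 2$. In particular, $\min_i c_i\ge 1-1/L$. Hence $A^\top V^+A\succeq (1-1/L)A^\top VA$, and therefore
\[
a_i^\top (A^\top V^+A)^{-1}a_i\le (1-1/L)^{-1}\,a_i^\top (A^\top VA)^{-1}a_i .
\]
This gives
\[
\Phi_i(v^+)\le \frac{(1-1/L)^{-\betap}\,\Phi_i(v)\,v_i}{c_iv_i}=(1-1/L)^{-\betap}\,\frac{\Phi_i(v)}{1+\frac{\Phi_i(v)-1}{L}} .
\]

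Second, I split into cases. If $\Phi_i(v)\le 2\barbetap$ (or some threshold comfortably below $4\barbetap$), use $c_i\ge 1-1/L$. Then
\[
\Phi_i(v^+)\le (1-1/L)^{-\betap-1}\,2\barbetap .
\]
Because $L=32p\barbetap\ge 32(\betap+1)$, we get $(1-1/L)^{-\betap-1}\le e^{2(\betap+1)/L}\le 2$, so $\Phi_i(v^+)\le 4\barbetap$.

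If instead $\Phi_i(v)\in(2\barbetap,4\barbetap]$, then the function $x\mapsto x/(1+(x-1)/L)$ is increasing, so its value is at most $4\barbetap/(1+(4\barbetap-1)/L)$. Coordinate $i$ now gains multiplicatively by roughly $1+\Omega(\barbetap/L)$, and this must beat the loss $(1-1/L)^{-\betap}\approx 1+\betap/L$. Concretely, I need
\[
(1-1/L)^{-\betap}\le 1+\frac{4\barbetap-1}{L}.
\]
The left side is at most $1+\frac{2\betap}{L}$ when $\betap\le L/2$, and $2\betap\le 4\barbetap-1$ holds since $\barbetap\ge 1$ and $\barbetap\ge\betap$.

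The main obstacle is the uniform factor $(1-1/L)^{-\betap}$. It is valid but crude, and the key check is the estimate $(1-1/L)^{-\betap}\le 1+2\betap/L$ with $L\gg\betap$. If this turns out too tight, I would refine the monotonicity step: $\sigma$-type quantities depend on the whole weighting, so I would compare $A^\top V^+A$ against $A^\top VA$ more carefully using $c_i\ge 1+(2\barbetap-1)/L$ on the heavy coordinates. Since the given constants ($L=32p\barbetap$) leave ample slack, I expect the crude bound to suffice.
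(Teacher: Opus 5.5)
Your proof is correct and follows essentially the paper's argument: derive $\Phi_i(v^{(t+1)})\le (1-1/L)^{-\betap}\,\Phi_i(v^{(t)})/\bigl(1+(\Phi_i(v^{(t)})-1)/L\bigr)$ from $v^{(t+1)}\ge(1-1/L)v^{(t)}$, use monotonicity in $\Phi_i(v^{(t)})$, and close with an elementary inequality at $\Phi_i(v^{(t)})=4\barbetap$. Your first case is redundant, since the monotonicity step in your second case already covers every $\Phi_i(v^{(t)})\le 4\barbetap$; your closing estimate $(1-1/L)^{-\betap}\le 1+2\betap/L\le 1+(4\barbetap-1)/L$ also holds for every $\betap>0$, whereas the paper's bound $(1-x)^{-\betap}\le(1-\betap x)^{-1}$ is Bernoulli's inequality and needs $\betap\ge 1$ (for $\betap<1$ one must use $(1-x)^{-1}$ together with $\barbetap=1$).
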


\begin{proof}
    By the update formula in Line~\ref{lin:fobj-rs-update-dual}, we have $v_i^{(t+1)} \geq \big(1-\frac{1}{L}\big) v_i^{(t)}$, which leads to
    \begin{align*}
    \Phi_i(v^{(t+1)}) &= \frac{v_i^{(t)}}{v_i^{(t+1)}}\cdot \frac{(a_i^\top(A^\top V^{(t+1)}A)^{-1} a_i)^{\betap}}{v_i^{(t)}}\\
    &\leq \left(1-\frac{1}{L}\right)^{-\betap}\frac{v_i^{(t)}}{v_i^{(t+1)}} \cdot \frac{(a_i^{\top}(A^\top V^{(t)}A)^{-1} a_i)^{\betap}}{v_i^{(t)}} \\
    &= \left(1-\frac{1}{L}\right)^{-\betap}\left(1+\frac{\Phi_i(v^{(t)})-1}{L}\right)^{-1} \cdot \Phi_i(v^{(t)})\,.
    \end{align*}
    Since $L>1$, the function $\psi\colon\R^+\to\R$ defined as $\psi(x) \defeq \left(1-\frac{1}{L}\right)^{-\betap}\left(1+\frac{x-1}{L}\right)^{-1}x$, is monotonically increasing for $x>0$. Then, using $\Phi_{\max}(v^{(t)})\leq 4\barbetap$, we have
    \begin{align*}
    \Phi_\max(v^{(t+1)})
    \leq \left(1-\frac{1}{L}\right)^{-\betap}\left(1+\frac{4\barbetap-1}{L}\right)^{-1} 4\barbetap
    \leq \left(1-\frac{\betap}{ L}\right)^{-1}\left(1+\frac{4\barbetap}{2L}\right)^{-1}4\barbetap\leq4\barbetap,
    \end{align*}
    where the second inequality uses $L\ge 4\barbetap$ and $4\barbetap\ge 2$, and the fact that $(1-x)^{-\betap}\le (1-\betap x)^{-1}$ for all $0\le x\le 1/(2\betap)$. The third inequality uses $\frac{\betap}{L}\le \frac14$, and the fact that $(1-x)^{-1}(1+2x)^{-1}\le 1$ for all $0\le x\le 1/4$. 
\end{proof}

\begin{lemma}\label{lem:relative-smoothnes-between-iterations-dual}
For any iteration $t$ in Algorithm~\ref{alg:relative-smoothness-dual} and any $\lambda\in[0,1]$, if $\Phi_\max(v^{(t)}) \leq 4\barbetap$ and $L\geq 16p\barbetap$, 
\begin{align*}
\nabla^2\fobj_\dual(M_\lambda)\preceq \big(1+16p\barbetap\big)\nabla^2h_\dual(M_\lambda)
    \enspace\text{ where }
    \enspace
    M_\lambda \defeq (1-\lambda)M^{(t)}+\lambda M^{(t+1)}
    \,.
\end{align*}
\end{lemma}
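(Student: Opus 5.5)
The plan is to reduce the claim to the argument already used in the proof of Lemma~\ref{lem:dual_relative_convex}, which in fact works for any $M\succ 0$ and not only for $M$ of the form $(A^\top VA)^{-1}$. That argument gives
\[
\sum_{i\in[m]} (a_i^\top M a_i)^{\betap-1} a_ia_i^\top\otimes a_ia_i^\top \preceq (M^{-1/2}\otimes M^{-1/2})\Big(M^{1/2}\Big(\sum_{i\in[m]}(a_i^\top M a_i)^{\betap}a_ia_i^\top\Big)M^{1/2}\otimes I\Big)(M^{-1/2}\otimes M^{-1/2}).
\]
So it suffices to show $\sum_i (a_i^\top M_\lambda a_i)^{\betap}a_ia_i^\top \preceq K\, M_\lambda^{-1}$ for some $K$ with $\betap K\le 16p\barbetap$. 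Then $\nabla^2\fobj_\dual(M_\lambda)\preceq (1+\betap K)\nabla^2 h_\dual(M_\lambda)$, exactly as in that lemma.

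First I compare $M_\lambda$ with $M^{(t)}$, using only a one-sided bound. By the update rule, $v_i^{(t+1)}\ge (1-\tfrac1L)v_i^{(t)}$ for all $i$. Hence $A^\top V^{(t+1)}A\succeq (1-\tfrac1L)A^\top V^{(t)}A$, and therefore $M^{(t+1)}\preceq c\,M^{(t)}$ with $c\defeq(1-\tfrac1L)^{-1}\ge 1$. Taking the convex combination gives $M_\lambda\preceq (1-\lambda)M^{(t)}+\lambda c M^{(t)}\preceq c\,M^{(t)}$. Inverting gives $[M^{(t)}]^{-1}\preceq c\,M_\lambda^{-1}$.

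Next I bound the weighted sum. From $M_\lambda\preceq cM^{(t)}$ and \eqref{eq:matrix-product-to-the-power-of-beta},
\[
(a_i^\top M_\lambda a_i)^{\betap}\le c^{\betap}(a_i^\top M^{(t)}a_i)^{\betap}=c^{\betap}\Phi_i(v^{(t)})v_i^{(t)}\le 4\barbetap c^{\betap} v_i^{(t)}.
\]
Summing against $a_ia_i^\top$ gives
\[
\sum_i (a_i^\top M_\lambda a_i)^{\betap}a_ia_i^\top\preceq 4\barbetap c^{\betap}A^\top V^{(t)}A = 4\barbetap c^{\betap}[M^{(t)}]^{-1}\preceq 4\barbetap c^{\betap+1}M_\lambda^{-1}.
\]
So $K=4\barbetap c^{\betap+1}$ works.

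The main point to check is the constant. We have $\betap+1=p/2$ and $L\ge 16p\barbetap\ge 16p$, so
\[
c^{p/2}=(1-\tfrac1L)^{-p/2}\le (1-\tfrac{p}{2L})^{-1}\le (1-\tfrac1{32})^{-1}\le 2,
\]
using the same Bernoulli-type inequality as in Lemma~\ref{lem:dual_algorithm_rhomax_control}. This gives $\betap K\le 8\betap\barbetap$. Since $\betap=(p-2)/2<p$, this is at most $16p\barbetap$, which yields the claimed bound.
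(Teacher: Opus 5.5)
Your proof is correct. It has the same skeleton as the paper's proof: reuse the Kronecker bound from Lemma~\ref{lem:dual_relative_convex}, compare $M_\lambda$ spectrally with $M^{(t)}$, bound $(a_i^\top M_\lambda a_i)^{\betap}$ by a multiple of $v_i^{(t)}$, and sum against $a_ia_i^\top$. The difference is in how you compare $M_\lambda$ with $M^{(t)}$.

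The paper derives the two-sided closeness $\frac45 M^{(t)}\preceq M_\lambda\preceq\frac43 M^{(t)}$ from $|v_i^{(t+1)}/v_i^{(t)}-1|\le\frac14$. It then bounds $(a_i^\top M_\lambda a_i)^{\betap}$ by the larger of its two endpoint values. For the endpoint $M^{(t+1)}$ it invokes Lemma~\ref{lem:dual_algorithm_rhomax_control} to get $\Phi_\max(v^{(t+1)})\le 4\barbetap$, which yields $8\barbetap v_i^{(t)}$ and the factor $1+16\betap\barbetap$.

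You use only the lower bound $v^{(t+1)}\ge(1-\frac1L)v^{(t)}$. Its consequence $M_\lambda\preceq cM^{(t)}$ does double duty: it bounds $(a_i^\top M_\lambda a_i)^{\betap}$ directly from the hypothesis at step $t$, and after inversion it converts $A^\top V^{(t)}A$ into $c\,M_\lambda^{-1}$. This buys several things:
\begin{itemize}
\item The lemma no longer depends on the invariant at step $t+1$ or on an upper bound for $v^{(t+1)}$.
\item $L\ge p$ already suffices for $c^{p/2}\le 2$.
\item The constant improves to $1+8\betap\barbetap$.
\item The whole Kronecker argument stays at the single point $M_\lambda$. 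The paper's first inequality in \eqref{eq:Mlambda_Hessian_mid_decomposition} swaps the conjugating matrix $M_\lambda^{1/2}$ for $[M^{(t)}]^{1/2}$, a step best read as $[M^{(t)}]^{-1}\otimes[M^{(t)}]^{-1}\preceq\frac{16}{9}M_\lambda^{-1}\otimes M_\lambda^{-1}$. In your version the change of base point happens only at the level of $n\times n$ matrices.
\end{itemize}

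Your Bernoulli step is legitimate because the exponent $p/2$ exceeds $1$. In context the paper's route costs little. The invariant $\Phi_\max(v^{(t)})\le 4\barbetap$ for all $t$ is needed anyway in the proof of Theorem~\ref{thm:relative-smoothness-dual-main}, to supply this lemma's hypothesis at every iteration.
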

\begin{proof}
By the update formula in Line~\ref{lin:fobj-rs-update-dual}, we have $\frac{|v_i^{(t+1)}-v_i^{(t)}|}{v_i^{(t)}}\leq \frac{|\Phi_i(v^{(t)})-1|}{L}\leq\frac{4\betap-1}{L}\leq \frac14$ where the second inequality follows from $\Phi_{\max}(v^{(t)})\le4\barbetap$, and the last inequality uses $L\ge 16p\barbetap\geq 16\barbetap$. Consequently, $\frac 34 A^\top V^{(t)}A\preceq A^\top V^{(t+1)}A\preceq \frac54A^\top V^{(t)}A$, and thus $\frac 45 M^{(t)}\preceq M_\lambda\preceq \frac43 M^{(t)}$ for any $\lambda \in[0,1]$. By Lemma~\ref{lem:fobj_dual_properties}, the Hessian of $\fobj_\dual$ admits the decomposition
\[
\nabla^2 \fobj_\dual(M_\lambda)
= M_\lambda^{-1} \otimes M^{-1}_\lambda
+\frac1\alphap\sum_{i\in[m]} (a_i^\top M_\lambda a_i)^{\betap-1}
a_i a_i^\top \otimes a_i a_i^\top.
\]
We bound the second term as follows:
\begin{align}
&\sum_{i\in[m]} (a_i^\top M_{\lambda} a_i)^{\betap-1} M_{\lambda} ^{1/2} a_i a_i^\top M_{\lambda}^{1/2} \otimes M_{\lambda}^{1/2} a_i a_i^{\top} M_{\lambda}^{1/2}\nonumber \\
&\qquad\preceq2\sum_{i\in[m]} (a_i^\top M_{\lambda} a_i)^{\betap-1} [M^{(t)}]^{1/2} a_i a_i^\top [M^{(t)}]^{1/2} \otimes [M^{(t)}]^{1/2} a_i a_i^{\top} [M^{(t)}]^{1/2} \nonumber\\
&\qquad\preceq2[M^{(t)}]^{1/2} \left( \sum_{i\in[m]} (a_i^\top M_{\lambda}  a_i)^{\betap} a_i a_i^{\top}\right) [M^{(t)}]^{1/2} \otimes I.
\label{eq:Mlambda_Hessian_mid_decomposition}
\end{align}
where the first inequality uses the spectral closeness between $M_\lambda$ and $M^{(t)}$. Next, by Lemma~\ref{lem:dual_algorithm_rhomax_control}, for each coordinate $i$ we have
\begin{align*}
    (a_i^\top M_{\lambda}  a_i)^{\betap}
    &\leq (\max\{a_i^\top M^{(t)}a_i,a_i^\top M^{(t+1)}a_i\})^{\betap}\\
    &\leq \max\big\{v_i^{(t)}\Phi_i(v^{(t)}),v_i^{(t+1)}\Phi_i(v^{(t+1)})\big\}\leq 8\barbetap v_i^{(t)}.
\end{align*}
Therefore,
\begin{align*}
    [M^{(t)}]^{1/2} \bigg( \sum_{i\in[m]} (a_i^\top M_{\lambda}  a_i)^{\betap} a_i a_i^\top\bigg) [M^{(t)}]^{1/2} 
    \preceq 8\barbetap[M^{(t)}]^{1/2} \bigg( \sum_{i\in[m]} v_i^{(t)} a_i a_i^{\top}\bigg) [M^{(t)}]^{1/2} 
    =8\barbetap I,
\end{align*}
which combined with Eq.~\eqref{eq:Mlambda_Hessian_mid_decomposition} gives
\[
\sum_{i\in[m]} (a_i^\top M_{\lambda} a_i)^{\betap-1} M_{\lambda} ^{1/2} a_i a_i^\top M_{\lambda}^{1/2} \otimes M_{\lambda}^{1/2} a_i a_i^{\top} M_{\lambda}^{1/2} \preceq 16\barbetap I\otimes I.
\]
Consequently,
$\nabla^2\fobj_\dual(M_\lambda)\preceq \left(1+16p\barbetap\right)\nabla^2h_\dual(M_{\lambda})$.
\end{proof}

\begin{proof}[Proof of Theorem~\ref{thm:relative-smoothness-dual-main}.]
By Lemma~\ref{lem:dual_algo_relative_smoothness_interpretation}, each iteration of Algorithm~\ref{alg:relative-smoothness-dual} can be equivalently written as
\begin{align*}
M ^{(t+1)}=\argmin_{M  \succ 0}\left\{\fobj_\dual(M ^{(t)})+\langle \nabla \fobj_\dual(M ^{(t)}),M -M ^{(t)}\rangle+L D_{h_\dual}(M ,M ^{(t)})\right\},
\end{align*}
where $M^{(t)}=(A^\top V^{(t)}A)^{-1}$. Since 
\[
\Phi_\max(v^{(0)})=\bigg(\max_{i\in[m]}\frac{\sigma_i(v^{(0)})}{[v_i^{(0)}]^{1+1/\betap}}\bigg)^{\betap}=\max_{i\in[m]}\sigma_i(v^{(0)})^{\betap}\leq 1,
\]
Lemma~\ref{lem:dual_algorithm_rhomax_control} implies that $\Phi_\max(v^{(t)})\leq 4\barbetap$ for all iterations $t$. By Lemma~\ref{lem:relative-smoothnes-between-iterations-dual}, it follows that $\fobj_\dual$ is $(1+16p\barbetap)\leq L$-smooth relative to $h_\dual$ between any two consecutive iterates $M^{(t)}$ and $M^{(t+1)}$. Moreover, since $\fobj_\dual$ is 1-strongly convex relative to $h_\dual$ by Lemma~\ref{lem:dual_relative_convex}, Proposition~\ref{prop:primal-gradient-scheme-convergence-local} yields 
\begin{align*}
    D_{h_\dual}(M_{*},M^{(\totaliters)})
    &\leq \left(1-\frac1 L\right)^{\totaliters}D_{h_\dual}(M_{*},M^{(0)})+\frac1L\sum_{t\in[\totaliters]}\left(1-\frac1 L\right)^{\totaliters-t}(\fobj_\dual(M_{*})-\fobj_\dual(M^{(t)}))\\
    &\leq \left(1-\frac1 L\right)^{\totaliters}D_{h_\dual}(M_{*},M^{(0)})\leq\frac{\bar\epsilon^2}{16} 
\end{align*}
where the second inequality uses that $M_{*}=\arg\underset{M\succ 0}{\min}\,\fobj_\dual(M)$.  Lemma~\ref{lem:bregman_divergence_to_spectral_difference} then implies $M^{(\totaliters)}\approx_{\bar\epsilon/2} M_{*}$. Consequently, $\widehat v \in \R^{m}_{>0}$ with $\widehat v_i =(a_i^\top(A^\top V^{(T)}A)^{-1}a_i)^{\betap}$ for all $i \in [m]$ satisfies
\[
    \left|\frac{\widehat v_i}{[v_*]_i}-1\right|=\left|\frac{a_i^\top M^{(\totaliters)}a_i}{a_i^\top M_{*}a_i}-1\right|\leq \frac{\bar\epsilon} 2\text{ for all }i\in [m]\,.
\]
Therefore $(1-\epsilon )\sigma_p(A)\leq \widehat w\leq (1+\epsilon)\sigma_p(A)$ since $\sigma_p(A)=v_{*}^{1+1/\betap}$ and $\widehat w=\widehat v^{1+1/\betap}$.
\end{proof}

\begin{remark}\label{rem:dual-coordinatewise-scaling-change}
By Lemma~\ref{lem:dual_algorithm_rhomax_control} and the initialization, $\Phi_{\max}(v^{(t)})\le 4\barbetap$ for every iteration $t$ in Algorithm~\ref{alg:relative-smoothness-dual}. Therefore, for every coordinate $i\in[m]$,
\[
\left|\frac{v_i^{(t+1)}}{v_i^{(t)}}-1\right|
=
\frac{|\Phi_i(v^{(t)})-1|}{L}
\le
\frac{4\barbetap}{32p\barbetap}
=
\frac{1}{8p}
\le \frac14 .
\]
Consequently, the diagonal scaling $D^{(t)}=(V^{(t)})^{1/2}$ used in each leverage score computation satisfies
\[
\sqrt{\frac34}\,D_{ii}^{(t)}
\le
D_{ii}^{(t+1)}
\le
\sqrt{\frac54}\,D_{ii}^{(t)},
\quad \forall i\in[m].
\]
\end{remark}

\section{A relative smoothness algorithm using \texorpdfstring{$\fobj_\primal$}{Fvec}}\label{sec:applying_rs_to_primal}

In this section, we present an algorithm that computes $\epsilon$-estimates of Lewis weights by approximately solving \eqref{eq:vec-opt-problem} via the locally relatively smooth gradient descent framework in Section~\ref{sec:relative_smoothness_framework}. Algorithm~\ref{alg:relative-smoothness} has a slightly worse logarithmic dependence than Algorithm~\ref{alg:relative-smoothness-dual}, but it demonstrates the broader applicability of our relative smoothness framework and gives a streamlined analysis of the potential function studied in \cite{fazel22}. 
Throughout this section, we set $\overline \rho = 4\barbetap$, and define $h_{\primal}(v)\coloneqq\frac{1}{1 + \alphap} \sum_{i\in[m]}v_i^{1 + \alphap}$.

\begin{algorithm}[htbp!]
\DontPrintSemicolon
\caption{High-precision algorithm using $\fobj_\primal$}\label{alg:relative-smoothness}
\KwIn{non-degenerate $A \in \mathbb{R}^{m \times n}$, $p>2$, accuracy $\epsilon \in (0, \frac14]$
}

Set $\hat\epsilon=\frac{\epsilon}{\barbetap n}\min\left\{\frac{1}{96(p-2)^2(4p-7)^2},\frac{1}{50}\right\}$, $\Delta=\min\left\{\frac{\hat\epsilon^3n}{384\baralphap},\frac{\hat\epsilon^2\alphap^3}{27\times 10^3}\right\}$, $L=32p\barbetap$, $\totaliters=2L\max\left\{\ln\big(\frac{4m}{\Delta}\big),4\ln\big(\frac5{\alphap\hat\epsilon}\big)\right\}$, and $v^{(0)}_i=1$ for all $i\in[m]$. \;

\For{$t= 0, 1, \dotsc, \totaliters-1$}{
$v_i^{(t+1)}\leftarrow\Big(1+\frac{\rho_i(v^{(t)})-1}{L}\Big)^{1/\alphap}v_i^{(t)},\quad\forall i\in[m]$\label{lin:fobj-rs-update}\;
}	

Return $\hw \in \R^m_{>0}$, where $\hw_i = (a_i^\top ( A^\top  V^{(\totaliters)}  A)^{-1} a_i)^{1+1/\alphap}$. 
\end{algorithm}

\theovec*

To prove Theorem~\ref{thm:primal-algorithm-guarantee}, we first present several properties of $\fobj_\primal$ in Section~\ref{sec:fobj_primal_properties}. We then show in Section~\ref{sec:primal_alg_func_value_decrease} that the function value gap becomes sufficiently small after half of the iterations. Finally, in Section~\ref{sec:primal_alg_one_sidedness_last_iterate}, we establish that the final iterate $v^{(T)}$ of Algorithm~\ref{alg:relative-smoothness} gives a one-sided $\hat\epsilon$-approximate Lewis weight vector, which can be converted into an $\epsilon$-estimate using Theorem~\ref{thm:one-sided-to-multiplicative}. We present \Cref{alg:relative-smoothness} with the parameter $\eps$ as input since the main purpose of \Cref{alg:relative-smoothness} is to compute $\eps$-estimates of $\sigma_p(A)$, but we point out that the one-sided $\hat \eps$-approximation of \Cref{thm:primal_algorithm_one_sided_guarantee} holds for any $\hat \eps \in (0,\frac{1}{4}]$.

\subsection{Properties of \texorpdfstring{$\fobj_\primal$}{Fvec}}\label{sec:fobj_primal_properties}
Here we present several properties of $\fobj_\primal$, including explicit formulas for its gradient and Hessian, an upper bound on its function value gap for any $v\in\R^{m}_{\geq 0}$, and the fact that it is convex and locally smooth relative to $h_{\primal}$. We will use the projection matrix $P(v)\coloneqq V^{1/2}A(A^\top VA)^{-1}A^\top V^{1/2}$, and write $P(v)^{(2)}$ for the Schur product (entry-wise product) of $P(v)$ with itself. 

\begin{lemma}[Gradient and Hessian, Lemma 3 of \cite{fazel22}]\label{lem_gradAndHess}
For any $v\in \R^m_{>0}$, the gradient and Hessian of $\fobj_\primal$ have the following expressions:
$$
\big[ \nabla \fobj_\primal(v)\big]_i = v_i^{-1} \cdot (v_i^{1+\alphap} - \sigma_i(v)),\qquad\nabla^2 \fobj_\primal(v) =  V^{-1} P(v)^{(2)} V^{-1} + \alphap  V^{\alphap-1}.
$$
\end{lemma}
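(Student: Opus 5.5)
We compute the gradient and Hessian of $\fobj_\primal(v) = -\logdet(A^\top V A) + \frac{1}{1+\alphap}\ones^\top v^{1+\alphap}$ directly, by differentiating the two terms separately. The second term is separable. Its gradient has entries $v_i^{\alphap}$, and its Hessian is the diagonal matrix $\alphap V^{\alphap-1}$. Combined with the first term, this gradient entry can be written as $v_i^{-1}\cdot v_i^{1+\alphap}$, which explains the form of the claimed gradient.

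For the log-determinant term, set $G(v) \defeq A^\top V A$, so that $\partial G/\partial v_i = a_i a_i^\top$. Using $\frac{\d}{\d t}\logdet(G+tH) = \Tr(G^{-1}H)$, we get
\[
\frac{\partial}{\partial v_i}\big(-\logdet G(v)\big) = -a_i^\top G^{-1} a_i = -v_i^{-1}\sigma_i(v).
\]
Here $\sigma_i(v) \defeq \sigma_i(V^{1/2}A) = v_i\, a_i^\top (A^\top VA)^{-1} a_i$, which is the notation already used in the definition of $\rho_i$. Adding the two contributions gives $[\nabla\fobj_\primal(v)]_i = v_i^{-1}(v_i^{1+\alphap} - \sigma_i(v))$.

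For the Hessian we differentiate $-a_i^\top G^{-1}a_i$ with respect to $v_j$. By $\partial G^{-1} = -G^{-1}(\partial G)G^{-1}$, this derivative equals $(a_i^\top G^{-1} a_j)^2$. Next we relate it to the projection matrix. We have $P(v)_{ij} = v_i^{1/2} v_j^{1/2} a_i^\top G^{-1} a_j$, so
\[
(a_i^\top G^{-1}a_j)^2 = v_i^{-1} v_j^{-1} P(v)_{ij}^2 = \big[V^{-1}P(v)^{(2)}V^{-1}\big]_{ij}.
\]
Adding the diagonal term $\alphap V^{\alphap-1}$ yields the stated Hessian.

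No step is a real obstacle here. The only care needed is in the bookkeeping of the $v_i^{1/2}$ factors, which convert between $a_i^\top G^{-1} a_j$ and the entries of $P(v)$, and in the convention $\sigma_i(v) = \sigma_i(V^{1/2}A)$. Alternatively, the result can simply be cited from Lemma~3 of \cite{fazel22}.
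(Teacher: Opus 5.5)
Your computation is correct: differentiating the $\logdet$ term gives $-a_i^\top (A^\top VA)^{-1}a_i = -v_i^{-1}\sigma_i(v)$, and differentiating once more gives $(a_i^\top (A^\top VA)^{-1}a_j)^2 = [V^{-1}P(v)^{(2)}V^{-1}]_{ij}$. The separable term contributes $v_i^{\alphap}$ to the gradient and $\alphap V^{\alphap-1}$ to the Hessian.

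The paper gives no proof of its own and simply imports these formulas from Lemma~3 of \cite{fazel22}. Your direct differentiation is the standard verification that this citation stands in for. The only unstated detail is that $A^\top VA$ is invertible, which follows from non-degeneracy of $A$ and $v>0$.
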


\begin{lemma}[Lemma 6 of \cite{fazel22}]\label{lem:fobj_primal_gap_lower}
For any $v \in \R^m_{> 0}$, we have 
\begin{align*}
\fobj_\primal(v) -\fobj_\primal(v_{*})  \geq \frac{1}{6\baralphap}\sum_{i \in [m]}v_i^{1 + \alphap} \cdot \frac{(\rho_i(v) - 1)^2}{\rho_i(v) + 1}.
\end{align*}
\end{lemma}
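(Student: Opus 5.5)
The plan is to compare $\fobj_\primal(v)$ with $\fobj_\primal(u)$ for a well-chosen $u$, using only $\fobj_\primal(v_*)\le\fobj_\primal(u)$ for every $u\in\R^m_{>0}$. I will parametrize $u$ multiplicatively as $u_i=v_ie^{x_i}$ with $x\in\R^m$. Then it suffices to show
\[
\max_{x\in\R^m}\big[\fobj_\primal(v)-\fobj_\primal(ve^{x})\big]\ \ge\ \frac{1}{6\baralphap}\sum_{i\in[m]}v_i^{1+\alphap}\frac{(\rho_i(v)-1)^2}{\rho_i(v)+1}.
\]

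\emph{Step 1: linearize the log-determinant in log-coordinates.} Set $g(x)\defeq\logdet\big(A^\top V\mDiag(e^{x})A\big)$. I claim $g$ is convex in $x$. Its gradient is $\nabla g(x)_i=\sigma_i(ve^x)$, and its Hessian is $\mDiag(\sigma(ve^x))-P(ve^x)^{(2)}$. This Hessian is PSD because for a projection $P$ we have $P^{(2)}\preceq\mDiag(\sigma)$: the matrix is diagonally dominant, since $\sum_j P_{ij}^2=P_{ii}=\sigma_i$. Convexity then gives
\[
g(x)\ge g(0)+\sum_{i\in[m]}\sigma_i(v)x_i ,
\]
and hence
\[
\fobj_\primal(ve^x)\le \fobj_\primal(v)-\sum_{i\in[m]}\sigma_i(v)x_i+\frac{1}{1+\alphap}\sum_{i\in[m]}v_i^{1+\alphap}\big(e^{(1+\alphap)x_i}-1\big).
\]

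\emph{Step 2: optimize the separable bound.} Using $\sigma_i(v)=\rho_i(v)v_i^{1+\alphap}$, the guaranteed decrease splits into per-coordinate terms $v_i^{1+\alphap}\big[\rho_ix_i-\frac{e^{(1+\alphap)x_i}-1}{1+\alphap}\big]$. Choosing $e^{(1+\alphap)x_i}=\rho_i$ gives a per-coordinate gain of
\[
\frac{v_i^{1+\alphap}}{1+\alphap}\big(\rho_i\ln\rho_i-\rho_i+1\big).
\]

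\emph{Step 3: scalar inequality.} I need $\rho\ln\rho-\rho+1\ge c\,\frac{(\rho-1)^2}{\rho+1}$ for all $\rho>0$, with a constant $c$ such that $\frac{c}{1+\alphap}\ge\frac1{6\baralphap}$. Since $1+\alphap\le2\baralphap$, it is enough to have $c\ge 1/3$. For $\rho\ge1$, the standard bound $\ln\rho\ge\frac{2(\rho-1)}{\rho+1}$ gives $\rho\ln\rho-\rho+1\ge\frac{(\rho-1)^2}{\rho+1}$, so $c=1$ works there. For $\rho\in(0,1)$, I will check that $\phi(\rho)=\rho\ln\rho-\rho+1-\frac{(\rho-1)^2}{3(\rho+1)}$ is nonnegative. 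It vanishes to second order at $\rho=1$, since $\rho\ln\rho-\rho+1\approx(\rho-1)^2/2$ while the subtracted term is $\approx(\rho-1)^2/6$. At $\rho\to0$ it tends to $1-\tfrac13>0$. A derivative or convexity check fills in between.

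Step 3 on $(0,1)$ is the only place needing care, and it is an elementary one-variable verification. Summing over coordinates completes the argument.
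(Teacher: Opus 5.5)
Your proof is correct. The paper does not prove this lemma itself; it is imported from \cite{fazel22}. So there is no in-paper argument to compare against, and what you give is a complete, self-contained proof.

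It also fits the paper's viewpoint. Your Step~1 rests on the same matrix inequality $P(v)^{(2)}\preceq\mDiag(\sigma(v))$ that underlies Lemma~\ref{lem:fobj-relative-smoothness}. There it appears as $V^{-1}P(v)^{(2)}V^{-1}\preceq V^{-1}\Sigma V^{-1}$; in logarithmic coordinates it becomes convexity of $x\mapsto\logdet(A^\top V\mDiag(e^{x})A)$. Your test point $u_i=v_i\rho_i(v)^{1/(1+\alphap)}$ satisfies $u^{1+\alphap}=\sigma(v)$. In the coordinates $w=v^{1+\alphap}$ it is therefore exactly one step of the fixed-point map $w\mapsto\sigma(W^{\frac12-\frac1p}A)$. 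Step~2 in fact proves the stronger bound
\[
\fobj_\primal(v)-\fobj_\primal(v_*)\ \ge\ \frac{1}{1+\alphap}\sum_{i\in[m]}\Big(\sigma_i(v)\ln\frac{\sigma_i(v)}{v_i^{1+\alphap}}-\sigma_i(v)+v_i^{1+\alphap}\Big).
\]
This is a generalized KL divergence between $\sigma(v)$ and $v^{1+\alphap}$, and the stated bound follows from it.

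The one soft spot is Step~3 on $(0,1)$. Vanishing to second order at $\rho=1$ and a positive limit as $\rho\to0$ do not by themselves give nonnegativity in between, so the promised ``derivative or convexity check'' must actually be carried out. A one-line way uses $h(\rho)=\rho\ln\rho-\rho+1$, for which $h(1)=h'(1)=0$ and $h''(t)=1/t\ge1$ on $(0,1]$:
\[
h(\rho)=\int_\rho^1(t-\rho)\,h''(t)\,\mathrm{d}t\ \ge\ \frac{(1-\rho)^2}{2}\ \ge\ \frac{(1-\rho)^2}{2(1+\rho)},\qquad \rho\in(0,1].
\]
Combined with your $c=1$ bound on $[1,\infty)$, this gives $c=\frac12$ for all $\rho>0$. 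The resulting constant is $\frac{1}{2(1+\alphap)}\ge\frac{1}{4\baralphap}$, slightly better than the stated $\frac{1}{6\baralphap}$.
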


\begin{lemma}\label{lem:fobj-relative-smoothness}
$\fobj_\primal$ is 1-strongly convex relative to $h_{\primal}(v)\coloneqq\frac{1}{1 + \alphap} \sum_{i\in[m]}v_i^{1 + \alphap}$. Moreover,
\[
\nabla^2 \fobj_\primal(v) \preceq \left(1 + \frac{\rho_{\max}(v)}{\alphap} \right) \nabla^2 h_{\primal}(v),\quad \forall v \in \R^m_{>0}.
\]
\end{lemma}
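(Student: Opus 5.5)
The plan is to argue directly from the Hessian formula of Lemma~\ref{lem_gradAndHess}, in parallel with Lemma~\ref{lem:dual_relative_convex}. First compute $\nabla^2 h_\primal(v) = \alphap V^{\alphap-1}$, since $h_\primal$ is separable with $\frac{d^2}{dv_i^2}\frac{1}{1+\alphap}v_i^{1+\alphap} = \alphap v_i^{\alphap-1}$. Lemma~\ref{lem_gradAndHess} then gives
\[
\nabla^2\fobj_\primal(v) = V^{-1}P(v)^{(2)}V^{-1} + \nabla^2 h_\primal(v).
\]
Relative $1$-strong convexity reduces to $V^{-1}P(v)^{(2)}V^{-1}\succeq 0$. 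This holds because $P(v)\succeq 0$ and the Schur product theorem gives $P(v)^{(2)}\succeq 0$. Equivalently, $-\logdet(A^\top VA)$ is convex in $v$, so $\fobj_\primal - h_\primal$ is convex. Since both functions are twice differentiable on the open orthant, this Hessian inequality is the relative strong convexity inequality.

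For the upper bound it suffices to show $P(v)^{(2)} \preceq \mDiag(\sigma(v))$, where $\sigma_i(v) = P(v)_{ii}$ is the $i$th leverage score of $V^{1/2}A$. I would verify this through the quadratic form. For $x\in\R^m$,
\[
x^\top P^{(2)}x = \sum_{i,j}x_ix_jP_{ij}^2 = \Tr\big(P\,\mDiag(x)\,P\,\mDiag(x)\big).
\]
Write $X=\mDiag(x)$. Since $P$ is a projection, $P=P^2$, and Cauchy--Schwarz for the Frobenius inner product gives
\[
\Tr(PXPX)=\langle PXP, PX\rangle_F \le \|PXP\|_F\,\|PX\|_F \le \|PX\|_F^2 = \Tr(XPX) = \sum_i P_{ii}x_i^2,
\]
where $\|PXP\|_F\le \|PX\|_F$ holds because $P$ is a contraction. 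This is the standard fact $P^{(2)}\preceq \mDiag(P)$ for orthogonal projections; it is essentially the same computation that underlies the $\Phi_{\max}$ bound in the dual setting.

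Combining these, $V^{-1}P^{(2)}V^{-1}\preceq \mDiag(\sigma_i(v)/v_i^2)$. By definition $\sigma_i(v) = \rho_i(v)v_i^{1+\alphap}$, so
\[
\frac{\sigma_i(v)}{v_i^2} = \rho_i(v)\,v_i^{\alphap-1} \le \frac{\rho_{\max}(v)}{\alphap}\cdot \alphap v_i^{\alphap-1}.
\]
This yields $V^{-1}P^{(2)}V^{-1}\preceq \frac{\rho_{\max}(v)}{\alphap}\nabla^2 h_\primal(v)$, and adding $\nabla^2 h_\primal(v)$ to both sides gives the claimed bound.

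The only nontrivial step is $P^{(2)}\preceq\mDiag(P)$. I expect it to go through by the Frobenius Cauchy--Schwarz argument above; alternatively, one can write $P^{(2)}$ as a principal-type submatrix of $P\otimes P\preceq I\otimes P$. The rest is bookkeeping with the definition of $\rho_i$.
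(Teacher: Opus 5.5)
Your proposal is correct and follows the same route as the paper: you split $\nabla^2\fobj_\primal(v)=V^{-1}P(v)^{(2)}V^{-1}+\alphap V^{\alphap-1}$, bound $0\preceq P(v)^{(2)}\preceq \mDiag(\sigma(v))$, and rewrite $\sigma_i(v)/v_i^2=\rho_i(v)\,v_i^{\alphap-1}$. The only difference is that you actually justify $P(v)^{(2)}\preceq\mDiag(\sigma(v))$, via the Frobenius Cauchy--Schwarz argument or the Kronecker principal-submatrix argument (both valid), whereas the paper uses this fact without proof.
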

\begin{proof}
The gradient and Hessian of $h_{\primal}$ satisfy
\begin{align*}
[\nabla h_{\primal}(v)]_i=v_i^{\alphap},\qquad\nabla^2 h_{\primal}(v)=\alphap\cdot V^{\alphap-1}.
\end{align*}
Note that 
\[
\mZero \preceq
\nabla^2 \left(-\logdet( A^\top  V  A)\right) 
    =  V^{-1} \mP(v)^{(2)}  V^{-1}
    \preceq   V^{-1} \mSigma  V^{-1}
    = \mDiag( V^{\alphap - 1} \rho),
\]
where we denote $\mSigma\coloneqq\mDiag(\sigma(v))$. Then by Lemma~\ref{lem_gradAndHess}, we can conclude that
\[
\nabla^2 h_{\primal}(v) \preceq \nabla^2 \fobj_\primal(v) \preceq \left(1 + \frac{\rho_{\max}(v)}{\alphap} \right) \nabla^2 h_{\primal}(v)\,,
\]
which shows that $\fobj_{\primal}$ is $1$-strongly convex relative to $h_{\primal}$.
\end{proof}

\subsection{Function value decrease in Algorithm~\ref{alg:relative-smoothness}}\label{sec:primal_alg_func_value_decrease}
Here we show that the value of $\fobj_\primal(v^{(t)})-\fobj_\primal(v_{*})$ is at most $\Delta$ after $t\geq \totaliters/2$ iterations in Algorithm~\ref{alg:relative-smoothness}.

\begin{lemma}
For any iteration $t$ of Algorithm~\ref{alg:relative-smoothness}, we have
\begin{align*}
v^{(t+1)}=\argmin_{v\in\mathbb{R}^m_{> 0}}\big\{\fobj_\primal(v^{(t)})+\big\langle\nabla \fobj_\primal(v^{(t)}),v-v^{(t)}\big\rangle+LD_{h_{\primal}}(v,v^{(t)})\big\}.
\end{align*}
\end{lemma}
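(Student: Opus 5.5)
The strategy is to write down the first-order optimality condition of the strictly convex subproblem and check that it is met by the vector produced in Line~\ref{lin:fobj-rs-update}. This mirrors the proof of Lemma~\ref{lem:dual_algo_relative_smoothness_interpretation}.

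First, the objective
\[
g(v)\coloneqq \fobj_\primal(v^{(t)})+\langle\nabla \fobj_\primal(v^{(t)}),v-v^{(t)}\rangle+LD_{h_\primal}(v,v^{(t)})
\]
is separable across coordinates. Up to additive constants, it equals $\langle\nabla \fobj_\primal(v^{(t)}),v\rangle + L h_\primal(v) - L\langle\nabla h_\primal(v^{(t)}),v\rangle$. Here $h_\primal(v)=\frac{1}{1+\alphap}\sum_i v_i^{1+\alphap}$ is strictly convex on $\R^m_{>0}$ because $\alphap>0$. Hence $g$ is strictly convex, and any stationary point in $\R^m_{>0}$ is its unique minimizer.

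Next, we use Lemma~\ref{lem_gradAndHess} and $[\nabla h_\primal(v)]_i=v_i^{\alphap}$. The $i$-th coordinate of the stationarity equation $\nabla g(v)=0$ reads
\[
[v_i^{(t)}]^{-1}\big([v_i^{(t)}]^{1+\alphap}-\sigma_i(v^{(t)})\big)+L\big(v_i^{\alphap}-[v_i^{(t)}]^{\alphap}\big)=0 .
\]
Write $\sigma_i(v^{(t)})=\rho_i(v^{(t)})[v_i^{(t)}]^{1+\alphap}$ and divide by $[v_i^{(t)}]^{\alphap}$. This gives $1-\rho_i(v^{(t)})+L\big((v_i/v_i^{(t)})^{\alphap}-1\big)=0$, that is,
\[
v_i=\Big(1+\frac{\rho_i(v^{(t)})-1}{L}\Big)^{1/\alphap}v_i^{(t)} .
\]
This is exactly the update in Line~\ref{lin:fobj-rs-update}.

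The only point needing care is that this stationary point actually lies in the domain $\R^m_{>0}$. Since $\rho_i(v^{(t)})>0$ and $L>1$, we have $1+\frac{\rho_i-1}{L}>1-\frac1L>0$, so the candidate is strictly positive. An alternative argument also works: $g(v)\to\infty$ as $v_i\to\infty$, because the term $v_i^{1+\alphap}$ dominates. As $v_i\to 0^+$ the derivative is $\big([\nabla\fobj_\primal(v^{(t)})]_i - L[v_i^{(t)}]^{\alphap}\big)+Lv_i^{\alphap}$, whose constant part equals $[v_i^{(t)}]^{\alphap}(1-\rho_i-L)<0$, so the minimum is not attained at the boundary. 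Either way, strict convexity then identifies $v^{(t+1)}$ as the argmin. I expect no real obstacle; the main thing is to keep the exponent bookkeeping with $\rho_i$ consistent.
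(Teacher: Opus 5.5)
Your proposal is correct and follows the paper's argument: the paper also shows that the gradient of the subproblem vanishes at $v^{(t+1)}$, i.e.\ $L\nabla h_\primal(v^{(t+1)})+\nabla\fobj_\primal(v^{(t)})-L\nabla h_\primal(v^{(t)})=0$, and then uses convexity to identify $v^{(t+1)}$ as the unique minimizer. Your additions fill in details the paper leaves implicit: the explicit coordinatewise solve, the positivity check $1+\frac{\rho_i(v^{(t)})-1}{L}>0$, and the appeal to strict convexity of $h_\primal$, which is what actually gives uniqueness.
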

\begin{proof}
Since the function
\[
\fobj_\primal(v^{(t)})+\langle\nabla \fobj_\primal(v^{(t)}),v-v^{(t)}\rangle+LD_{h_{\primal}}(v,v^{(t)})
\]
is convex, it has one unique minimizer. Given that
\begin{align*}
&\nabla\left(\fobj_\primal(v^{(t)})+\big\langle\nabla \fobj_\primal(v^{(t)}),v-v^{(t)}\big\rangle+LD_{h_{\primal}}(v,v^{(t)})\right)\Big|_{v=v^{(t+1)}}\\
&\qquad\quad= L\nabla h_{\primal}(v^{(t+1)})+\nabla\fobj_\primal(v^{(t)})-L\nabla h_{\primal}(v^{(t)})=0,
\end{align*}
we can conclude that
\[
v^{(t+1)}=\argmin_{v\in\mathbb{R}^m_{>0}}\big\{\fobj_\primal(v^{(t)})+\big\langle\nabla \fobj_\primal(v^{(t)}),v-v^{(t)}\big\rangle+LD_{h_{\primal}}(v,v^{(t)})\big\}.
\]
\end{proof}

The following two lemmas establish that $\fobj_\primal$ is relatively smooth with respect to $h_{\primal}$ between each pair of consecutive iterates $v^{(t)}$ and $v^{(t+1)}$ (and are analogs of Lemma~\ref{lem:dual_algorithm_rhomax_control} and \ref{lem:relative-smoothnes-between-iterations-dual} for $\fobj_\primal$).

\begin{lemma}\label{lem:algorithm_rhomax_control}
    Let $L \geq 4\barbetap$. For any iteration $t$ in Algorithm~\ref{alg:relative-smoothness}, if $\rho_\max(v^{(t)}) \leq 4\barbetap$, then $\rho_\max(v^{(t+1)}) \leq 4\barbetap$. Consequently, $\rho_\max(v^{(t)}) \leq 4\barbetap$ for all iterations $t$.
\end{lemma}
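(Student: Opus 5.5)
The plan is to follow the proof of Lemma~\ref{lem:dual_algorithm_rhomax_control}, adapting it to the update in Line~\ref{lin:fobj-rs-update}. That update raises the multiplicative factor to the power $1/\alphap$, so $\rho_i$ and $v_i^{\alphap}$ will play the roles that $\Phi_i$ and $v_i$ played before.

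\textbf{Per-coordinate recursion.} Write $v=v^{(t)}$ and $v^+=v^{(t+1)}$. The update gives $(v_i^+)^{\alphap}=\big(1+\frac{\rho_i(v)-1}{L}\big)v_i^{\alphap}$. Since $\rho_i(v)>0$, this yields $(v_i^+)^{\alphap}\ge (1-\frac1L)v_i^{\alphap}$, that is, $v_i^+\ge (1-\frac1L)^{1/\alphap}v_i=(1-\frac1L)^{\betap}v_i$ for every $i$. Consequently $A^\top V^+A\succeq (1-\frac1L)^{\betap}A^\top VA$, and hence
\[
a_i^\top(A^\top V^+A)^{-1}a_i\le \Big(1-\frac1L\Big)^{-\betap}a_i^\top(A^\top VA)^{-1}a_i .
\]
Dividing by $(v_i^+)^{\alphap}$ gives
\[
\rho_i(v^+)\le \Big(1-\frac1L\Big)^{-\betap}\Big(1+\frac{\rho_i(v)-1}{L}\Big)^{-1}\rho_i(v)=\psi(\rho_i(v)),
\]
with the same function $\psi(x)=(1-\frac1L)^{-\betap}\big(1+\frac{x-1}{L}\big)^{-1}x$ as in Lemma~\ref{lem:dual_algorithm_rhomax_control}. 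For $L>1$, $\psi$ is increasing on $x>0$, since $x/(L-1+x)$ is increasing. Therefore $\rho_{\max}(v^+)\le\psi(4\barbetap)$.

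\textbf{Bounding $\psi(4\barbetap)$.} This is the only step needing care, and it is the same numerical bound as in the dual case. Using $4\barbetap\ge2$ gives $4\barbetap-1\ge 4\barbetap/2$. Using $\frac1L\le\frac1{4\barbetap}\le\frac1{2\betap}$ gives $(1-\frac1L)^{-\betap}\le(1-\frac{\betap}{L})^{-1}$. Combining these,
\[
\psi(4\barbetap)\le\Big(1-\frac{\betap}{L}\Big)^{-1}\Big(1+\frac{2\barbetap}{L}\Big)^{-1}4\barbetap .
\]
Set $x=\betap/L\le\frac14$. Then $\frac{2\barbetap}{L}\ge 2x$, so the right-hand side is at most $(1-x)^{-1}(1+2x)^{-1}4\barbetap\le 4\barbetap$, because $(1-x)(1+2x)=1+x-2x^2\ge1$ for $0\le x\le\frac12$.

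One point must be checked: the claim $(1-y)^{-\betap}\le(1-\betap y)^{-1}$ for $0\le y\le 1/(2\betap)$. For $\betap\ge1$ this is Bernoulli's inequality $(1-y)^{\betap}\ge 1-\betap y$. For $\betap<1$ the exponent is below one and the direction must be rechecked. In that case I would bound $(1-\frac1L)^{-\betap}\le(1-\frac1L)^{-1}$ directly and rerun the last step with $x=1/L\le\frac14$, using $\frac{2\barbetap}{L}=\frac2L=2x$.

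\textbf{Induction.} At initialization $v^{(0)}=\ones$, so $\rho_i(v^{(0)})=\sigma_i(A)\le1\le4\barbetap$. Induction on $t$ then gives $\rho_{\max}(v^{(t)})\le4\barbetap$ for all $t$. The hypothesis $L\ge4\barbetap$ holds since $L=32p\barbetap$.
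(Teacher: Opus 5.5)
Your proposal is correct and follows the paper's proof essentially step by step: the coordinatewise bound $v_i^{(t+1)}\ge(1-\frac1L)^{\betap}v_i^{(t)}$, the resulting inequality $\rho_i(v^{(t+1)})\le\psi(\rho_i(v^{(t)}))$ with the same monotone $\psi$ as in Lemma~\ref{lem:dual_algorithm_rhomax_control}, the numerical bound on $\psi(4\barbetap)$, and induction from $v^{(0)}=\ones$. Your separate handling of $\betap<1$ is necessary, not just cautious: by concavity $(1-y)^{-\betap}\ge(1-\betap y)^{-1}$ in that regime, so the inequality asserted in the paper's proof of Lemma~\ref{lem:dual_algorithm_rhomax_control} (to which this proof defers) fails there, and your bound via $(1-\frac1L)^{-1}$ with $x=1/L$ is the right repair.
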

\begin{proof}
    The proof strategy is similar to Lemma~\ref{lem:dual_algorithm_rhomax_control}. First, note that $\rho_{\max}(v^{(0)}) \le 1 \le 4\barbetap$. We then show that for any $t$ satisfying $\rho_{\max}(v^{(t)}) \le 4\barbetap$, it also holds that $\rho_{\max}(v^{(t+1)}) \le 4\barbetap$. In particular, by the update formula in Line~\ref{lin:fobj-rs-update}, $v_i^{(t+1)} \geq \big(1-\frac{1}{L}\big)^{1/\alphap} v_i^{(t)}=\big(1-\frac{1}{L}\big)^{\betap} v_i^{(t)}$, which leads to
    \begin{align*}
    \rho_i(v^{(t+1)}) &= \left(\frac{v_i^{(t)}}{v_i^{(t+1)}}\right)^{\alphap}\cdot \frac{a_i^\top(A^\top V^{(t+1)}A)^{-1} a_i}{(v_i^{(t)})^{\alphap}}\nonumber \\
    &\leq \left(1-\frac{1}{L}\right)^{-\betap}\left(\frac{v_i^{(t)}}{v_i^{(t+1)}}\right)^{\alphap} \cdot \frac{a_i^\top(A^\top V^{(t)}A)^{-1} a_i}{(v_i^{(t)})^{\alphap}}\nonumber \\
    &= \left(1-\frac{1}{L}\right)^{-\betap}\left(1+\frac{\rho_i(v^{(t)})-1}{L}\right)^{-1} \cdot \rho_i(v^{(t)}),
    \end{align*}
    and the remainder of the proof proceeds analogously to proof of Lemma~\ref{lem:dual_algorithm_rhomax_control}.
    \end{proof}
    \begin{lemma}\label{lem:algorithm_rhomax_control_lambda}
For any iteration $t$ in Algorithm~\ref{alg:relative-smoothness} and any $\lambda \in [0,1]$, if $\rho_\max(v^{(t)}) \leq 4\barbetap$ and $L \geq 4\barbetap$,
    \[
    \rho_\max((1-\lambda) v^{(t)} + \lambda v^{(t+1)}) \leq 16\barbetap. 
    \]
\end{lemma}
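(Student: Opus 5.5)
The plan is to bound $\rho_i(v_\lambda)$, where $v_\lambda\defeq(1-\lambda)v^{(t)}+\lambda v^{(t+1)}$, by comparing both its numerator and its denominator with the corresponding quantities at $v^{(t)}$. Recall that
\[
\rho_i(v_\lambda)=\frac{a_i^\top(A^\top V_\lambda A)^{-1}a_i}{(v_\lambda)_i^{\alphap}},
\]
so it suffices to show three things:
\begin{itemize}
\item a spectral lower bound $A^\top V_\lambda A\succeq c\,A^\top V^{(t)}A$;
\item a coordinatewise lower bound $(v_\lambda)_i\ge c'\,v_i^{(t)}$;
\item the resulting estimate $\rho_i(v_\lambda)\le c^{-1}(c')^{-\alphap}\rho_i(v^{(t)})\le c^{-1}(c')^{-\alphap}\,4\barbetap$.
\end{itemize}
The goal is to get $c^{-1}(c')^{-\alphap}\le 4$.

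\textbf{Step 1 (lower bound on each coordinate).} By the update rule in Line~\ref{lin:fobj-rs-update},
\[
v_i^{(t+1)}=\Big(1+\tfrac{\rho_i(v^{(t)})-1}{L}\Big)^{\betap}v_i^{(t)}\ge\Big(1-\tfrac1L\Big)^{\betap}v_i^{(t)}.
\]
Because $v_\lambda$ is a convex combination and $(1-1/L)^{\betap}\le 1$, the same bound holds for it: $(v_\lambda)_i\ge(1-\tfrac1L)^{\betap}v_i^{(t)}$ for every $i$ and every $\lambda$. This is the only place we need a lower bound on $v_\lambda$. Summing over $i$ with the weights $a_ia_i^\top$ gives
\[
A^\top V_\lambda A\succeq\Big(1-\tfrac1L\Big)^{\betap}A^\top V^{(t)}A .
\]
Hence both $c$ and $c'$ can be taken equal to $(1-1/L)^{\betap}$.

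\textbf{Step 2 (combining).} With these choices,
\[
\rho_i(v_\lambda)\le\Big(1-\tfrac1L\Big)^{-\betap}\Big(1-\tfrac1L\Big)^{-\betap\alphap}\rho_i(v^{(t)})=\Big(1-\tfrac1L\Big)^{-(\betap+1)}\rho_i(v^{(t)}).
\]
It remains to check that $(1-1/L)^{-(\betap+1)}\le 4$ under the hypothesis $L\ge4\barbetap$. Since $\betap+1\le 2\barbetap$, it is enough to show $(1-1/L)^{-2\barbetap}\le 4$. Using $(1-x)^{-k}\le e^{kx/(1-x)}$ with $x=1/L\le 1/(4\barbetap)\le 1/4$, we get
\[
\Big(1-\tfrac1L\Big)^{-2\barbetap}\le \exp\!\Big(\tfrac{2\barbetap}{L}\cdot\tfrac43\Big)\le e^{2/3}<4 .
\]
Multiplying by $\rho_i(v^{(t)})\le 4\barbetap$ then gives $\rho_\max(v_\lambda)\le16\barbetap$.

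The main point of care is the exponent bookkeeping in Step 2. The denominator of $\rho_i$ involves $v_i^{\alphap}$ while the update acts through the power $\betap=1/\alphap$, so these cancel to a single factor $(1-1/L)^{-1}$. The leverage-score numerator, by contrast, contributes the factor $(1-1/L)^{-\betap}$, which could be large when $p$ is large. This is why the hypothesis $L\ge 4\barbetap$, rather than merely $L\ge4$, is exactly what is needed.

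No upper bound on $v^{(t+1)}$ is needed, because increasing $v$ only decreases the numerator $a_i^\top(A^\top V A)^{-1}a_i$. The proof never uses where $\rho_i(v^{(t)})$ lies beyond its upper bound of $4\barbetap$.
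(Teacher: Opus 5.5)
Your proof is correct and follows essentially the same route as the paper. You use the update rule to get a coordinatewise lower bound $u_i \ge (1-\frac1L)^{\betap} v_i^{(t)}$; this gives the spectral bound on $A^\top U A$ for the numerator and, via $\alphap\betap=1$, the factor $(1-\frac1L)^{-1}$ for the denominator. The only difference is that you apply the uniform lower bound to the denominator directly instead of splitting into the cases $\rho_i(v^{(t)})>1$ and $\rho_i(v^{(t)})\le 1$ as the paper does, which is slightly cleaner and yields the sharper constant $e^{2/3}$ in place of $4$.
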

\begin{proof}
For any $\lambda \in [0,1]$, let $u = (1-\lambda) v^{(t)} + \lambda v^{(t+1)}$. By the update formula in Line~\ref{lin:fobj-rs-update}, we have $u_i = \left(1+ \lambda\left(\big(1+\frac{\rho_i(v^{(t)})-1}{L}\big)^{1/\alphap}-1\right)\right) v_i^{(t)}$ for all $i \in [m]$, and therefore 
\[
u_i \geq \left(1+ \lambda\left(\left(1-\frac{1}{L}\right)^{1/\alphap}-1\right)\right) v_i^{(t)}.
\]
Since $\betap=1/\alphap$, we have
\begin{align}
    \rho_i(u) &= \left(\frac{v_i^{(t)}}{u_i}\right)^{\alphap}\cdot \frac{a_i^\top(A^\top UA)^{-1} a_i}{(v_i^{(t)})^{\alphap}} \notag\\
    &\leq\left(1+ \lambda\left(\left(1-\frac{1}{L}\right)^{\betap}-1\right)\right)^{-1}\left(\frac{v_i^{(t)}}{u_i}\right)^{\alphap} \cdot \frac{a_i^\top(A^\top V^{(t)}A)^{-1} a_i}{(v_i^{(t)})^{\alphap}} \notag\\
    &= \left(1+ \lambda\left(\left(1-\frac{1}{L}\right)^{\betap}-1\right)\right)^{-1} \left(1+ \lambda\left(1+\frac{\rho_i(v^{(t)})-1}{L}\right)^{\betap}-1\right)^{-\alphap} \cdot \rho_i(v^{(t)}). \label{eq:rho-u-second-factor}
\end{align}
We bound the two multiplicative factors in \eqref{eq:rho-u-second-factor} separately. For the first factor, we have
\begin{align} \label{eq:firstfactor}
\left(1+ \lambda\left(\left(1-\frac{1}{L}\right)^{\betap}-1\right)\right)^{-1} \leq  \left(1+ \left(\left(1-\frac{1}{L}\right)^{\betap}-1\right)\right)^{-1} \leq 2, 
\end{align}
where the first inequality uses that $\left(1-\frac1L\right)^{\betap}-1 \leq0$, and the last inequality uses that $\left(1-\frac1L\right)^{\betap} \geq \frac12$ since $L \geq 4\barbetap$. For the second factor in \eqref{eq:rho-u-second-factor}, we distinguish two cases: $\rho_i(v^{(t)}) \leq 1$ or $\rho_i(v^{(t)}) >1$. On the one hand, when $\rho_i(v^{(t)}) > 1$, this second factor is at most $1$. On the other hand, when $\rho_i(v^{(t)}) \leq 1$, we have $\left(1+\frac{\rho_i(v^{(t)})-1}{L}\right)^{\betap}-1 \leq 0$ and therefore this second factor increases when $\lambda$ increases. This shows that 
\begin{align*}
    \left(1+ \lambda\left(\left(1+\frac{\rho_i(v^{(t)})-1}{L}\right)^{\betap}-1\right)\right)^{-\alphap} 
    \leq \left(1+ \left(\left(1-\frac{1}{L}\right)^{\betap}-1\right)\right)^{-\alphap},
\end{align*}
which is at most $2^{\alphap} \leq 2$ by \eqref{eq:firstfactor} when $\alphap\leq 1$, and is at most
\begin{align*}
    \left(1-\frac{\betap}{L}\right)^{-\alphap}\leq \left(1-\frac{1}{L}\right)^{-1}\leq 2
\end{align*}
when $\alphap>1$. Together, this shows that $\rho_i(u) \leq 4 \rho_i(v^{(t)})$ for each $i \in [m]$, which gives $\rho_\max(u)\leq 4 \rho_\max(v^{(t)}) \leq 16\barbetap$. 
\end{proof}

\begin{proposition}\label{prop:primal_alg_func_decrease}
For any iteration $t\geq T/2$ in Algorithm~\ref{alg:relative-smoothness}, we have $\fobj_\primal(v^{(t)})-\fobj_\primal(v_{*})\leq \Delta$.
\end{proposition}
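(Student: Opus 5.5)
The plan is to apply Proposition~\ref{prop:primal-gradient-scheme-convergence-local} with $f=\fobj_\primal$, $h=h_\primal$, $\mu=1$ and $L=32p\barbetap$, and then bound the initial Bregman divergence $D_{h_\primal}(v_*,v^{(0)})$ by $O(m)$.

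First I verify the hypotheses. By Lemma~\ref{lem:fobj-relative-smoothness}, $\fobj_\primal$ is $1$-strongly convex relative to $h_\primal$, and $h_\primal$ is convex on $\R^m_{>0}$. The preceding lemma shows that each iterate is exactly the update~\eqref{eqn:relative-smoothness-update}. For local smoothness between $v^{(t)}$ and $v^{(t+1)}$:
\begin{itemize}
\item Lemma~\ref{lem:algorithm_rhomax_control} (valid since $L\ge 4\barbetap$) gives $\rho_\max(v^{(t)})\le 4\barbetap$ for all $t$.
\item Lemma~\ref{lem:algorithm_rhomax_control_lambda} then gives $\rho_\max(u)\le 16\barbetap$ for every $u$ on the segment between $v^{(t)}$ and $v^{(t+1)}$.
\item Plugging this into the Hessian bound of Lemma~\ref{lem:fobj-relative-smoothness} yields $\nabla^2\fobj_\primal(u)\preceq (1+16\barbetap\betap)\nabla^2 h_\primal(u)$.
\end{itemize}
It remains to check $1+16\barbetap\betap\le L=32p\barbetap$. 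This holds because $\betap=(p-2)/2\le p$.

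Next I bound the initial divergence. Using $\nabla h_\primal(\ones)=\ones$ and $h_\primal(\ones)=m/(1+\alphap)$,
\[
D_{h_\primal}(v_*,\ones)=\frac{1}{1+\alphap}\sum_i [v_*]_i^{1+\alphap}-\frac{m}{1+\alphap}-\sum_i([v_*]_i-1).
\]
Here $\sum_i [v_*]_i^{1+\alphap}=\|\sigma_p(A)\|_1=n$, and $[v_*]_i\ge 0$. Hence
\[
D_{h_\primal}(v_*,\ones)\le \frac{n-m}{1+\alphap}+m\le m.
\]
The same bound also follows crudely from $0<[v_*]_i\le 1$.

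Finally I apply the function-value bound of Proposition~\ref{prop:primal-gradient-scheme-convergence-local} with $\mu=1$:
\[
\fobj_\primal(v^{(t)})-\fobj_\primal(v_*)\le \frac{(1-1/L)^t}{1-(1-1/L)^t}\,m .
\]
For $t\ge T/2\ge L\ln(4m/\Delta)$ we have $(1-1/L)^t\le e^{-t/L}\le \Delta/(4m)\le 1/2$. The denominator is therefore at least $1/2$, and the gap is at most $2m\cdot\Delta/(4m)=\Delta/2\le\Delta$.

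I expect no real obstacle. The only delicate points are confirming the constant $1+16\barbetap\betap\le L$, and noting that the proposition's hypothesis is needed only between consecutive iterates, which is exactly what Lemmas~\ref{lem:algorithm_rhomax_control} and~\ref{lem:algorithm_rhomax_control_lambda} supply.
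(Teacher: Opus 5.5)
Your proposal is correct and follows the paper's proof essentially step for step. You bound $\rho_{\max}$ by $4\barbetap$ at the iterates and by $16\barbetap$ on the segments via Lemmas~\ref{lem:algorithm_rhomax_control} and~\ref{lem:algorithm_rhomax_control_lambda}, deduce local relative smoothness with constant at most $L$ from Lemma~\ref{lem:fobj-relative-smoothness}, and then apply the function-value bound of Proposition~\ref{prop:primal-gradient-scheme-convergence-local} with $\mu=1$ and an $O(m)$ bound on $D_{h_{\primal}}(v_*,\ones)$. The differences are cosmetic: you use the sharper constant $1+16\barbetap\betap$ where the paper writes $1+16p\barbetap$, you get $D_{h_{\primal}}(v_*,\ones)\le m$ rather than $2m$, and you spell out the final arithmetic via $T/2\ge L\ln(4m/\Delta)$, which the paper leaves implicit.
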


\begin{proof}
Given our choice of $L$, Lemma~\ref{lem:algorithm_rhomax_control} establishes that $\rho_\max(v^{(t)})\leq 4\barbetap$ for any iteration $t$ of Algorithm~\ref{alg:relative-smoothness}. Then, by Lemma~\ref{lem:fobj-relative-smoothness} and Lemma~\ref{lem:algorithm_rhomax_control_lambda}, it follows that $\fobj$ is $(1+16p\barbetap)\leq L$-smooth relative to $h_{\primal}$ between any two consecutive iterates $v^{(t)}$ and $v^{(t+1)}$, and is $1$-strongly convex relative to $h_{\primal}$. Hence, applying Proposition~\ref{prop:primal-gradient-scheme-convergence-local} gives
\begin{align*}
    \fobj_\primal(v^{(t)})-\fobj_\primal(v_{*})
    \leq \frac{(1-1/L)^t}{1-(1-1/L)^t} D_{h_{\primal}}(v_*,v^{(0)})\leq \Delta,
\end{align*}
where we used the fact that
\begin{align*}
D_{h_{\primal}}(v_*,v^{(0)})= h_{\primal}(v_*)-h_{\primal}(v^{(0)})-\langle\nabla h_{\primal}(v^{(0)}),v_*-v^{(0)}\rangle\leq 2m
\end{align*}
since $[v_*]_i\leq 1$ for any $i\in[m]$.
\end{proof}

\begin{remark}\label{rem:primal-coordinatewise-scaling-change}
By Lemma~\ref{lem:algorithm_rhomax_control} and the initialization, $\rho_{\max}(v^{(t)})\le 4\barbetap$ for every iteration $t$ in Algorithm~\ref{alg:relative-smoothness}. Therefore, for every coordinate $i\in[m]$,
\[
\left|\frac{v_i^{(t+1)}}{v_i^{(t)}}-1\right|
=
\left|
\left(1+\frac{\rho_i(v^{(t)})-1}{L}\right)^{1/\alphap}
-1
\right|
\le \frac12 .
\]
Consequently, the diagonal scaling $D^{(t)}=(V^{(t)})^{1/2}$ used in each leverage score computation satisfies
\[
\sqrt{\frac{1}{2}}\,D_{ii}^{(t)}
\le
D_{ii}^{(t+1)}
\le
\sqrt{\frac32}\,D_{ii}^{(t)},
\quad \forall i\in[m].
\]
\end{remark}

\subsection{One-sidedness property of the last iterate}\label{sec:primal_alg_one_sidedness_last_iterate}
Next, we show that the last iterate $v^{(\totaliters)}$ of Algorithm~\ref{alg:relative-smoothness} provides one-sided approximate Lewis weights. To facilitate this analysis, for any $i,j\in[m]$, we define $\gamma_{ij}$ to be the angle between the vectors $(\mA^\top V\mA)^{-1/2}v_i^{1/2}a_i$ and  $(\mA^\top V\mA)^{-1/2}v_j^{1/2}a_j$, i.e.,
\[
    \gamma_{ij}\coloneqq \arccos\bigg(\frac{a_i^\top v_i^{1/2}(A^\top VA)^{-1}v_j^{1/2}a_j}{\|(A^\top VA)^{-1/2}v_i^{1/2}a_i\|\cdot\|(A^\top VA)^{-1/2}v_j^{1/2}a_j\|}\bigg).
\]
The next three lemmas show that, for any iteration $t$ of Algorithm~\ref{alg:relative-smoothness} at which $\fobj_\primal(v^{(t)})$ is sufficiently small, the value of $|\rho_i(v^{(t)})-1|$ will decrease multiplicatively until it is below a certain threshold.

\begin{lemma}[Lemma 47 of \cite{lee2019solving}]\label{lem:sum-of-cosines}
For any vector $v\in\R^m_{>0}$ and $i\in[m]$, 
\[
\sum_{j\in[m]}\sigma_j(v)\cdot\cos^2(\gamma_{ij})=1,\qquad\forall i\in[m].
\]
\end{lemma}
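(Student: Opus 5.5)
The identity should follow from rewriting $\cos^2(\gamma_{ij})$ in terms of the projection matrix $P(v)=V^{1/2}A(A^\top VA)^{-1}A^\top V^{1/2}$ and then using $P(v)^2=P(v)$. Write $u_i\coloneqq (A^\top VA)^{-1/2}v_i^{1/2}a_i$. By definition of $\gamma_{ij}$,
\[
\cos(\gamma_{ij})=\frac{\langle u_i,u_j\rangle}{\|u_i\|\,\|u_j\|},
\]
where $\langle u_i,u_j\rangle=v_i^{1/2}v_j^{1/2}a_i^\top(A^\top VA)^{-1}a_j=P(v)_{ij}$. Also $\|u_i\|^2=v_i\,a_i^\top(A^\top VA)^{-1}a_i=P(v)_{ii}=\sigma_i(v)$. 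These quantities are positive because $A$ is non-degenerate (no zero rows) and $v>0$, so the angles are well defined.

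Substituting these expressions gives
\[
\sigma_j(v)\cos^2(\gamma_{ij})=\sigma_j(v)\,\frac{P(v)_{ij}^2}{\sigma_i(v)\sigma_j(v)}=\frac{P(v)_{ij}^2}{\sigma_i(v)}.
\]
Summing over $j$ then yields
\[
\sum_{j\in[m]}\sigma_j(v)\cos^2(\gamma_{ij})=\frac{1}{\sigma_i(v)}\sum_{j\in[m]}P(v)_{ij}^2.
\]

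Since $P(v)$ is symmetric and idempotent (it is the orthogonal projection onto the column space of $V^{1/2}A$), we have $\sum_j P(v)_{ij}P(v)_{ji}=(P(v)^2)_{ii}=P(v)_{ii}=\sigma_i(v)$, and the sum above is therefore exactly $1$.

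The only point needing care is the bookkeeping identifying $\|u_i\|^2$ with $\sigma_i(v)$ under the paper's convention $\sigma(v)=\sigma(V^{1/2}A)$; there is no genuine obstacle.
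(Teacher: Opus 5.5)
Your proof is correct and complete. The identities $\langle u_i,u_j\rangle = P(v)_{ij}$ and $\|u_i\|^2 = P(v)_{ii} = \sigma_i(v)$ match the paper's convention $\sigma(v)=\sigma(V^{1/2}A)$, which is the one used in the gradient formula and in the definition of $\theta_i$. Idempotence of $P(v)$ then gives $\sum_j P(v)_{ij}^2 = \sigma_i(v)$, so the lemma is just the row-sum identity $P(v)^{(2)}\mathbf{1} = \sigma(v)$ for the Schur square. The paper gives no proof of its own and simply imports the statement as Lemma~47 of \cite{lee2019solving}. Your argument is the standard self-contained derivation behind that citation.
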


\begin{lemma}\label{lem:contraction-strong-version}
    Let $v,v^+$ be vectors in $\R^m_{>0}$ such that $\frac{v}{2} \leq v^+\leq \frac {3v}2$ and define 
    \[
    \theta_i \coloneqq \sum_{j\in[m]} |v_j^+ -v_j| \cdot a_j^\top (\mA^\top V \mA)^{-1} a_j \cos(\gamma_{ij})^2 = \sum_{j\in[m]} \left|\frac{v_j^+}{v_j} -1\right| \cdot \sigma_j(v) \cos(\gamma_{ij})^2.
    \]
    Then for any $i$ we have 
    \[
    (1-3\theta_i) \left(\frac{v_i^+}{v_i}\right)^{-\alphap}  \leq \frac{\rho_i(v^+)}{\rho_i(v)} \leq  (1+3\theta_i)\left(\frac{v_i^+}{v_i}\right)^{-\alphap}.
    \]
\end{lemma}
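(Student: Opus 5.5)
We write $\rho_i(v^+)/\rho_i(v) = (v_i^+/v_i)^{-\alphap}\cdot \frac{a_i^\top (A^\top V^+A)^{-1}a_i}{a_i^\top (A^\top VA)^{-1}a_i}$. The factor $(v_i^+/v_i)^{-\alphap}$ already appears in the claimed bounds. It therefore suffices to show
\[
1-3\theta_i\le \frac{a_i^\top (A^\top V^+A)^{-1}a_i}{a_i^\top (A^\top VA)^{-1}a_i}\le 1+3\theta_i .
\]

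First we change coordinates. Set $H\coloneqq A^\top VA$ and $u_j\coloneqq H^{-1/2}v_j^{1/2}a_j$, so that $\|u_j\|^2=\sigma_j(v)$ and $\sum_j u_ju_j^\top=I$. Then
\[
H^{-1/2}(A^\top V^+A)H^{-1/2}=I+E,\qquad E\coloneqq\sum_j \delta_j u_ju_j^\top,\qquad \delta_j\coloneqq \frac{v_j^+}{v_j}-1\in\Big[-\tfrac12,\tfrac12\Big].
\]
With $x\coloneqq u_i/\|u_i\|$, the ratio above equals $x^\top (I+E)^{-1}x$. In this notation
\[
\theta_i=\sum_j|\delta_j|\,\|u_j\|^2\cos^2\gamma_{ij}=\sum_j|\delta_j|\,(u_j^\top x)^2=x^\top |E|_{\rm w}\,x,
\]
where $|E|_{\rm w}\coloneqq\sum_j|\delta_j|u_ju_j^\top$. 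Note that $-|E|_{\rm w}\preceq E\preceq |E|_{\rm w}$. Moreover, since $|\delta_j|\le\frac12$ and $\sum_j u_ju_j^\top=I$, we have $|E|_{\rm w}\preceq\frac12 I$, and hence $\frac12 I\preceq I+E\preceq \frac32 I$.

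Next we expand the inverse with an exact resolvent identity, $(I+E)^{-1}=I-E+E(I+E)^{-1}E$. This gives
\[
x^\top(I+E)^{-1}x = 1 - x^\top E x + x^\top E(I+E)^{-1}Ex .
\]
The linear term is controlled by $|x^\top Ex|\le x^\top|E|_{\rm w}x=\theta_i$. For the lower bound, the quadratic term is nonnegative because $(I+E)^{-1}\succ 0$, so the ratio is at least $1-\theta_i\ge 1-3\theta_i$.

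For the upper bound, $(I+E)^{-1}\preceq 2I$ gives $x^\top E(I+E)^{-1}Ex\le 2\|Ex\|^2$. We need $\|Ex\|^2\le \theta_i$, which yields a total of at most $1+3\theta_i$. This step is the main obstacle, because $E$ is indefinite and we cannot directly compare $E^2$ with $|E|_{\rm w}$. We handle it with Cauchy--Schwarz in the frame $\{u_j\}$:
\[
\|Ex\|^2=\Big\|\sum_j \delta_j (u_j^\top x)u_j\Big\|^2=\sup_{\|y\|=1}\Big(\sum_j\delta_j (u_j^\top x)(u_j^\top y)\Big)^2 .
\]
This supremum is at most
\[
\Big(\sum_j \delta_j^2 (u_j^\top x)^2\Big)\Big(\sum_j (u_j^\top y)^2\Big)=\sum_j \delta_j^2 (u_j^\top x)^2 ,
\]
using $\sum_j (u_j^\top y)^2=\|y\|^2=1$. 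Since $|\delta_j|\le\frac12$, this is at most $\frac12\sum_j|\delta_j|(u_j^\top x)^2=\frac12\theta_i$.

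So the quadratic term is at most $\theta_i$, and the ratio is at most $1+\theta_i+\theta_i\le 1+3\theta_i$. Multiplying by $(v_i^+/v_i)^{-\alphap}$ gives both claimed bounds.
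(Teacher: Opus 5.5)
Your proof is correct and follows essentially the same route as the paper. Both arguments reduce to bounding $x^\top(I+E)^{-1}x$ and split it into a linear term bounded by $\theta_i$ and a quadratic term controlled by $E^2\preceq\sum_j\delta_j^2u_ju_j^\top$; your Cauchy--Schwarz step in the frame $\{u_j\}$ is the paper's projection inequality $V^{1/2}A(A^\top VA)^{-1}A^\top V^{1/2}\preceq I$ in another form. The remaining differences are cosmetic: you use the exact resolvent identity where the paper uses the scalar bound $(1+x)^{-1}\le 1-x+2x^2$, and you use $|\delta_j|\le\frac12$ instead of $|\delta_j|\le 1$, which gives you the slightly sharper upper bound $1+2\theta_i$.
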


\begin{proof}
Note that
\begin{align*}
\rho_i(v^{(+)}) &= \left[\frac{v^{(+)}_i}{v_i} \right]^{-\alphap} \frac{1}{[v_i]^{\alphap}}
a_i^\top (\mA^\top V^{(+)} \mA)^{-1} a_i = \left[\frac{v^{(+)}_i}{v_i} \right]^{-\alphap} \frac{a_i^\top (\mA^\top V^{(+)} \mA)^{-1} a_i}{a_i^\top (\mA^\top V \mA)^{-1} a_i}\cdot\rho_i(v).
\end{align*}
Hence, it suffices to bound $\frac{a_i^\top (\mA^\top V^{(+)} \mA)^{-1} a_i}{a_i^\top (\mA^\top V \mA)^{-1} a_i}$. Denote $\Delta\coloneqq \mA^\top(V^{(+)}-V)\mA$. Then, 
\begin{align*}
\frac{a_i^\top (\mA^\top V^{(+)} \mA)^{-1} a_i}{a_i^\top (\mA^\top V \mA)^{-1} a_i}
&=\frac{a_i^\top (\mA^\top V \mA+\Delta)^{-1} a_i}{a_i^\top (\mA^\top V \mA)^{-1} a_i}\\
&=\frac{a_i^\top (\mA^\top V\mA)^{-1/2}(I+\bar{\Delta})^{-1}(\mA^\top V\mA)^{-1/2} a_i}{a_i^\top (\mA^\top V \mA)^{-1} a_i},
\end{align*}
where
\begin{align*}
\bar{\Delta}=(\mA^\top V\mA)^{-1/2}\Delta(\mA^\top V\mA)^{-1/2} = (\mA^\top V\mA)^{-1/2}\mA^\top (V^+-V) \mA (\mA^\top V\mA)^{-1/2}.
\end{align*}
From the assumption $v/2 \leq v^+$ it follows that $\bar\Delta\succeq -I/2$. The latter in turn implies  
\begin{align*}
1-\bar\Delta \preceq (I+\bar\Delta)^{-1} \preceq 1-\bar\Delta + 2\bar \Delta^2.
\end{align*}
Therefore, we have the following chain of inequalities
\begin{align} 
\begin{aligned}
&\frac{a_i^\top (\mA^\top V\mA)^{-1/2}(1-\bar\Delta)(\mA^\top V\mA)^{-1/2} a_i}{a_i^\top (\mA^\top V \mA)^{-1} a_i} 
\leq \frac{a_i^\top (\mA^\top V^{(+)} \mA)^{-1} a_i}{a_i^\top (\mA^\top V \mA)^{-1} a_i} \\
&\qquad\qquad\qquad\qquad\qquad\qquad\qquad
\leq \frac{a_i^\top (\mA^\top V\mA)^{-1/2}(1-\bar\Delta + 2\bar\Delta^2)(\mA^\top V\mA)^{-1/2} a_i}{a_i^\top (\mA^\top V \mA)^{-1} a_i}.
\end{aligned}\label{eq:key chain}
\end{align}

We proceed by separately bounding the terms that depend linearly and quadratically on $\bar\Delta$. First, for the term that depends linearly on $\bar\Delta$, we have
\begin{align}
    |a_i^\top (\mA^\top V\mA)^{-1/2}\bar\Delta(\mA^\top V\mA)^{-1/2} a_i| &=| a_i^\top (\mA^\top V\mA)^{-1} \mA^\top (V^+-V) \mA (\mA^\top V\mA)^{-1} a_i |\nonumber\\
    &\leq \sum_{j\in[m]} |v_j^+-v_j| \left(a_i^\top (\mA^\top V\mA)^{-1} a_j\right)^2 \label{eq:linbound}\\
    &\leq a_i^\top(\mA^\top V\mA)^{-1} a_i \sum_{j\in[m]} |v_j^+-v_j| a_j^\top (\mA^\top V\mA)^{-1} a_j \cos(\gamma_{ij})^2 \nonumber
\end{align}
Second, for the term that depends quadratically on $\bar\Delta$, we have
\begin{align*}
    \bar \Delta^2 &= (\mA^\top V\mA)^{-1/2}\mA^\top (V^+-V) \mA (\mA^\top V\mA)^{-1}\mA^\top (V^+-V) \mA (\mA^\top V\mA)^{-1/2} \\
    &= (\mA^\top V\mA)^{-1/2}\mA^\top (V^+-V)  V^{-1/2} V^{1/2}\mA(\mA^\top V\mA)^{-1}\\
    &\qquad\qquad\cdot\mA^\top V^{1/2} V^{-1/2}(V^+-V) \mA (\mA^\top V\mA)^{-1/2} \\
    &\preceq (\mA^\top V\mA)^{-1/2}\mA^\top (V^+-V)  V^{-1/2} V^{-1/2}(V^+-V) \mA (\mA^\top V\mA)^{-1/2} \\
    &= (\mA^\top V\mA)^{-1/2}\mA^\top (V^+-V)^2  V^{-1} \mA (\mA^\top V\mA)^{-1/2}.
\end{align*}
This allows us to proceed as before and obtain
\begin{align} 
\begin{aligned}
    &|a_i^\top (\mA^\top V\mA)^{-1/2}\bar\Delta^2(\mA^\top V\mA)^{-1/2} a_i| \\
    &\qquad\leq | a_i^\top (\mA^\top V\mA)^{-1} \mA^\top (V^+-V)^2  V^{-1} \mA (\mA^\top V\mA)^{-1} a_i |\\
    &\qquad\leq \sum_{j\in[m]} \frac{|v_j^+-v_j|^2}{v_j} \left(a_i^\top (\mA^\top V\mA)^{-1} a_j\right)^2 \\
    &\qquad\leq a_i^\top(\mA^\top V\mA)^{-1} a_i \sum_{j\in[m]} \frac{|v_j^+-v_j|^2}{v_j} a_j^\top (\mA^\top V\mA)^{-1} a_j \cos(\gamma_{ij})^2 \\
    &\qquad\leq a_i^\top(\mA^\top V\mA)^{-1} a_i \sum_{j\in[m]} |v_j^+-v_j| a_j^\top (\mA^\top V\mA)^{-1} a_j \cos(\gamma_{ij})^2,
\end{aligned}\label{eq:quadbound}
\end{align}
where the last inequality uses the assumption $|v_j^+-v_j| \leq v_j$. Combining the above estimates \eqref{eq:linbound} and \eqref{eq:quadbound} with \eqref{eq:key chain} concludes the proof. 
\end{proof}

\begin{lemma}\label{lem:rho-contraction}
For any iteration $t$ in Algorithm~\ref{alg:relative-smoothness} and any $i\in[m]$, we have
\begin{align*}
    \rho_i(v^{(t+1)})\leq \max\left\{1+15L\theta(v^{(t)}),\left(1-\frac{\rho_i(v^{(t)})-1}{4L}\right)\rho_i(v^{(t)})\right\}
\end{align*}
where
\[
\theta(v)\coloneqq \frac{2}{\alphap L}\sqrt{\frac{30}\alphap(\fobj_\primal(v)-\fobj_\primal(v_{*}))},\quad\forall v\in\R^{m}_{\geq 0}.
\]
\end{lemma}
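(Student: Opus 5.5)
The plan is to apply Lemma~\ref{lem:contraction-strong-version} with $v=v^{(t)}$ and $v^+=v^{(t+1)}$. Its hypothesis $\frac v2\le v^+\le\frac{3v}{2}$ holds by Remark~\ref{rem:primal-coordinatewise-scaling-change}, which relies on $\rho_{\max}(v^{(t)})\le 4\barbetap$ from Lemma~\ref{lem:algorithm_rhomax_control}. By the update rule, $(v_i^{(t+1)}/v_i^{(t)})^{-\alphap}=\big(1+\frac{\rho_i(v^{(t)})-1}{L}\big)^{-1}$. The lemma therefore gives
\[
\rho_i(v^{(t+1)})\le (1+3\theta_i)\,\frac{\rho_i(v^{(t)})}{1+\frac{\rho_i(v^{(t)})-1}{L}},
\]
where $\theta_i=\sum_j |v_j^{(t+1)}/v_j^{(t)}-1|\,\sigma_j(v^{(t)})\cos^2\gamma_{ij}$. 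The argument then has two steps: bound $\theta_i$ by $\theta(v^{(t)})$, and do a case analysis on $\rho_i(v^{(t)})$.

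\textbf{Bounding $\theta_i$.} Write $r_j=\rho_j(v^{(t)})$. Because $|(1+x)^{1/\alphap}-1|\lesssim |x|/\alphap$ for $|x|\le 1/4$ (as in Remark~\ref{rem:primal-coordinatewise-scaling-change}), we have $|v_j^{(t+1)}/v_j^{(t)}-1|\le \frac{c}{\alphap L}|r_j-1|$. Apply Cauchy--Schwarz with weights $\sigma_j\cos^2\gamma_{ij}$, whose sum is $1$ by Lemma~\ref{lem:sum-of-cosines}:
\[
\theta_i\le \frac{c}{\alphap L}\Big(\sum_j \sigma_j\cos^2\gamma_{ij}(r_j-1)^2\Big)^{1/2}\le \frac{c}{\alphap L}\Big(\sum_j \sigma_j (r_j-1)^2\Big)^{1/2}.
\]
Next use $\sigma_j=v_j^{1+\alphap}r_j$, the bound $r_j\le 4\barbetap$, and $\frac{r_j}{r_j+1}\cdot(r_j+1)\le$ const$\cdot\barbetap$. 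This gives $\sum_j\sigma_j(r_j-1)^2\lesssim \barbetap\sum_j v_j^{1+\alphap}\frac{(r_j-1)^2}{r_j+1}$, which Lemma~\ref{lem:fobj_primal_gap_lower} bounds by $6\baralphap\barbetap(\fobj_\primal(v)-\fobj_\primal(v_*))$. Since $\baralphap\barbetap=\max\{\alphap,1/\alphap\}$, the target constant $\sqrt{30/\alphap}$ suggests I must track constants more carefully, possibly keeping $\frac{r_j}{r_j+1}\le1$ and absorbing the extra factor $(r_j+1)\le 4\barbetap+1$. The aim is $\theta_i\le \theta(v^{(t)})$ up to a constant that is absorbed into the $15L$. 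I expect this constant bookkeeping, together with matching the $\alphap$-dependence in $\theta(v)$, to be the main obstacle.

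\textbf{Case analysis.} Let $r=r_i$ and $\theta=\theta(v^{(t)})$.

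If $r\le 1+cL\theta$ for a suitable constant $c$, then $\frac{r}{1+(r-1)/L}\le 1+\frac{(r-1)(L-1)}{L+r-1}\le \max\{1,r\}$. Hence $\rho_i(v^{(t+1)})\le (1+3\theta)\max\{1,r\}\le 1+15L\theta$, since $L\ge1$ and $\theta$ is small, and $3\theta(1+cL\theta)$ is absorbed.

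If instead $r>1+cL\theta$, then $\frac{r}{1+(r-1)/L}\le \big(1-\frac{r-1}{2L}\big)r$ because $(r-1)/L\le 1/4$. Also $3\theta< \frac{3(r-1)}{cL}$, so with $c\approx 12$ we get $(1+3\theta)(1-\frac{r-1}{2L})\le 1-\frac{r-1}{4L}$. This yields the second term of the max, and together the two cases prove the claim.
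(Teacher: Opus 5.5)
Your route is the paper's route: Lemma~\ref{lem:contraction-strong-version} at $v=v^{(t)}$, $v^+=v^{(t+1)}$; the bound $|v_j^{(t+1)}/v_j^{(t)}-1|\le\frac{2}{\alphap L}|\rho_j(v^{(t)})-1|$; Cauchy--Schwarz with Lemma~\ref{lem:sum-of-cosines}; Lemma~\ref{lem:fobj_primal_gap_lower}; and a case split on $\rho_i(v^{(t)})$. The derivative bound holds because $-\frac1L\le\frac{\rho_j-1}{L}<\frac{1}{8p}\le\frac{\alphap}{4}$. The range $|x|\le\frac14$ you cite is not enough when $1/\alphap$ is large. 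Your case split at the threshold $1+12L\theta$ is correct and more explicit than the paper's.

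The genuine gap is the step you defer, $\theta_i\le\theta(v^{(t)})$. It cannot be closed by absorbing a constant into $15L$. Since $\sigma_j=v_j^{1+\alphap}r_j$, we have $\sigma_j(r_j-1)^2=v_j^{1+\alphap}\frac{(r_j-1)^2}{r_j+1}\cdot r_j(r_j+1)$. The multiplier is $r_j(r_j+1)$, not $r_j$ or $r_j+1$. With only $r_j\le 4\barbetap$ available, your chain gives
\[
\theta_i\le\frac{2}{\alphap L}\sqrt{\rho_{\max}(v^{(t)})\bigl(1+\rho_{\max}(v^{(t)})\bigr)\cdot 6\baralphap\bigl(\fobj_\primal(v^{(t)})-\fobj_\primal(v_*)\bigr)}.
\]
This exceeds $\theta(v^{(t)})$ by a factor of order $\sqrt{\rho_{\max}(v^{(t)})}$, up to about $2\sqrt{\barbetap}$, which grows with $p$.

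Sharper bookkeeping with the same ingredients cannot help. Take $a_i$ orthogonal to all other rows. Then $\cos\gamma_{ij}=\delta_{ij}$, $\sigma_i(v)=1$ and $\rho_i(v)=v_i^{-1-\alphap}$. With the remaining coordinates optimal and $u=1/\rho_i(v)=\alphap/4$, we get $\fobj_\primal(v)-\fobj_\primal(v_*)=\frac{u-1-\ln u}{1+\alphap}\le\ln\frac{4}{\alphap}$. This $v$ respects $\rho_{\max}(v)\le 4\barbetap$ for $p\ge 4$. Yet $\sum_j|\rho_j(v)-1|\sigma_j(v)\cos^2\gamma_{ij}=\frac{4}{\alphap}-1$, which exceeds $\sqrt{\frac{30}{\alphap}\ln\frac{4}{\alphap}}$ for every $p\ge 20$.

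The paper's proof runs the same chain and asserts $\sqrt{\rho_{\max}}\bigl(\sum_j v_j^{1+\alphap}(\rho_j-1)^2\bigr)^{1/2}\le\sqrt{\frac{30}{\alphap}(\fobj_\primal(v^{(t)})-\fobj_\primal(v_*))}$. But Lemma~\ref{lem:fobj_primal_gap_lower} only gives $\sum_j v_j^{1+\alphap}(\rho_j-1)^2\le(1+\rho_{\max})\,6\baralphap\,(\fobj_\primal(v^{(t)})-\fobj_\primal(v_*))$. For $p\ge 4$ this is at most $\frac{30}{\alphap}(\fobj_\primal(v^{(t)})-\fobj_\primal(v_*))$. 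So the factor $\sqrt{\rho_{\max}}$ is dropped there, the same slip as in your sketch. The obstacle you flagged is therefore real.

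The natural repair is to weaken the statement rather than sharpen constants. Define $\theta'(v)\coloneqq\frac{2}{\alphap L}\sqrt{120\,\baralphap\barbetap^{2}\,(\fobj_\primal(v)-\fobj_\primal(v_*))}$, using $\rho_{\max}(1+\rho_{\max})\le 20\barbetap^2$. Then you do obtain $\theta_i\le\theta'(v^{(t)})$, and your case analysis goes through verbatim with $\theta'$. Downstream, only $\Delta$ in Algorithm~\ref{alg:relative-smoothness} must shrink by a factor polynomial in $p$ and $\alphap$, which changes $T$ only inside the logarithm.

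Separately, your Case 1 assumes $\theta$ is small without justification, and early iterates can have a large gap. You can supply this as follows. If $15L\theta\ge 4\barbetap-1$, the claim follows from $\rho_{\max}(v^{(t+1)})\le 4\barbetap$ (Lemma~\ref{lem:algorithm_rhomax_control}). Otherwise $1+12L\theta<1+4\barbetap\le L$, which is exactly what $(1+3\theta)(1+12L\theta)\le 1+15L\theta$ requires.
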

\begin{proof}
    By the choice of $L$ and Lemma~\ref{lem:algorithm_rhomax_control_lambda}, for any $i\in[m]$ we have
    \begin{align*}
    \left|\frac{v_i^{(t+1)}}{v_i^{(t)}}-1\right|
    =\left|\bigg(1+\frac{\rho_i(v^{(t)})-1}{L}\bigg)^{1/\alphap}-1\right|
    \leq \frac12.
    \end{align*}
    Then, invoking Lemma~\ref{lem:contraction-strong-version} gives
    \[
    (1-3\theta_i) \left(\frac{v_i^{(t+1)}}{v_i^{(t)}}\right)^{-\alphap}  \leq \frac{\rho_i(v^{(t+1)})}{\rho_i(v^{(t)})} \leq  (1+3\theta_i)\left(\frac{v_i^{(t+1)}}{v_i^{(t)}}\right)^{-\alphap},
    \]
    where
    \begin{align*}
        \theta_i 
        &= \sum_{j\in[m]} \bigg|\frac{v_j^{(t+1)}}{v_j^{(t)}} -1\bigg| \cdot \sigma_j(v^{(t)}) \cos(\gamma_{ij})^2\\
        &=\sum_{j\in[m]}\bigg|\bigg(1+\frac{\rho_j(v^{(t)})-1}{L}\bigg)^{1/\alphap}-1\bigg|\cdot\sigma_j(v^{(t)}) \cos(\gamma_{ij})^2\\
        &\leq \frac{2}{\alphap L}\sum_{j\in[m]}|\rho_j(v^{(t)})-1|\cdot\sigma_j(v^{(t)}) \cos(\gamma_{ij})^2,
    \end{align*}
    where the last inequality uses the fact that
    \[
    \frac{\rho_j(v^{(t)})-1}{L}
    \leq \frac{\rho_\max(v^{(t)})-1}{L}\leq \frac\alphap 4.
    \]
    By Cauchy-Schwarz inequality, we have
    \begin{align*}
        &\sum_{j\in[m]}|\rho_j(v^{(t)})-1|\cdot\sigma_j(v^{(t)}) \cos(\gamma_{ij})^2\\
        &\qquad\qquad\leq\sqrt{\sum_{j\in[m]}\sigma_j(v^{(t)})(\rho_j(v^{(t)})-1)^2}\cdot\sqrt{\sum_{j\in[m]}\sigma_j(v^{(t)})\cos^2(\gamma_{ij})}\\
        &\qquad\qquad\leq\sqrt{\sum_{j\in[m]}\sigma_j(v^{(t)})(\rho_j(v^{(t)})-1)^2}\\
        &\qquad\qquad\leq \sqrt{\rho_\max(v^{(t)})}\sqrt{\sum_{j\in[m]}[v_j^{(t)}]^{1+\alphap}(\rho_j(v^{(t)})-1)^2}\\
        &\qquad\qquad\leq \sqrt{\frac {30}\alphap(\fobj_\primal(v^{(t)})-\fobj_\primal(v_{*}))},
    \end{align*}
    where the second inequality uses Lemma~\ref{lem:sum-of-cosines} and the last inequality uses Lemma~\ref{lem:fobj_primal_gap_lower}. Hence, we have $\theta_i\leq \theta(v^{(t)})$ for all $i\in[m]$, which leads to
    \begin{align}
    \rho_i(v^{(t+1)})
    &\leq \frac{1+3\theta(v^{(t)})}{1+\frac{\rho_i(v^{(t)})-1}{L}}\cdot\rho_i(v^{(t)})\nonumber\\
    &\leq (1+3\theta(v^{(t)}))\left(1-\frac{\rho_i(v^{(t)})-1}{2L}\right)\rho_i(v^{(t)}).\label{eq:rounding-inequality}
    \end{align}
    Note that the function $\phi(x)\coloneqq \left(1-\frac{x-1}{2L}\right)x$ is monotonically increasing in $[0,1]$. Thus for any $i$ with $\rho_i(v^{(t)})\leq 1$, the value of \eqref{eq:rounding-inequality} is at most $1+3\theta(v^{(t)})$. Otherwise, we have
    \begin{align*}
        \rho_i(v^{(t+1)})\leq \max\left\{1+15L\theta(v^{(t)}),\left(1-\frac{\rho_i(v^{(t)})-1}{4L}\right)\rho_i(v^{(t)})\right\}.
    \end{align*}
\end{proof}

\begin{lemma}\label{lem:contraction-sequence-convergence}
For any $0<\betap\leq 1$ and $L\geq 1$, let $\{\zeta^{(t)}\}_{t\ge 0}$ be a sequence satisfying 
\begin{align}
    \zeta^{(t+1)}
    \leq\max\left\{1+\betap,\bigg(1-\frac{\zeta^{(t)}-1}{4L}\bigg)\zeta^{(t)}\right\},
    \quad \forall t\in\mathbb N\label{eq:recursive-upper}
\end{align}
with $0<\zeta^{(0)}\le L$ and $L\ge 1$. Then, there exists a finite index $\overline t=\big\lceil 4L
\ln(\max\{\zeta^{(0)},1+\betap\}/\betap)\big\rceil$ such that
\[
\zeta^{(t)} \leq 1+\betap,\quad \forall t\ge \overline t.
\]
\end{lemma}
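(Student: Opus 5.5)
Fix $t$ and put $g(z)\coloneqq\max\{1+\betap,(1-\frac{z-1}{4L})z\}$, so that \eqref{eq:recursive-upper} reads $\zeta^{(t+1)}\le g(\zeta^{(t)})$. Two facts about the map $\phi(z)\coloneqq(1-\frac{z-1}{4L})z$ drive the argument. First, it is increasing on $(0,(4L+1)/2]$. Second, for $z\ge 1$ we have $\phi(z)-1=(z-1)\bigl(1-\frac{z}{4L}\bigr)$. Hence whenever $1\le z\le L$ the excess over $1$ contracts: $\phi(z)-1\le (z-1)(1-\frac1{4L})$. For $z>1$ the inequality $\frac{z}{4L}\ge\frac1{4L}$ gives exactly this factor.

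The plan is a two-phase argument, tracking $e^{(t)}\coloneqq\max\{\zeta^{(t)},1+\betap\}-1$.

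\textbf{Invariant.} I first show by induction that $\zeta^{(t)}\le\max\{\zeta^{(0)},1+\betap\}\le L+1$, so that every iterate lies in the region where the contraction applies. If $\zeta^{(t)}\le1$, then $\phi(\zeta^{(t)})\le\zeta^{(t)}\le 1$, so $\zeta^{(t+1)}\le1+\betap$. If $1<\zeta^{(t)}\le L$, then $\phi(\zeta^{(t)})<\zeta^{(t)}$. A small case check is needed when $\zeta^{(0)}\le L<1+\betap$, but there the iterate is already below $1+\betap$.

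\textbf{Contraction.} Next, for any $t$ with $\zeta^{(t)}>1+\betap$, the bound above gives
\[
\zeta^{(t+1)}-1\le \max\Bigl\{\betap,\ (1-\tfrac1{4L})(\zeta^{(t)}-1)\Bigr\}.
\]
Hence $e^{(t+1)}\le\max\{\betap,(1-\frac1{4L})e^{(t)}\}$ holds for all $t$, using that $e^{(t)}=\betap$ is preserved once $\zeta^{(t)}\le 1+\betap$. Unrolling,
\[
e^{(t)}\le\max\bigl\{\betap,\ (1-\tfrac1{4L})^t e^{(0)}\bigr\}\le\max\bigl\{\betap,\ e^{-t/(4L)}\max\{\zeta^{(0)},1+\betap\}\bigr\}.
\]
For $t\ge\overline t=\lceil 4L\ln(\max\{\zeta^{(0)},1+\betap\}/\betap)\rceil$ the second term is at most $\betap$. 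So $\zeta^{(t)}\le1+\betap$ for all $t\ge\overline t$. The "for all later $t$" part is automatic from the max-form recursion: once $e^{(t)}=\betap$ it stays $\betap$.

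\textbf{Main obstacle.} The delicate point is the invariant $\zeta^{(t)}\le L$ (up to the constant $1+\betap$). It is needed so that the factor $1-\frac{z}{4L}$ is nonnegative and the monotonicity and contraction of $\phi$ apply. This is where the hypotheses $\zeta^{(0)}\le L$ and $\betap\le1\le L$ are used. I would handle it by checking directly that $\phi$ maps $[1,L]$ into $[1,L]$ and $(0,1]$ into $(0,1]$, and that $1+\betap\le 2\le$ the relevant range. The edge case $L<1+\betap$ is handled separately since then there is nothing to prove beyond the first step.
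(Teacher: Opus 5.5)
Your proof is correct and is essentially the paper's argument. The paper uses the same two ingredients: invariance of $\{\zeta\le 1+\betap\}$, and the contraction $(1-\frac{z-1}{4L})z-1=(z-1)(1-\frac{z}{4L})\le(1-\frac1{4L})(z-1)$ for $z>1$. It phrases them as a contradiction (if $\zeta^{(\overline t)}>1+\betap$, then all earlier iterates exceed $1+\betap$ and contract), whereas you unroll $e^{(t)}$ directly.

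Two small remarks. That contraction holds for every $z\ge 1$, since a negative factor $1-\frac{z}{4L}$ only helps, so your invariant $\zeta^{(t)}\le L$ is unnecessary; the paper never uses $\zeta^{(0)}\le L$. Also, for $z\in(0,1)$ one has $\phi(z)\ge z$ rather than $\phi(z)\le z$, but the bound $\phi(z)\le 1$ that you actually need there is true.
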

\begin{proof}
First note that for every $t\ge 0$, if $\zeta^{(t)}\le 1+\betap$, we have
\[
\zeta^{(t+1)} \le \max\left\{1+\betap,\left(1-\frac{\zeta^{(t)}-1}{4L}\right)\zeta^{(t)}\right\}\le 1+\betap.
\]
Hence, it suffices to prove that $\zeta^{(\bar t)}\leq 1+\betap$.

Assume the contrary, i.e, there exists a sequence $\{\zeta^{(t)}\}_{t\geq 0}$ satisfying \eqref{eq:recursive-upper} with $\zeta^{(\bar t)}> 1+\betap$. Then, for any $t<\bar t$ we have $\zeta^{(t)}> 1+\betap$, and thus
\[
\zeta^{(t+1)}-1
\le \left(1-\frac{\zeta^{(t)}-1}{4L}\right)\zeta^{(t)}-1
\le \left(1-\frac{1}{4L}\right)(\zeta^{(t)}-1),
\]
which leads to
\[
\zeta^{(\bar t)} \le 1+\left(1-\frac1{4L}\right)^{\bar t} (\zeta^{(0)}-1)
\le 1+ \exp\!\left(-\frac{\bar t}{4L}\right)\zeta^{(0)}\leq 1+\betap,
\]
contradiction. Therefore, we can conclude that we have $\zeta^{(t)}\leq 1+\betap$ for any sequence $\{\zeta^{(t)}\}_{t\geq 0}$ satisfying \eqref{eq:recursive-upper} and any $t\ge \overline t$.
\end{proof}

\begin{lemma}\label{lem:function_value_gap_to_one_sidedness}
    For any $0<\epsilon\leq 1/2$ and any $v\in\R^m_{>0}$ satisfying $\rho_\max(v)\leq 1+\epsilon$ and $\fobj_\primal(v)-\fobj_\primal(v_{*})\leq \epsilon^3n/(384\baralphap)$, the vector $w=v^{1+\alphap}$ is a one-sided $\epsilon$-approximation of $\sigma_p(A)$. 
\end{lemma}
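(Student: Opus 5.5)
The plan is to verify the two defining conditions of a one-sided approximation directly, with $w=v^{1+\alphap}$. The key observation is that $W^{\frac12-\frac1p}=V^{(1+\alphap)\frac{p-2}{2p}}=V^{1/2}$, since $1+\alphap=\frac{p}{p-2}$. Hence $\sigma(W^{\frac12-\frac1p}A)=\sigma(V^{1/2}A)=\sigma(v)$.

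\emph{First condition.} By definition $\sigma_i(v)=\rho_i(v)\,v_i^{1+\alphap}=\rho_i(v)\,w_i$. The hypothesis $\rho_\max(v)\le 1+\epsilon$ then gives $\sigma(W^{\frac12-\frac1p}A)\le(1+\epsilon)w$ immediately.

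\emph{Second condition.} We must show $\|w\|_1\le(1+\epsilon)n$. Since $\sum_i\sigma_i(v)=n$, we have $S\coloneqq\|w\|_1$ satisfying
\[
S-n=\sum_{i}w_i\bigl(1-\rho_i(v)\bigr).
\]
Coordinates with $\rho_i(v)\ge1$ contribute nonpositively, so only the set $\{i:\rho_i(v)<1\}$ matters. On that set $\rho_i(v)+1<2$. Lemma~\ref{lem:fobj_primal_gap_lower} then gives
\[
\sum_{i:\rho_i<1}w_i(1-\rho_i)^2\le 12\baralphap\bigl(\fobj_\primal(v)-\fobj_\primal(v_*)\bigr)\le \frac{\epsilon^3n}{32}.
\]
Applying Cauchy--Schwarz with weights $w_i$ yields
\[
S-n\le\sqrt{S}\,\sqrt{\epsilon^3n/32}.
\]

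\emph{Closing the inequality.} Write $S=sn$, so the bound reads $s-1\le\sqrt{s\,\epsilon^3/32}$. Suppose for contradiction that $s>1+\epsilon$. Then $\epsilon<s-1\le\sqrt{s}\,\epsilon^{3/2}/\sqrt{32}$, which forces $s>32/\epsilon\ge 64$. But then $s-1\le\sqrt{s}\,\epsilon^{3/2}/\sqrt{32}\le\sqrt{s}/16$, using $\epsilon\le1/2$. This is impossible for $s\ge64$, since there $s-1>\sqrt{s}/16$. Hence $s\le1+\epsilon$, i.e.\ $\|w\|_1\le(1+\epsilon)n$.

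The main obstacle is this last step: the Cauchy--Schwarz bound involves $\sqrt{S}$ itself, so $S$ must be controlled self-consistently. The case analysis above handles this, and it is exactly where the constant $384$ in the hypothesis is used.
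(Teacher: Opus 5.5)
Your proof is correct, but it reaches $\|w\|_1\le(1+\epsilon)n$ by a different route than the paper.

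The paper splits the indices at the threshold $\Gamma=\{i:\rho_i(v)\ge 1-\epsilon/4\}$. On $\Gamma$ it uses $w_i\le\sigma_i(v)/(1-\epsilon/4)$, which gives $\sum_{i\in\Gamma}w_i\le(1+\epsilon/2)n$. Off $\Gamma$ it uses $(\rho_i(v)-1)^2\ge\epsilon^2/16$ and $\rho_i(v)+1\le 2$ to turn Lemma~\ref{lem:fobj_primal_gap_lower} into $\sum_{i\notin\Gamma}w_i\le\frac{192\baralphap}{\epsilon^2}\bigl(\fobj_\primal(v)-\fobj_\primal(v_*)\bigr)\le\epsilon n/2$.

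You instead use the exact identity $\|w\|_1-n=\sum_i w_i(1-\rho_i(v))$ with a weighted Cauchy--Schwarz. This leaves the self-bounding inequality $s-1\le c\sqrt{s}$ to resolve, where $c^2=12\baralphap\bigl(\fobj_\primal(v)-\fobj_\primal(v_*)\bigr)/n$.

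The two routes trade off as follows. The paper's split is purely additive and needs no self-consistency step. However, the thresholding costs a factor $1/\epsilon^2$, which is why the hypothesis requires a gap of order $\epsilon^3 n/\baralphap$.

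Your argument is more economical in $\epsilon$. Solving the quadratic in $\sqrt{s}$ directly shows that $s-1\le c\sqrt s$ forces $s\le 1+\epsilon$ as soon as $c\le\epsilon/\sqrt{1+\epsilon}$. For $\epsilon\le 1/2$, a gap bound of $\epsilon^2 n/(18\baralphap)$ would therefore already suffice. That quadratic solution could also replace your case analysis by contradiction.

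So your closing remark that the last step is ``exactly where the constant $384$ is used'' overstates things. In your argument the stated hypothesis has slack of a full factor of $\epsilon$, not just in constants.
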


\begin{proof}
    We directly have $\sigma(w^{\frac12-\frac1p})\leq (1+\epsilon)w$ since $\rho_\max(w^{\frac12-\frac1p})=\rho_\max(v)\leq 1+\epsilon$. As for the upper bound on $\|w\|_1$, we denote
    \begin{align*}
        \Gamma\coloneqq\{i\in[m]\mid \rho_i(v)\geq 1-\epsilon/4\}.
    \end{align*}
    Then, $\|w\|_1=\sum_{i\in\Gamma}w_i+\sum_{i\in[m]\backslash\Gamma}w_i$, where we have
    \begin{align*}
        \sum_{i\in\Gamma}w_i
        \leq \frac{1}{1-\epsilon/4}\sum_{i\in\Gamma}\sigma_i(w^{\frac12-\frac1p})
        \leq \left(1+\frac\epsilon2\right)\sum_{i\in[m]}\sigma_i(w^{\frac12-\frac1p})
        \leq \left(1+\frac\epsilon2\right)n
    \end{align*}
    and
    \begin{align*}
        \sum_{i\in[m]\backslash\Gamma}w_i
        =\sum_{i\in[m]\backslash\Gamma}v_i^{1+\alphap}
        &\leq\frac{32}{\epsilon^2}\sum_{i\in[m]\backslash\Gamma}v_i^{1+\alphap}\frac{(\rho_i(v)-1)^2}{\rho_i(w)+1}\\
        &\leq \frac{192\baralphap}{\epsilon^2}(\fobj_\primal(v)-\fobj_\primal(v_{*}))\leq\frac{\epsilon n}{2},
    \end{align*}
    where the second inequality uses Lemma~\ref{lem:fobj_primal_gap_lower}. Hence, we can conclude that $\|w\|_1\leq (1+\epsilon)n$ and thus $w$ is a one-sided $\epsilon$-approximate Lewis weight vector.
\end{proof}

\theoapproxone*

\begin{proof}
By Proposition~\ref{prop:primal_alg_func_decrease}, for any iteration $t\geq \totaliters/2$ we have $\fobj_\primal(v^{(t)})-\fobj_\primal(v_{*})\leq \Delta$. Then, invoking Lemma~\ref{lem:rho-contraction} gives
\[
\rho_i(v^{(t+1)})\leq \max\left\{1+\hat\epsilon,\left(1-\frac{\rho_i(v^{(t)})-1}{4L}\right)\rho_i(v^{(t)})\right\}
\]
for any $i\in[m]$ and $t\geq \totaliters/2$. Therefore, by Lemma~\ref{lem:contraction-sequence-convergence}, we have $\rho_i(v^{(T)})\leq 1+\hat\epsilon$ for any $i\in[m]$, or equivalently, $\rho_\max(v^{(T)})\leq 1+\hat\epsilon$. Using Lemma~\ref{lem:function_value_gap_to_one_sidedness}, we can conclude that $w$ is a one-sided $\hat\epsilon$-approximation of $\sigma_p(A)$. 
\end{proof}

\begin{proof}[Proof of \Cref{thm:primal-algorithm-guarantee}]
By \Cref{thm:primal_algorithm_one_sided_guarantee}, the vector $w=[v^{(T)}]^{1+\alphap}$ is a one-sided $\hat\epsilon$-approximation of $\sigma_p(A)$. Applying \Cref{thm:one-sided-to-multiplicative} to $w$ produces precisely the vector $\hw$ returned by \Cref{alg:relative-smoothness}, and shows that $\hw$ is an $\epsilon$-estimate of $\sigma_p(A)$. The claimed iteration bound follows from \Cref{thm:primal_algorithm_one_sided_guarantee} and our choice of $\hat\epsilon$.
\end{proof}

\section{Conversion between different approximation guarantees}\label{sec:one_sided_to_others}

In this section, we develop several reductions between the different notions of Lewis weight approximation used throughout the paper, and apply them to obtain an improved analysis of an existing algorithm by Lee~\cite{lee16}. In particular, we first show how to convert one-sided approximations of $\sigma_p(A)$ to two-sided approximations in Section~\ref{sec:one-sided-to-two-sided}, and then how to convert them to estimates of $\sigma_p(A)$ in Section~\ref{sec:one-sided-to-multiplicative}. In Section~\ref{sec:approx-min-to-lewis}, we show how to convert approximate minimizers of $\fobj_\primal$ to two-sided approximations. Finally, in Section~\ref{sec:improved-analysis-Lee's-alg}, we present an improved analysis of a variant of the algorithm in \cite{lee16}; the full guarantee is stated in Theorem~\ref{thm:algo}. Throughout this section, we let $H\coloneqq A^\top W^{\frac12-\frac1p}A$, $\widehat H\coloneqq A^\top\widehat W^{\frac12-\frac1p}A$, and $\hat\rho\coloneqq \rho(\widehat w^{\frac12-\frac1p})\in\R^m_{>0}$.

\subsection{From one-sided approximations to two-sided approximations}\label{sec:one-sided-to-two-sided}

Here we establish the following conversion from one-sided to two-sided approximations, showing that if $w$ is a one-sided $\eps_{\mathrm{one}}$-approximation, then $\widehat{w} \coloneqq \sigma(w^{\frac12-\frac1p})^{\frac{p}{2}}/w^{\betap}$ is a two-sided $\eps_{\mathrm{two}}$-approximation for a suitable $\eps_{\mathrm{two}}$.

\theoonetotwo*

We first prove a key lemma that allows us to compare quadratic forms associated with $\widehat{w}$ and~$w$.

\begin{lemma}
\label{lem:mult_diff}
For any $w\in\R_{>0}^{m}$ and 
$\epsilon_{\mathrm{two}} \coloneqq \big\|\widehat{w} - \sigma(w^{\frac12-\frac1p}) \big\|_1$, we have
\begin{align*}
\|H^{-1/2}(\widehat H-H)H^{-1/2}\|_1
\leq\big\|\widehat{w} - \sigma(w^{\frac12-\frac1p}) \big\|_1
=\epsilon_{\mathrm{two}},
\end{align*}
and consequently,
\[
(1-\epsilon_{\mathrm{two}})H
\preceq \widehat H
\preceq (1+\epsilon_{\mathrm{two}})H.
\]
\end{lemma}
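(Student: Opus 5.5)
The plan is to write $\widehat H-H$ as a weighted sum of rank-one matrices and then bound the Schatten $1$-norm of the normalized difference by the triangle inequality. Since $H=A^\top W^{\frac12-\frac1p}A=\sum_i w_i^{\frac12-\frac1p}a_ia_i^\top$ and $\widehat H=\sum_i \widehat w_i^{\frac12-\frac1p}a_ia_i^\top$, we get
\[
H^{-1/2}(\widehat H-H)H^{-1/2}=\sum_{i\in[m]}\Big(\tfrac{\widehat w_i^{\frac12-\frac1p}}{w_i^{\frac12-\frac1p}}-1\Big)\,w_i^{\frac12-\frac1p}H^{-1/2}a_ia_i^\top H^{-1/2}.
\]
Each summand is a rank-one PSD matrix scaled by a real number. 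Its trace norm is the absolute value of the coefficient times $w_i^{\frac12-\frac1p}a_i^\top H^{-1}a_i=\sigma_i(w^{\frac12-\frac1p})$. The triangle inequality for $\|\cdot\|_1$ therefore gives the bound $\sum_i |r_i-1|\,\sigma_i$, where $r_i\coloneqq(\widehat w_i/w_i)^{\frac12-\frac1p}$ and $\sigma_i\coloneqq\sigma_i(w^{\frac12-\frac1p})$.

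The main step is a scalar inequality comparing $|r_i-1|\sigma_i$ with $|\widehat w_i-\sigma_i|$. Let $s_i=\sigma_i/w_i$. By the definition $\widehat w_i=\sigma_i^{p/2}/w_i^{\betap}$, we have $\widehat w_i/w_i=s_i^{p/2}$, and hence $r_i=s_i^{\frac p2(\frac12-\frac1p)}=s_i^{\frac{p-2}{4}}$. The target $|\widehat w_i-\sigma_i|=w_i\,|s_i^{p/2}-s_i|=\sigma_i\,|s_i^{\betap}-1|$, using $p/2-1=\betap$. So it suffices to show $|s^{(p-2)/4}-1|\le|s^{(p-2)/2}-1|$ for all $s>0$. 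Put $x=s^{(p-2)/4}>0$. The claim becomes $|x-1|\le|x^2-1|=|x-1|(x+1)$, which holds whenever $x+1\ge1$, i.e.\ always for $p\ge2$.

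Summing over $i$ yields $\|H^{-1/2}(\widehat H-H)H^{-1/2}\|_1\le\|\widehat w-\sigma(w^{\frac12-\frac1p})\|_1=\epsilon_{\mathrm{two}}$. For the consequence, the operator norm is at most the Schatten $1$-norm, so $-\epsilon_{\mathrm{two}}I\preceq H^{-1/2}(\widehat H-H)H^{-1/2}\preceq\epsilon_{\mathrm{two}}I$. Conjugating by $H^{1/2}$ gives $(1-\epsilon_{\mathrm{two}})H\preceq\widehat H\preceq(1+\epsilon_{\mathrm{two}})H$.

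The only potential obstacle is getting the exponent bookkeeping in the scalar step right. Once $r_i$ is identified as the square root of $s_i^{\betap}$, the inequality is elementary.
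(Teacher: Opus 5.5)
Your strategy is the paper's: expand $\widehat H-H=\sum_i c_i a_ia_i^\top$, bound $\|H^{-1/2}(\widehat H-H)H^{-1/2}\|_1\le\sum_i|c_i|\,a_i^\top H^{-1}a_i$, identify leverage scores, and finish with $\|\cdot\|\le\|\cdot\|_1$ and conjugation by $H^{1/2}$. The paper splits $c$ into positive and negative parts and takes traces of the two PSD pieces; your termwise triangle inequality is equivalent.

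The problem is the exponent. In this section $\sigma(w^{\frac12-\frac1p})$ means $\sigma(W^{\frac12-\frac1p}A)$. This is the vector in the Lewis weight fixed-point equation and in the definition of a one-sided approximation, and it is what \Cref{lem:distances} and \Cref{thm:one-sided-to-two-sided} feed into this lemma. Its $i$-th entry is $w_i^{1-\frac2p}a_i^\top(A^\top W^{1-\frac2p}A)^{-1}a_i$.

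The displayed definition $H\coloneqq A^\top W^{\frac12-\frac1p}A$ is a typo for $A^\top W^{1-\frac2p}A$. The paper's own proof sets $\Delta=\widehat w^{1-\frac2p}-w^{1-\frac2p}$, and \Cref{lem:ln-sigma-norm} uses $\widehat H=A^\top\widehat W^{1-\frac2p}A$. With your literal $H$, the identity $w_i^{\frac12-\frac1p}a_i^\top H^{-1}a_i=\sigma_i(w^{\frac12-\frac1p})$ is false: the left side is the $i$-th leverage score of $W^{\frac14-\frac1{2p}}A$. So your argument is internally consistent only if $\sigma(w^{\frac12-\frac1p})$ is reinterpreted as $\sigma(W^{\frac14-\frac1{2p}}A)$. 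It therefore proves the lemma for a different $\sigma$ than the one used downstream.

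The repair is immediate and shortens the proof. With weights $w^{1-\frac2p}$ and $\widehat w^{1-\frac2p}$, the $i$-th summand of $H^{-1/2}(\widehat H-H)H^{-1/2}$ is $\big((\widehat w_i/w_i)^{1-\frac2p}-1\big)\,w_i^{1-\frac2p}H^{-1/2}a_ia_i^\top H^{-1/2}$. Here $(\widehat w_i/w_i)^{1-\frac2p}=s_i^{\frac p2\cdot\frac{p-2}{p}}=s_i^{\betap}$. Since $\widehat w_i=\sigma_i s_i^{\betap}$, the trace norm of this summand is $\sigma_i\,|s_i^{\betap}-1|=|\widehat w_i-\sigma_i|$ exactly.

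Your scalar inequality $|x-1|\le|x^2-1|$ was only compensating for the halved exponent $\frac{p-2}{4}$. After the fix, each term's trace norm equals $|\widehat w_i-\sigma_i|$ and only the triangle inequality remains, which is precisely the paper's computation.
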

\begin{proof}
Let $\Delta\defeq \widehat{w}^{1-\frac2p}-w^{1-\frac2p}$ and let $\Delta_{+}\defeq\max\{\Delta,\vzero\}$
and $\Delta_{-}\defeq\max\{-\Delta,\vzero\}$ entrywise so that $\Delta_{+},\Delta_{-}\in\R_{\geq0}^{m}$,
$\Delta=\Delta_{+}-\Delta_{-}$, and $\norm{\Delta}_{1}=\norm{\Delta_{+}}_{1}+\norm{\Delta_{-}}_{1}$. Then, we have
\begin{align*}
\normFull{\mh^{-1/2}(\widehat H-H)\mh^{-1/2}}_{1} 
&\leq\normFull{\mh^{-1/2}\big(\ma^{\top}\Delta_{+}\ma\big)\mh^{-1/2}}_{1}
+\normFull{\mh^{-1/2}\big(\ma^{\top}\Delta_{-}\ma\big)\mh^{-1/2}}_{1}\,.
\end{align*}
We then use the fact $\Delta_+$ is positive semidefinite, and therefore so is $\mh^{-1/2}\left(\ma^{\top}\Delta_{+}\ma\right)\mh^{-1/2}$, to upper bound the spectral norm by the trace: 
\begin{align*}
\normFull{\mh^{-1/2}\big(\ma^{\top}\Delta_{+}\ma\big)\mh^{-1/2}}_{1}
&=\tr\left[\mh^{-1/2}\big(\ma^{\top}\Delta_{+}\ma\big)\mh^{-1/2}\right]\\
&=\sum_{i\in[m]}\left[\Delta_{+}\right]_{i}\left[\ma(\ma^{\top}\mw^{\frac12-\frac1p}\ma)^{-1}\ma^{\top}\right]_{ii}\
=\sum_{i\in[m]}\left[\Delta_{+}\right]_{i}\!\cdot\!\frac{\sigma_i(w^{\frac12-\frac1p})}{w_i^{1-\frac2p}}\,.
\end{align*}
By symmetry, the same bound holds for $\Delta_{-}$.
Combining the two bounds yields that 
\[
\normFull{\mh^{-1/2}(\ma^{\top}\Delta\ma)\mh^{-1/2}}_1
\leq\sum_{i\in[m]}\left|\Delta_{i}\right|\cdot\frac{\sigma_i(w^{\frac12-\frac1p})}{w_{i}^{1-\frac2p}}
\]
Recalling that $\Delta = \widehat{w}^{1-\frac2p}- w^{1-\frac2p}$ we obtain the desired upper bound:
\[
\sum_{i\in[m]}\left|\Delta_{i}\right|\cdot\frac{\sigma_i(w^{\frac12-\frac1p})}{w_{i}^{1-\frac2p}}
=
\big\| \widehat{w} - \sigma(w^{\frac12-\frac1p})\big\|_1=\epsilon_{\mathrm{two}}\,.
\]
Finally, we conclude that
\begin{align*}
    \normFull{\mh^{-1/2}(\ma^{\top}\Delta\ma)\mh^{-1/2}}
    \leq \normFull{\mh^{-1/2}(\ma^{\top}\Delta\ma)\mh^{-1/2}}_1 
    \leq \epsilon_{\mathrm{two}}
\end{align*}
and thus $(1-\epsilon_{\mathrm{two}})H
\preceq \widehat H
\preceq (1+\epsilon_{\mathrm{two}})H$.
\end{proof}
Next, we show how to bound the approximation factor $\epsilon_{\mathrm{two}} = \big\| \widehat{w} - \sigma(w^{\frac12-\frac1p})\big\|_1$ when $w$ is a one-sided approximation of $\sigma_p(A)$.
\begin{lemma}\label{lem:distances} 
If $w \in \R^m_{> 0}$ is a one-sided $\epsilon_{\mathrm{one}}$-approximate $\ell_p$-Lewis weight of $\ma$ for  $p > 2$, then 
\[
\big\| \widehat{w} - \sigma(w^{\frac12-\frac1p})\big\|_1
\leq 3\barbetap n\epsilon_{\mathrm{one}} (1 + \epsilon_{\mathrm{one}})^{\barbetap}.
\]
\end{lemma}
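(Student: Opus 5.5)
The plan is to write everything in terms of the coordinatewise ratio between the leverage scores and the weights, and then reduce the claimed $\ell_1$ bound to the two defining conditions of a one-sided approximation. Write $\sigma\coloneqq\sigma(w^{\frac12-\frac1p})$ and $r_i\coloneqq \sigma_i/w_i$. Since $\frac p2=1+\betap$, we have $\widehat w_i=\sigma_i^{1+\betap}/w_i^{\betap}=\sigma_i r_i^{\betap}$, and hence
\[
\big\|\widehat w-\sigma\big\|_1=\sum_{i\in[m]}\sigma_i\,|r_i^{\betap}-1|.
\]
The one-sided hypothesis gives exactly two facts: $r_i\le 1+\epsilon_{\mathrm{one}}$ for every $i$, and $\sum_i w_i\le(1+\epsilon_{\mathrm{one}})n$. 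We also use $\sum_i\sigma_i=n$, which holds for leverage scores of a full-rank matrix.

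Split the indices into $S_{\le}\coloneqq\{i: r_i\le 1\}$ and $S_{>}\coloneqq\{i:r_i>1\}$. Define
\[
S_+\coloneqq\sum_{i\in S_\le}(w_i-\sigma_i)\ge 0,\qquad S_-\coloneqq\sum_{i\in S_>}(\sigma_i-w_i)\ge 0.
\]
Then $S_+-S_-=\sum_i(w_i-\sigma_i)\le\epsilon_{\mathrm{one}} n$, so $S_+\le\epsilon_{\mathrm{one}} n+S_-$. To bound $S_-$, note that for $i\in S_>$ we have $\sigma_i-w_i=\sigma_i(1-1/r_i)\le\sigma_i\frac{\epsilon_{\mathrm{one}}}{1+\epsilon_{\mathrm{one}}}$. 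Summing and using $\sum_i\sigma_i=n$ gives $S_-\le\epsilon_{\mathrm{one}} n$, and therefore $S_+\le 2\epsilon_{\mathrm{one}} n$.

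The main step is a scalar comparison showing that each term $\sigma_i|r_i^{\betap}-1|$ is at most a constant times $|w_i-\sigma_i|$. There are two cases.
\begin{itemize}
\item For $r_i\le1$: if $\betap\le1$, then $1-r_i^{\betap}\le 1-r_i$; if $\betap>1$, then $1-r_i^{\betap}\le\betap(1-r_i)$. In both cases $1-r_i\le(1-r_i)/r_i$, so
\[
\sigma_i(1-r_i^{\betap})\le\barbetap\,\sigma_i\frac{1-r_i}{r_i}=\barbetap(w_i-\sigma_i).
\]
\item For $1<r_i\le1+\epsilon_{\mathrm{one}}$: the mean value theorem gives $r_i^{\betap}-1\le\barbetap(1+\epsilon_{\mathrm{one}})^{\barbetap-1}(r_i-1)$. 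This covers both $\betap<1$ and $\betap\ge1$. Moreover $\sigma_i(r_i-1)=r_i(\sigma_i-w_i)\le(1+\epsilon_{\mathrm{one}})(\sigma_i-w_i)$, so
\[
\sigma_i(r_i^{\betap}-1)\le\barbetap(1+\epsilon_{\mathrm{one}})^{\barbetap}(\sigma_i-w_i).
\]
\end{itemize}

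Summing the two cases gives
\[
\big\|\widehat w-\sigma\big\|_1\le\barbetap S_++\barbetap(1+\epsilon_{\mathrm{one}})^{\barbetap}S_-\le\barbetap(1+\epsilon_{\mathrm{one}})^{\barbetap}(2\epsilon_{\mathrm{one}} n+\epsilon_{\mathrm{one}} n)=3\barbetap n\epsilon_{\mathrm{one}}(1+\epsilon_{\mathrm{one}})^{\barbetap},
\]
which is the claimed bound.

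The step that needs the most care is the bound $S_-\le\epsilon_{\mathrm{one}} n$. The cruder estimate $S_-\le\epsilon_{\mathrm{one}}\sum_i w_i\le\epsilon_{\mathrm{one}}(1+\epsilon_{\mathrm{one}})n$ would spoil the constant $3$. The fix is to bound $S_-$ through $1-1/r_i\le\epsilon_{\mathrm{one}}/(1+\epsilon_{\mathrm{one}})$ and $\sum_i\sigma_i=n$, as done above.
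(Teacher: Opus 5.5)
Your proof is correct and follows essentially the paper's route: you rewrite $\|\widehat w-\sigma\|_1=\sum_i\sigma_i|r_i^{\betap}-1|$, bound each term by $\barbetap(1+\epsilon_{\mathrm{one}})^{\barbetap}|w_i-\sigma_i|$ (the paper does this via $|1-x^c|\le c\max\{1,x\}^{c-1}|1-x|$ for $c>1$ and $|1-x^c|\le|1-x|$ for $c<1$), and then use $\|w-\sigma\|_1\le 3\epsilon_{\mathrm{one}} n$. The only difference is that you derive this last $\ell_1$ bound inline through the split $S_+\le 2\epsilon_{\mathrm{one}} n$, $S_-\le\epsilon_{\mathrm{one}} n$, whereas the paper isolates it as Lemma~\ref{lem:abs_diff_2} and proves it by writing $w-\sigma=\bigl(w-\sigma/(1+\epsilon_{\mathrm{one}})\bigr)-\frac{\epsilon_{\mathrm{one}}}{1+\epsilon_{\mathrm{one}}}\sigma$.
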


\begin{proof}
If $p\geq 4$ and $\betap\geq 1$, since $\widehat{w}$ and $\sigma(w^{\frac12-\frac1p})$ are non-negative, it follows that
\begin{align*}
\big\| \widehat{w} - \sigma(w^{\frac12-\frac1p})\big\|_1
&= \sum_{i \in [m]} \sigma_i(w^{\frac12-\frac1p}) \left| \bigg(\frac{\sigma_i(w^{\frac12-\frac1p})}{ w_i} \bigg)^{\betap} - 1 \right|\\
&\leq \betap(1+ \epsilon_{\mathrm{one}})^{\betap}
\norm{\sigma(w^{\frac12-\frac1p})-w}_1
\, ,
\end{align*}
where in the last step we used the fact that
\[
|1 - x^c| = \left| \int_1^{x} c \cdot  y^{c - 1} \d y \right|
\leq 
\left| \int_1^{x} c\cdot  \max\{1,x\}^{c - 1} \d y \right|
= c \cdot \max\{1,x\}^{c - 1} \cdot | 1 - x|
\]
for any $x\geq 0$ and $c>1$, and that $\sigma(w^{\frac12-\frac1p}) \leq (1+\epsilon_{\mathrm{one}}) w$. Otherwise, if $2\leq p<4$ and $0<\betap<1$, we have
\begin{align*}
\big\| \widehat{w} - \sigma(w^{\frac12-\frac1p})\big\|_1
&= \sum_{i \in [m]} \sigma_i(w^{\frac12-\frac1p}) \left| \bigg(\frac{\sigma_i(w^{\frac12-\frac1p})}{ w_i} \bigg)^{\betap} - 1 \right|\\
&\leq \sum_{i \in [m]} \frac{\sigma_i(w^{\frac12-\frac1p})}{w_i} \left| \sigma_i(w^{\frac12-\frac1p})-w_i \right|\leq (1+\epsilon_{\mathrm{one}})\norm{\sigma(w^{\frac12-\frac1p})-w}_1.
\end{align*}
The result then follows from the fact that $\norm{w- \sigma(w^{\frac12-\frac1p})}_1 \leq 3\epsilon_{\mathrm{one}} \norm{\sigma(w)}_1 = 3 \epsilon_{\mathrm{one}} n$, by Lemma~\ref{lem:abs_diff_2}.
\end{proof}

Now we are ready to prove Theorem~\ref{thm:one-sided-to-two-sided}.

\begin{proof}[Proof of Theorem~\ref{thm:one-sided-to-two-sided}]
For any $i\in[m]$, we have
\begin{align*}
\frac{\sigma(\widehat w^{\frac12-\frac1p})_{i}}{\widehat w_{i}} 
& =\widehat w_{i}^{-\frac2p}\cdot\left[\ma(\ma^{\top}\widehat W^{1-\frac2p}\ma)^{-1}\ma^{\top}\right]_{ii}
=\frac{\left[\ma(\ma^{\top}\widehat W^{1-\frac2p}\ma)^{-1}\ma^{\top}\right]_{ii}}{\left[\ma(\ma^{\top}\mw^{1-\frac2p}\ma)^{-1}\ma^{\top}\right]_{ii}}\,.
\end{align*}
We wish to lower and upper bound this fraction by $1/(1+\epsilon_{\mathrm{two}})$ and $1/(1-\epsilon_{\mathrm{two}})$, respectively.
For this it suffices to prove that
\[
\frac{1}{1+\epsilon_{\mathrm{two}}} \big( \ma^{\top}\mw^{1-\frac2p}\ma \big)^{-1}
\preceq \big( \ma^{\top}\widehat W^{1-\frac2p}\ma \big)^{-1}
\preceq \frac{1}{1-\epsilon_{\mathrm{two}}} \big( \ma^{\top}\mw^{1-\frac2p}\ma \big)^{-1},
\]
which is equivalent to showing that
\[
(1-\epsilon_{\mathrm{two}}) \ma^{\top}\mw^{1-\frac2p}\ma 
\preceq \ma^{\top}\widehat W^{1-\frac2p}\ma
\preceq (1+\epsilon_{\mathrm{two}}) \ma^{\top}\mw^{1-\frac2p}\ma.
\]
By Lemma~\ref{lem:mult_diff} and Lemma~\ref{lem:distances} this holds for $\epsilon_{\mathrm{two}} = \| \widehat{w} - \sigma(w^{\frac12-\frac1p})\|_1 \leq 3\barbetap n\epsilon_{\mathrm{one}} (1 + \epsilon_{\mathrm{one}})^{\barbetap}$.
\end{proof}

\subsection{From one-sided approximations to estimates of Lewis weights}\label{sec:one-sided-to-multiplicative}

We next show that, under a sufficiently strong one-sided approximation guarantee, the same postprocessing as in \Cref{sec:one-sided-to-two-sided} gives an estimate of $\sigma_p(A)$.

\theoonetomulti*

\begin{lemma}\label{lem:ln-sigma-norm} 
Let $\widehat\Sigma=\diag(\sigma_i(\widehat w^{\frac12-\frac1p}))$ and $\epsilon_{\mathrm{two}} = \| \widehat{w} - \sigma(w^{\frac12-\frac1p})\|_1$. If $ \epsilon_{\mathrm{two}}\leq 1/8$, then $\widehat w$ satisfies $\|\ln \rho(\widehat w^{\frac12-\frac1p})\|_{\widehat\Sigma}\leq 4\epsilon_{\mathrm{two}}^{1/2}$.
\end{lemma}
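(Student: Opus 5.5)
The plan is to express $\hat\rho$ through the normalized perturbation $E\coloneqq H^{-1/2}(\widehat H-H)H^{-1/2}$ and then bound the weighted sum of squares by a trace, in the same spirit as Lemma~\ref{lem:mult_diff}. By Lemma~\ref{lem:mult_diff} we have $\|E\|\le\|E\|_1\le\epsilon_{\mathrm{two}}\le 1/8$. Moreover, $E$ and $(I+E)^{-1}$ commute, which we will use below.

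\textbf{Step 1: express $\hat\rho_i$ through $E$.} As in the proof of Theorem~\ref{thm:one-sided-to-two-sided},
\[
\hat\rho_i=\frac{a_i^\top\widehat H^{-1}a_i}{a_i^\top H^{-1}a_i}.
\]
Write $u_i\coloneqq H^{-1/2}a_i$ and $x_i\coloneqq u_i/\|u_i\|$. Then
\[
\hat\rho_i=x_i^\top(I+E)^{-1}x_i,
\qquad
\hat\rho_i-1=-x_i^\top F x_i,
\qquad
F\coloneqq E(I+E)^{-1},
\]
and $F$ is symmetric.

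\textbf{Step 2: reduce the logarithm to a linear deviation.} Since $\hat\rho_i\in[\frac1{1+\epsilon_{\mathrm{two}}},\frac1{1-\epsilon_{\mathrm{two}}}]$ with $\epsilon_{\mathrm{two}}\le 1/8$, we get $|\ln\hat\rho_i|\le 2|\hat\rho_i-1|$. Because $\sigma_i(\widehat w^{\frac12-\frac1p})=\hat\rho_i\widehat w_i\le\frac87\widehat w_i$, it suffices to bound $\sum_i\widehat w_i(\hat\rho_i-1)^2$. By Cauchy--Schwarz, $(x_i^\top Fx_i)^2\le x_i^\top F^2x_i$, and therefore
\[
\sum_i\widehat w_i(\hat\rho_i-1)^2\le \Tr\Big(F^2\sum_i\widehat w_i\,x_ix_i^\top\Big).
\]

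\textbf{Step 3 (key identity).} The crux is to identify $\sum_i\widehat w_i x_ix_i^\top$. From the definition $\widehat w_i=\sigma_i(w^{\frac12-\frac1p})^{p/2}/w_i^{\betap}$ and $\sigma_i(w^{\frac12-\frac1p})=w_i^{1-\frac2p}\,a_i^\top H^{-1}a_i$, a short exponent computation (using $2\betap/p=1-\frac2p$) gives
\[
\widehat w_i^{2/p}=a_i^\top H^{-1}a_i=\|u_i\|^2 .
\]
Hence $\widehat w_i/\|u_i\|^2=\widehat w_i^{1-\frac2p}$, and so
\[
\sum_i\widehat w_ix_ix_i^\top=H^{-1/2}\widehat H H^{-1/2}=I+E.
\]
Consequently the trace equals
\[
\Tr\big(E^2(I+E)^{-1}\big)\le\frac{\|E\|_F^2}{1-\epsilon_{\mathrm{two}}}\le\frac87\,\epsilon_{\mathrm{two}}^2,
\]
where we used $\|E\|_F\le\|E\|_1\le\epsilon_{\mathrm{two}}$.

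\textbf{Step 4: combine.} Collecting constants,
\[
\|\ln\hat\rho\|_{\widehat\Sigma}^2\le 4\cdot\tfrac87\cdot\tfrac87\,\epsilon_{\mathrm{two}}^2\le 16\,\epsilon_{\mathrm{two}}
\]
for $\epsilon_{\mathrm{two}}\le1/8$, which gives the claim. In fact the argument yields the stronger bound $O(\epsilon_{\mathrm{two}})$.

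I expect Step 3 to be the main obstacle: the exponent bookkeeping that makes the weights $\widehat w_i/\|u_i\|^2$ collapse to $\widehat w_i^{1-\frac2p}$. If that identity failed, I would fall back on the one-sided-approximation comparison between $\widehat w$ and $\sigma(w^{\frac12-\frac1p})$ to show that the weighted sum is dominated by a constant multiple of $I+E$.
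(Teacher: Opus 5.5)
Your proof is correct, and it proves more than the statement: it gives $\|\ln\hat\rho\|_{\widehat\Sigma}\le\frac{16}{7}\epsilon_{\mathrm{two}}$, which implies $4\epsilon_{\mathrm{two}}^{1/2}$ when $\epsilon_{\mathrm{two}}\le 1/8$. The identity $\widehat w_i^{2/p}=a_i^\top H^{-1}a_i$ that you flagged as the obstacle does hold; the paper relies on it implicitly in \Cref{lem:mult_diff} and in the proof of \Cref{thm:one-sided-to-two-sided}.

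The paper takes a different route. It normalizes by $\widehat H$ rather than $H$, writing $\hat\rho_i^{-1}$ as a Rayleigh quotient of $\bigl(I+\widehat H^{-1/2}(H-\widehat H)\widehat H^{-1/2}\bigr)^{-1}$. In that frame the weights match exactly: the rows $\hat a_i=\widehat w_i^{\frac12-\frac1p}\widehat H^{-1/2}a_i$ satisfy $\sum_i\hat a_i\hat a_i^\top=I$ and $\|\hat a_i\|^2=\sigma_i(\widehat w^{\frac12-\frac1p})$. The cost is two extra perturbation lemmas (\Cref{lem:matrix-ell1-distance-2} and \Cref{lem:inverse-ell1-distance}) to move the trace-norm bound of \Cref{lem:mult_diff} into the $\widehat H$-frame, plus a passage to the absolute value $\bar\Delta$ of the perturbation. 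The paper then bounds each squared quotient $\bigl(\hat a_i^\top\bar\Delta\hat a_i/\hat a_i^\top\hat a_i\bigr)^2$ by its first power, so the sum collapses to a multiple of $\Tr\bar\Delta=O(\epsilon_{\mathrm{two}})$. That step is exactly where the square root is lost.

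You instead stay in the $H$-frame. There \Cref{lem:mult_diff} applies to $E$ directly, and the natural weights are $\widehat w_i$, which costs only the factor $\hat\rho_i\le 8/7$. You also keep the quadratic structure via $(x^\top Fx)^2\le x^\top F^2x$, which gives $\Tr\bigl(E^2(I+E)^{-1}\bigr)\le\frac87\|E\|_F^2\le\frac87\|E\|_1^2$. The Cauchy--Schwarz step is the real source of the gain; it would equally sharpen the paper's argument in its own frame. Your choice of frame additionally removes the auxiliary lemmas.

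The sharper bound matters downstream. Fed into the proof of \Cref{thm:one-sided-to-multiplicative}, it would make $\epsilon_{\mathrm{est}}$ depend linearly, rather than through a square root, on $\barbetap n\epsilon_{\mathrm{one}}$.
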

\begin{proof}
\begin{align*}
\rho_i(\widehat{w}^{\frac12-\frac1p})
&=\widehat{w}_i^{-\frac2p}a_i^\top (A^\top \widehat{W}^{1-\frac2p}A )^{-1}a_i
=\frac{a_i^\top( A ^\top \widehat{W}^{1-\frac2p} A )^{-1}a_i}{a_i^\top( A ^\top W^{1-\frac2p} A )^{-1}a_i}
=\frac{a_i^\top  \widehat{H}^{-1}a_i}{a_i^\top  H^{-1}a_i},
\end{align*}
which leads to
\begin{align*}
\rho_i(\widehat{w}^{\frac12-\frac1p})^{-1}=\frac{a_i^\top  \widehat{H}^{-\frac{1}{2}} \widehat{H}^{\frac{1}{2}} H^{-1} \widehat{H}^{\frac{1}{2}} \widehat{H}^{-\frac{1}{2}}a_i}{a_i^\top  \widehat{H}^{-1}a_i}
\end{align*}
where 
\begin{align*}
 \widehat{H}^{\frac{1}{2}} H^{-1} \widehat{H}^{\frac{1}{2}}
=\left( \widehat{H}^{-\frac{1}{2}} H \widehat{H}^{-\frac{1}{2}}\right)^{-1}=\left( I + \widehat{H}^{-\frac{1}{2}}( H- \widehat{H}) \widehat{H}^{-\frac{1}{2}}\right)^{-1}.
\end{align*}
By Lemma~\ref{lem:mult_diff} and Lemma~\ref{lem:matrix-ell1-distance-2}, we have
\begin{align*}
\left\| \widehat{H}^{-\frac{1}{2}}( H- \widehat{H}) \widehat{H}^{-\frac{1}{2}}\right\|_1\leq 2 \epsilon_{\mathrm{two}}.
\end{align*}
Denote
\begin{align*}
\Delta\coloneqq\left( I + \widehat{H}^{-\frac{1}{2}}( H- \widehat{H}) \widehat{H}^{-\frac{1}{2}}\right)^{-1}- I .
\end{align*}
Then, $\|\Delta\|_1\leq 2 \epsilon_{\mathrm{two}}$ by Lemma~\ref{lem:inverse-ell1-distance}. Define $\Delta_+\coloneqq \max\{\Delta,0\}$ and $\Delta_-\coloneqq \max\{-\Delta,0\}$ entrywise, and define a new positive semidefinite matrix $\bar{\Delta}\coloneqq\Delta_+-\Delta_-$ that satisfies
\begin{align*}
\|\bar{\Delta}\|_1\leq\|\Delta_+\|_1+\|\Delta_-\|_1\leq 2\|\Delta\|_1\leq 4 \epsilon_{\mathrm{two}}.
\end{align*}
Moreover, we have
\begin{align*}
\left|\rho_i(\widehat{w}^{\frac12-\frac1p})^{-1}-1\right|\leq\frac{\| \widehat{H}^{-\frac{1}{2}}a_i\|_{\bar{\Delta}}^2}{\| \widehat{H}^{-\frac{1}{2}}a_i\|_2^2}\leq \|\bar{\Delta}\|_2\leq4 \epsilon_{\mathrm{two}}\leq\frac{1}{2},
\end{align*}
which leads to
\begin{align*}
|\ln(\rho_i(\widehat w^{\frac12-\frac1p}))|
\leq \frac{2\| \widehat{H}^{-\frac{1}{2}}a_i\|_{\bar{\Delta}}^2}{\| \widehat{H}^{-\frac{1}{2}}a_i\|_2^2}
=\frac{2\| \widehat{H}^{-\frac{1}{2}}\widehat{w}_i^{\frac{1}{2}-\frac1p}a_i\|_{\bar{\Delta}}^2}{\| \widehat{H}^{-\frac{1}{2}}\widehat{w}_i^{\frac{1}{2}-\frac1p}a_i\|_2^2},
\end{align*}
and 
\begin{align*}
\|\ln \rho(\widehat w^{\frac12-\frac1p})\|_{\widehat\Sigma}^2
=\sum_{i\in[m]}\sigma_i(\widehat{w}^{\frac12-\frac1p})|\ln(\rho_i(\widehat w^{\frac12-\frac1p}))|^2
\leq 4\sum_{i\in[m]}\sigma_i(\widehat{w}^{\frac12-\frac1p})\cdot\frac{\| \widehat{H}^{-\frac{1}{2}}\widehat{w}_i^{\frac{1}{2}-\frac1p}a_i\|_{\bar{\Delta}}^4}{\| \widehat{H}^{-\frac{1}{2}}\widehat{w}_i^{\frac{1}{2}-\frac1p}a_i\|_2^4}.
\end{align*}
Denote $\hat{ A }=\widehat{W}^{\frac12-\frac1p} A  \widehat{H}^{-\frac{1}{2}}$. Then by Lemma~\ref{lem:leverage-score-one-norm}, 
\begin{align*}
\|\ln \rho(\widehat w^{\frac12-\frac1p})\|_{\Sigma}^2
&\leq 4\sum_{i\in[m]}\sigma_i(\widehat{w}^{\frac12-\frac1p} )\cdot\left(\frac{\hat{a}_i^\top\bar{\Delta}\hat{a}_i}{\hat{a}_i^\top\hat{a}_i}\right)^2
\leq
4\sum_{i\in[m]}\sigma_i(\widehat{w}^{\frac12-\frac1p})\cdot\left(\frac{\hat{a}_i^\top\bar{\Delta}\hat{a}_i}{\hat{a}_i^\top\hat{a}_i}\right)\\
&= 4\tr[\hat{ A }\bar{\Delta}\hat{ A }^\top]\leq\|\bar{\Delta}\|_1\leq 16 \epsilon_{\mathrm{two}}
\end{align*}
here we use that $\left(\frac{\hat{a}_i^\top\bar{\Delta}\hat{a}_i}{\hat{a}_i^\top\hat{a}_i}\right) \leq 1$ and that $\sigma_i(\widehat w^{\frac12-\frac1p}) = \hat{a}_i^\top \hat a_i$.
\end{proof}

\begin{lemma}[Lemma 14 and Claim 1 of \cite{fazel22}]\label{lem:integral}
Consider $\hat{v}\coloneqq\widehat w^{1-\frac2p}$ which satisfies $|\rho_i(\hat{v})-1|\leq1/2$ for any $i$, define
\begin{align}
\label{eq:hatv_t-def}
\hat{v}(t)=\underset{v\in\R^m_{>0}}{\mathrm{argmin}}f_t(v)\coloneqq-\log\det\big(A^\top V A\big)+\frac1{1+\alphap}\sum_{i=1}^m \rho_i^t(\hat{v})v_i^{1+\alphap},\quad\forall t\in[0,1].
\end{align}
Then, we have $\hat{v}(1)=\hat{v}$, $\hat{v}(0)=v_{*}$, and
\begin{align*}
    \left\|\frac{\mathrm{d}}{\mathrm{d} t}\ln\left(\frac{\hat{v}(t)}{\hat{v}(1)}\right)\right\|_\infty
    \leq\frac{\|\ln \rho(\hat{v})\|_\infty}{\alphap}
    +\frac{\|\ln \rho(\hat{v})\|_{\Sigma(\hat{v}(t))}}{\alphap^2},
\end{align*}
\end{lemma}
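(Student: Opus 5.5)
The plan is to characterize $\hat v(t)$ through first-order optimality conditions, check the two endpoints directly, and then differentiate those conditions implicitly in the variable $x(t)\coloneqq\ln\hat v(t)$. Since $\hat v(1)$ is a fixed vector, $\frac{\d}{\d t}\ln(\hat v(t)/\hat v(1))=\dot x(t)$.

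\textbf{Step 1: optimality conditions and endpoints.} Each $f_t$ is $\fobj_\primal$ with the separable term reweighted by the positive constants $c_i\coloneqq\rho_i(\hat v)^t$. By \Cref{lem_gradAndHess}, with $\alphap V^{\alphap-1}$ replaced by $\alphap\,\mDiag(c)V^{\alphap-1}$, its Hessian is positive definite. Hence $f_t$ is strictly convex, and it diverges at the boundary and at infinity exactly as $\fobj_\primal$ does. So $\hat v(t)$ exists, is unique, and is characterized by $\nabla f_t=0$, i.e.
\[
\rho_i(\hat v)^t\,\hat v(t)_i^{1+\alphap}=\sigma_i(\hat v(t))\quad\text{for all } i\in[m].
\]
At $t=0$ this is the Lewis weight fixed point, so $\hat v(0)=v_*$. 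At $t=1$ the identity $\rho_i(\hat v)\hat v_i^{1+\alphap}=\sigma_i(\hat v)$ is just the definition of $\rho$, so $\hat v(1)=\hat v$. The hypothesis $|\rho_i(\hat v)-1|\le 1/2$ only serves to make $g\coloneqq\ln\rho(\hat v)$ finite and bounded.

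\textbf{Step 2: implicit differentiation.} Taking logarithms of the optimality condition gives
\[
t\,g_i+(1+\alphap)x_i=\ln\sigma_i(v),\qquad v=e^{x}.
\]
Since $\sigma_i(v)=v_i\,a_i^\top(A^\top VA)^{-1}a_i$, the Jacobian of $\ln\sigma$ with respect to $x$ is $I-\Sigma^{-1}P^{(2)}$, where $\Sigma=\mDiag(\sigma(v))$ and $P=P(v)$. This yields
\[
(\alphap\Sigma+P^{(2)})\,\dot x=-\Sigma g .
\]
The matrix $\alphap\Sigma+P^{(2)}$ is positive definite, so the implicit function theorem makes $t\mapsto x(t)$ smooth and justifies the differentiation.

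\textbf{Step 3: splitting $\dot x$ and bounding each piece.} Write $\dot x=-g/\alphap+y$, which turns the equation into
\[
\alphap\Sigma y=P^{(2)}\bigl(g/\alphap-y\bigr)=-P^{(2)}\dot x .
\]
The first term contributes at most $\|g\|_\infty/\alphap$ in $\ell_\infty$.

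\emph{Pointwise bound on $P^{(2)}$.} For the remainder $y$, I will use that $(P^{(2)}z)_i=\sum_jP_{ij}^2z_j$ with $\sum_jP_{ij}^2=\sigma_i$ and $P_{ij}^2\le\sigma_i\sigma_j$. Cauchy--Schwarz then gives
\[
|(P^{(2)}z)_i|\le\sqrt{\sigma_i}\,\sqrt{\textstyle\sum_j P_{ij}^2 z_j^2}\le\sigma_i\,\|z\|_\Sigma .
\]
Hence $|y_i|\le\|\dot x\|_\Sigma/\alphap$.

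\emph{Bound on $\|\dot x\|_\Sigma$.} We have
\[
\|\dot x\|_\Sigma=\bigl\|\Sigma^{1/2}(\alphap\Sigma+P^{(2)})^{-1}\Sigma^{1/2}\,\Sigma^{1/2}g\bigr\|_2 .
\]
Because $P^{(2)}\succeq 0$, we have $\Sigma^{1/2}(\alphap\Sigma+P^{(2)})^{-1}\Sigma^{1/2}\preceq\alphap^{-1}I$. It follows that $\|\dot x\|_\Sigma\le\|g\|_\Sigma/\alphap$, and therefore $\|y\|_\infty\le\|g\|_{\Sigma(\hat v(t))}/\alphap^2$.

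Combining the two pieces gives the claimed bound. The main obstacle I expect is the pointwise control of $y$: the self-referential equation for $y$ has to be turned into an $\ell_\infty$ bound. The key idea is to route it through $\|\dot x\|_\Sigma$ via the row-sum identity for $P^{(2)}$, rather than trying to invert $\alphap\Sigma+P^{(2)}$ in $\ell_\infty$ directly.
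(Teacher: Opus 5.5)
Your proposal is correct. It follows essentially the argument behind Claim~1 of \cite{fazel22}, which the paper cites rather than reproduces: implicitly differentiate the optimality condition $\rho_i(\hat v)^t\hat v(t)_i^{1+\alphap}=\sigma_i(\hat v(t))$ to get $(\alphap\Sigma+P^{(2)})\dot x=-\Sigma\ln\rho(\hat v)$, peel off $-\ln\rho(\hat v)/\alphap$, and control the remainder through $\|\Sigma^{-1}P^{(2)}z\|_\infty\le\|z\|_\Sigma$ together with $\|\dot x\|_\Sigma\le\|\ln\rho(\hat v)\|_\Sigma/\alphap$.

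The only imprecision is in Step~1. $f_t$ need not blow up when a single $v_i\to 0$; it does so only when $A^\top VA$ degenerates. An interior minimizer still exists, for example because the substitution $u_i=\rho_i(\hat v)^{t/(1+\alphap)}v_i$ turns $f_t$ into $\fobj_\primal$ for the non-degenerate matrix with rows $\rho_i(\hat v)^{-t/(2+2\alphap)}a_i$.
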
 

\begin{lemma}\label{lem:multiplicative-leverage-score}
For any $\xi\in\R^m_{>0}$ that satisfies $\|\ln \xi\|_{\infty}=\gamma\leq 1/4$, we have
\begin{align*}
\left\|\frac{\sigma( \diag(\xi)\ma)}{\sigma(\ma)}\right\|_{\infty}\leq 1+8\gamma.
\end{align*}
\end{lemma}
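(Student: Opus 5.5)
Write $D=\diag(\xi)$ and $M=A^\top A$. For each $i$, by definition
\[
\frac{\sigma_i(DA)}{\sigma_i(A)}=\xi_i^2\cdot\frac{a_i^\top (A^\top D^2A)^{-1}a_i}{a_i^\top (A^\top A)^{-1}a_i}.
\]
I plan to bound the two factors separately, using only spectral comparisons. The ratio is over all coordinates, so no leverage-score structure is needed beyond this factorization.

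\textbf{First factor.} Since $\|\ln\xi\|_\infty=\gamma$, each entry satisfies $e^{-\gamma}\le\xi_i\le e^{\gamma}$, hence $\xi_i^2\le e^{2\gamma}$.

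\textbf{Second factor.} From the same entrywise bound, $D^2\succeq e^{-2\gamma}I$, so $A^\top D^2 A\succeq e^{-2\gamma}A^\top A$. Inverting reverses the order, giving $(A^\top D^2A)^{-1}\preceq e^{2\gamma}(A^\top A)^{-1}$. Evaluating this quadratic form at $a_i$ bounds the second factor by $e^{2\gamma}$.

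Combining the two factors gives $\sigma_i(DA)/\sigma_i(A)\le e^{4\gamma}$ for every $i$. Taking the maximum over $i$ (the ratio is positive, so the $\infty$-norm is just this maximum) yields $\|\sigma(DA)/\sigma(A)\|_\infty\le e^{4\gamma}$.

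\textbf{Final numeric step.} It remains to show $e^{4\gamma}\le 1+8\gamma$ for $\gamma\le 1/4$. Setting $x=4\gamma\in[0,1]$, the claim is $e^x\le 1+2x$. This holds because $e^x$ is convex, so on $[0,1]$ it lies below its chord $1+(e-1)x$, and $e-1\le 2$. This step is the only place where the hypothesis $\gamma\le1/4$ is used, and checking that the constant $8$ is compatible with it is the only point that needs care.

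The same reasoning also gives the matching lower bound $e^{-4\gamma}$, but the statement does not need it.
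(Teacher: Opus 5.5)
Your proof is correct and follows the paper's route: you factor out $\xi_i^2$ and then bound $(A^\top D^2A)^{-1}\preceq e^{2\gamma}(A^\top A)^{-1}$ by the Loewner order. Your numeric finish is actually the sound one. The paper replaces $e^{\gamma}$ by $1+2\gamma$ and then claims $(1+2\gamma)^4\le 1+8\gamma$, which fails for every $\gamma>0$ by Bernoulli's inequality, whereas keeping $e^{4\gamma}$ and using $e^{x}\le 1+(e-1)x\le 1+2x$ on $[0,1]$ correctly gives the stated constant $8$.
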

\begin{proof}
Given that $\|\ln \xi\|_{\infty}=\gamma\leq 1/4$, we have $\frac{1}{1+2\gamma}\leq \xi_i\leq 1+2\gamma$ for all $i\in[m]$. Then for each $i$, we have
\begin{align*}
\sigma_i( \xi \ma)
&=\xi_i^2a_i^\top(\ma^\top \diag(\xi)^2\ma)^{-1}a_i\\
&\leq (1+2\gamma)^2a_i^\top\left(\frac{\ma^\top\ma}{(1+2\gamma)^2}\right)^{-1}a_i\leq (1+2\gamma)^4\sigma_i(\ma)\leq (1+8\gamma)\sigma_i(\ma).
\end{align*}
\end{proof}

\begin{proof}[Proof of \cref{thm:one-sided-to-multiplicative}]
By Theorem~\ref{thm:one-sided-to-two-sided}, we have that $\widehat w$ is an $\epsilon_{\mathrm{two}}$-two-sided approximation of $\sigma_p(A)$ for some $
\epsilon_{\mathrm{two}}$ satisfying
\begin{align}
\label{eq:multiplicative-proof-two-sided-property}
\epsilon_{\mathrm{two}}
\leq3\barbetap n\epsilon_{\mathrm{one}}(1 + \epsilon_{\mathrm{one}})^{\barbetap}
\leq 6\barbetap n\epsilon_{\mathrm{one}}
\leq \frac18
\end{align}
by our choice of $\epsilon_{\mathrm{one}}$. Assume $\hat v=\widehat{w}^{1-\frac2p}$ satisfies $\|\ln(\hat{v}/v_{*})\|_{\infty}\leq\frac{1}{4}$, which will be justified later. Consider the vector function $\hat v(t)$ defined in \eqref{eq:hatv_t-def}, by \cref{thm:one-sided-to-two-sided} and Lemma~\ref{lem:integral}, we have
\begin{align*}
    \left\|\frac{\mathrm{d}}{\mathrm{d} t}\ln\left(\frac{\hat v(t)}{\hat v(1)}\right)\right\|_\infty
    \leq\frac{\|\ln \hat\rho\|_\infty}{\alphap}
    +\frac{\|\ln \hat\rho\|_{\Sigma(\hat v(t))}}{\alphap^2}
    \leq\frac{2\epsilon_{\mathrm{two}}}{\alphap}+\frac{\|\ln \hat\rho\|_{\Sigma(\hat v(t))}}{\alphap^2}.
\end{align*}
By Lemma~\ref{lem:multiplicative-leverage-score},
\begin{align*}
\|\ln \hat\rho\|_{\Sigma(\hat v(t))}
\leq \|\ln \hat\rho\|_{\Sigma(\hat{v})}\cdot\left\|\frac{\sigma(\hat v(t))}{\sigma(\hat{v}(1))}\right\|_{\infty}
\leq  \|\ln \hat\rho\|_{\Sigma(\hat{v})}\left(1+8\left\|\ln\left(\frac{\hat{v}(t)}{\hat{v}(1)}\right)\right\|_{\infty}\right).
\end{align*}
Moreover, since
\[
\left\|\ln\left(\frac{\hat{v}(t)}{\hat{v}(1)}\right)\right\|_{\infty}\leq \left\|\ln\left(\frac{\hat{v}(0)}{\hat{v}(1)}\right)\right\|_{\infty}\leq\frac14,
\]
and $\|\ln \hat\rho\|_{\Sigma(\hat{v})}\leq 4 \epsilon_{\mathrm{two}}$ by Lemma~\ref{lem:ln-sigma-norm}, we obtain
\begin{align*}
    \left\|\frac{\mathrm{d}}{\mathrm{d} t}\ln\left(\frac{\hat{v}(t)}{\hat{v}(1)}\right)\right\|_\infty
    &\leq\frac{p-2}{2} \epsilon_{\mathrm{two}}+4 \epsilon_{\mathrm{two}}^{1/2}\left(\frac{p-2}{2}\right)^2\left(1+8\left\|\ln\left(\frac{\hat{v}(t)}{\hat{v}(1)}\right)\right\|_{\infty}\right)\\
    &\leq(p-2) \epsilon_{\mathrm{two}}+4(p-2)^2 \epsilon_{\mathrm{two}}^{1/2}\leq\frac{1}{4}.
\end{align*}
Integrating over $t\in[0,1]$ gives
\begin{align*}
\left\|\ln\left(\frac{\hat{v}(t)}{\hat{v}(1)}\right)\right\|_{\infty}
\leq (p-2)\left(4p-7\right) \epsilon_{\mathrm{two}}^{1/2}
\leq (p-2)\left(4p-7\right)\sqrt{6\barbetap n\epsilon_{\mathrm{one}}}
\leq\frac{1}{4},\quad\forall t\in[0,1],
\end{align*}
where the second inequality uses \eqref{eq:multiplicative-proof-two-sided-property}. This verifies the assumption made at the beginning of the proof. Since $\hat v(0)=v_{*}$, it follows that
\begin{align*}
\exp\big(-(p-2)\left(4p-7\right) \epsilon_{\mathrm{two}}^{1/2}\big)v_{*}
\leq \hat v
\leq\exp\big((p-2)\left(4p-7\right) \epsilon_{\mathrm{two}}^{1/2}\big)v_{*}
\end{align*}
and thus $(1-\epsilon_{\mathrm{est}})\sigma_p(A)\leq \widehat w\leq (1+\epsilon_{\mathrm{est}})\sigma_p(A)$.
\end{proof}

\subsection{From approximate minimizers of \texorpdfstring{$\fobj_\primal$}{Fvec} to two-sided approximations}\label{sec:approx-min-to-lewis}

Here we give a postprocessing step that turns an approximate minimizer of $\mathcal{F}_{\mathrm{vec}}$ satisfying a mild upper bound on $\rho_{\max}$ into a two-sided approximation of $\sigma_p(A)$.

\theoapproxtotwo*

For comparison, prior work established the following conversion from approximate optimality to estimates. Ours achieves two-sided approximations instead of estimates, but without polynomial overhead in terms of the dimension.
\begin{lemma}[Lemma 1 of \cite{fazel22}]\label{lem:optimal-to-lewis-weight-primal} 
For any $v \in \mathbb{R}^m_{>0}$ satisfying $\rho_\max(v)\leq 1+\alphap$ and $\fobj_\primal(v)-\fobj_\primal(v_{*})\leq \tilde\epsilon$ with
\[
\tilde{\epsilon}
= \frac{\alphap^8 \epsilon^4}{\bigl(25 m (\sqrt{n}+\alphap)(\alphap+\alphap^{-1})\bigr)^4},
\]
Then, the vector $\widehat{w}$ defined as $\widehat{w}_i = \bigl(a_i^\top (A^\top V A)^{-1} a_i\bigr)^{1+1/\alphap}$, is an $\epsilon$-estimate of $\sigma_p(A)$.
\end{lemma}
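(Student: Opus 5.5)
The plan is to avoid working with $\rho(v)$ directly: it has no useful lower bound, so $\ln\rho(v)$ can be unbounded. Instead I would pass to the vector that the output $\widehat w$ actually encodes. Set $M\coloneqq(A^\top VA)^{-1}$ and define $v'\in\R^m_{>0}$ by $[v']_i^{\alphap}\coloneqq a_i^\top M a_i=\rho_i(v)v_i^{\alphap}$, that is, $v'=v\,\rho(v)^{\betap}$ coordinatewise. Then $\widehat w_i=(a_i^\top Ma_i)^{1+\betap}=[v']_i^{1+\alphap}$. Hence it suffices to show $\|\ln(v'/v_*)\|_\infty\le \epsilon/(2(1+\alphap))$; exponentiating then gives that $\widehat w$ is an $\epsilon$-estimate. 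The argument has three steps: spectral closeness of $H'\coloneqq A^\top V'A$ to $H\coloneqq A^\top VA$; a two-sided bound on $\rho(v')$; and the homotopy of Lemma~\ref{lem:integral} applied at $v'$.

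\textbf{Step 1 (spectral closeness).} Exactly as in the proof of Lemma~\ref{lem:mult_diff}, split $V'-V$ into positive and negative parts and bound the Schatten-1 norm by a trace:
\[
\|H^{-1/2}(H'-H)H^{-1/2}\|\le\sum_{i}|v'_i-v_i|\,a_i^\top Ma_i=\sum_i v_i^{1+\alphap}\rho_i(v)\,|\rho_i(v)^{\betap}-1|=:\delta .
\]
Because $\rho_i\le 1+\alphap=1+1/\betap$, the mean value theorem gives $|\rho_i^{\betap}-1|\le e\,\barbetap|\rho_i-1|$. I would then apply Cauchy--Schwarz with weights $v_i^{1+\alphap}(\rho_i+1)$:
\[
\delta\lesssim \barbetap\Big(\sum_i v_i^{1+\alphap}\tfrac{(\rho_i-1)^2}{\rho_i+1}\Big)^{1/2}\Big(\sum_i v_i^{1+\alphap}\rho_i^2(\rho_i+1)\Big)^{1/2}.
\]
The first factor is at most $\sqrt{6\baralphap\tilde\epsilon}$ by Lemma~\ref{lem:fobj_primal_gap_lower}. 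For the second factor, $\sum_i v_i^{1+\alphap}\rho_i=\sum_i\sigma_i(v)=n$ and $\rho_i\le 1+\alphap$, so it is $O((1+\alphap)^2 n)$. This gives $\delta=O(\poly(\alphap,\alphap^{-1})\sqrt{n\tilde\epsilon})$. The crude factors $m$ in the stated $\tilde\epsilon$ leave plenty of slack, for instance if one instead bounds $v_i\le$ const and sums over $m$ terms.

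\textbf{Step 2 (two-sided bound on $\rho(v')$).} By construction,
\[
\rho_i(v')=\frac{a_i^\top H'^{-1}a_i}{[v'_i]^{\alphap}}=\frac{a_i^\top H'^{-1}a_i}{a_i^\top H^{-1}a_i}\in\Big[\tfrac1{1+\delta},\tfrac1{1-\delta}\Big].
\]
This gives $\|\ln\rho(v')\|_\infty\le2\delta$, and trivially $\|\ln\rho(v')\|_{\Sigma(u)}\le 2\delta\sqrt n$ for any $u$, since leverage scores sum to $n$. In particular $|\rho_i(v')-1|\le 1/2$, so the hypothesis of Lemma~\ref{lem:integral} holds at $\hat v\coloneqq v'$.

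\textbf{Step 3 (homotopy).} Lemma~\ref{lem:integral} bounds $\|\tfrac{\d}{\d t}\ln(\hat v(t)/\hat v(1))\|_\infty$ by $2\delta/\alphap+2\delta\sqrt n/\alphap^2$ uniformly in $t$. Integrating over $t\in[0,1]$, and using $\hat v(0)=v_*$ and $\hat v(1)=v'$, gives $\|\ln(v'/v_*)\|_\infty\le 2\delta(\alphap^{-1}+\sqrt n\,\alphap^{-2})$. Using the crude $\sqrt n$ bound on the weighted norm avoids needing the bootstrapping argument from the proof of Theorem~\ref{thm:one-sided-to-multiplicative}. Finally, plug in the value of $\tilde\epsilon$: $\delta$ scales like $\tilde\epsilon^{1/2}$, while $\tilde\epsilon$ scales like $\epsilon^4\alphap^8/(m(\sqrt n+\alphap)(\alphap+\alphap^{-1}))^4$, so the right-hand side is at most $\epsilon/(2(1+\alphap))$ with room to spare.

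The main obstacle is Step 1, specifically the regime $\betap$ large (that is, $p$ large). There $v'=v\rho^{\betap}$ could differ wildly from $v$. The invariant $\rho_{\max}\le1+\alphap$ is exactly what keeps $\rho_i^{\betap}\le e$, so that the deviation $|\rho_i^{\betap}-1|$ stays linear in $|\rho_i-1|$ with only an $O(\barbetap)$ factor. Coordinates with tiny $\rho_i$ are harmless because their weight $v_i^{1+\alphap}\rho_i=\sigma_i(v)$ enters linearly. I would double-check this constant-tracking first, and then confirm that the stated polynomial $\tilde\epsilon$ dominates all the resulting factors.
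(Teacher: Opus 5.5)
The paper only quotes this lemma from \cite{fazel22}, so there is no in-paper proof to compare with. Your architecture is natural and reuses the paper's own tools. Your $v'=v\,\rho(v)^{\betap}$ is exactly the vector $\widetilde v$ of Lemma~\ref{lem:mult_diff_S} with $S=[m]$, so Steps~1--2 are that lemma. Step~3 is the homotopy of Lemma~\ref{lem:integral}, with the crude $\sqrt n$ bound replacing the bootstrap in the proof of Theorem~\ref{thm:one-sided-to-multiplicative}. The individual inequalities hold. (For $\betap<1$ and $\rho_i<1$, the bound $|\rho_i^{\betap}-1|\le|\rho_i-1|$ comes from $\rho_i^{\betap}\ge\rho_i$, not from a derivative bound, but it is true.) The gap is the final parameter check. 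Your claim that the stated $\tilde\epsilon$ absorbs all polynomial factors in $\alphap$ ``with room to spare'' is false as $p\downarrow 2$, i.e.\ $\alphap\to\infty$, and the lemma is asserted for every $p>2$.

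\textbf{Where it fails.} In $\tilde\epsilon$ the factor $\alphap^8$ is exactly cancelled by $(\sqrt n+\alphap)^4(\alphap+\alphap^{-1})^4\sim\alphap^8$. Hence $\tilde\epsilon\to\epsilon^4/(25m)^4$ as $\alphap\to\infty$; it does not decay in $\alphap$. For $\alphap\ge1$ (so $\barbetap=1$), your Step~1 gives $\delta\le e\sqrt{6\alphap(1+\alphap)(2+\alphap)\,n\tilde\epsilon}\approx e\sqrt6\,\alphap^{3/2}\sqrt{n\tilde\epsilon}$. A factor $\sqrt{\alphap}$ comes from $\baralphap$ in Lemma~\ref{lem:fobj_primal_gap_lower}, and a factor $\alphap$ from using $\rho_i\le 1+\alphap$ in the second Cauchy--Schwarz factor. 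Step~3 then gives $(1+\alphap)\|\ln(v'/v_*)\|_\infty\le 2\delta(1+\alphap)(\alphap+\sqrt n)/\alphap^{2}$, which is $\Theta(\delta)$ once $\alphap\ge\sqrt n$. So your bound on $\delta$ must be $O(\epsilon)$, i.e.\ $\alphap^{3/2}\sqrt n\,\epsilon^2/m^2=O(\epsilon)$. This fails once $\alphap^{3/2}\gg m^2/(\sqrt n\,\epsilon)$. For still larger $\alphap$ you do not even get $\delta\le 1/3$, which you need for the hypothesis $|\rho_i(v')-1|\le 1/2$ of Lemma~\ref{lem:integral}. The slack in $m$ cannot help, because $\alphap$ is independent of $m$.

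\textbf{What you actually prove, and how to improve it.} As written, you prove the lemma only for $\alphap$ polynomially bounded in $m/\epsilon$ (with ample room for $p\ge 4$). Alternatively, it holds for all $p>2$ under the extra requirement $\tilde\epsilon=O(\epsilon^2/(\alphap^{3}n))$. To reduce the loss you must keep the factor $\betap$ when $\betap<1$, e.g.\ via $\rho\,|\rho^{\betap}-1|\le\betap\max\{1,\rho\}\,|\rho-1|$. You should also treat the extreme coordinates separately instead of using the blanket bound $\rho_i\le1+\alphap$. Lemma~\ref{lem:fobj_primal_gap_lower} gives $\sum_{i:\rho_i>2}\sigma_i(v)\le 36\baralphap\tilde\epsilon$ and $\sum_{i:\rho_i<1/2}v_i^{1+\alphap}\le 36\baralphap\tilde\epsilon$, and on the first set $\rho_i^{\betap}-1\le e\betap\ln(1+\alphap)$. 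This reduces the excess to a term of order $\tilde\epsilon\ln(1+\alphap)$. That term is still not uniform as $p\downarrow 2$. Covering all $p>2$ with the stated $\tilde\epsilon$ therefore needs a further idea, such as an $\alphap$-uniform lower bound on the gap for coordinates with large $\rho_i$.
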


We first prove a key lemma that allows us to compare quadratic forms associated with $\widetilde{w}$ and~$w$.

\begin{lemma}
\label{lem:mult_diff_S}
Let $v\in\R_{>0}^{m}$, $S \subseteq [m]$, and define 
\[
\widetilde v_i = \begin{cases} \left(\sigma_i(v)/v_i\right)^{1/\alphap} & \text{ if } i \in S \\ v_i & \text{ otherwise}. \end{cases}
\]
Then 
\[
(1-\delta) A^\top V A
\preceq A^\top \widetilde V A
\preceq (1+\delta) A^\top V A,
\]
where $\delta \coloneqq \sum_{i \in S} |\widetilde v_i^{1+\alphap} - \sigma_i(v)|$.
\end{lemma}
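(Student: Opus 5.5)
This is the analogue of Lemma~\ref{lem:mult_diff} and I would follow its proof. Write $H\coloneqq A^\top VA$ and $\widetilde H\coloneqq A^\top\widetilde VA$. Then
\[
\widetilde H-H=A^\top D A,\qquad D\coloneqq\mDiag(\widetilde v-v),
\]
and $D$ is supported on $S$. Split $D=D_+-D_-$ into its positive and negative parts, both diagonal and PSD. This gives
\[
\big\|H^{-1/2}(\widetilde H-H)H^{-1/2}\big\|\le \big\|H^{-1/2}(\widetilde H-H)H^{-1/2}\big\|_1\le \tr\big[H^{-1/2}A^\top D_+AH^{-1/2}\big]+\tr\big[H^{-1/2}A^\top D_-AH^{-1/2}\big].
\]
The Schatten-1 norm of each PSD piece equals its trace.

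Each trace is then $\sum_{i}[D_\pm]_{ii}\,a_i^\top H^{-1}a_i$. Since $a_i^\top H^{-1}a_i=\sigma_i(v)/v_i$, summing both pieces yields
\[
\big\|H^{-1/2}(\widetilde H-H)H^{-1/2}\big\|_1\le\sum_{i\in S}|\widetilde v_i-v_i|\,\frac{\sigma_i(v)}{v_i}.
\]

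It remains to identify this sum with $\delta$. For $i\in S$ the definition gives $\widetilde v_i^{\alphap}=\sigma_i(v)/v_i$. Hence
\[
|\widetilde v_i-v_i|\,\frac{\sigma_i(v)}{v_i}=|\widetilde v_i-v_i|\,\widetilde v_i^{\alphap}=\big|\widetilde v_i^{1+\alphap}-v_i\widetilde v_i^{\alphap}\big|=\big|\widetilde v_i^{1+\alphap}-\sigma_i(v)\big|.
\]
The right-hand side is exactly the summand in $\delta$. So $\|H^{-1/2}(\widetilde H-H)H^{-1/2}\|\le\delta$, which is equivalent to $(1-\delta)H\preceq\widetilde H\preceq(1+\delta)H$.

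There is no real obstacle here. The only step that needs care is this last algebraic identification: one has to use that $\widetilde v_i^{\alphap}$ equals the per-row quantity $\sigma_i(v)/v_i$, so that the weighted $\ell_1$ change collapses to $\delta$. This plays the same role as the substitution $\widehat w_i^{1-2/p}\cdot w_i^{-(1-2/p)}\sigma_i$ in Lemma~\ref{lem:mult_diff}.
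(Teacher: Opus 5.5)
Your proposal is correct and follows the paper's proof essentially step for step. Like the paper, you split $\widetilde v - v$ into positive and negative parts, bound the Schatten-1 norm of $H^{-1/2}(\widetilde H-H)H^{-1/2}$ by the two traces $\sum_i [D_\pm]_{ii}\,\sigma_i(v)/v_i$, and identify the resulting weighted $\ell_1$ sum with $\delta$ (your check via $\widetilde v_i^{\alphap}=\sigma_i(v)/v_i$ spells out the algebra the paper states without detail).
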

\begin{proof}
Let $\Delta\defeq \widetilde{v}-v$ and let $\Delta_{+}\defeq\max\{\Delta,\vzero\}$
and $\Delta_{-}\defeq\max\{-\Delta,\vzero\}$ entrywise so that $\Delta_{+},\Delta_{-}\in\R_{\geq0}^{m}$,
$\Delta=\Delta_{+}-\Delta_{-}$. Let $H = A^\top V A$ and $\widetilde H = A^\top \widetilde V A$. Then, we have
\begin{align*}
\normFull{\mh^{-1/2}(\widetilde  H-H)\mh^{-1/2}}_{1} 
&\leq\normFull{\mh^{-1/2}\big(\ma^{\top}\Delta_{+}\ma\big)\mh^{-1/2}}_{1}
+\normFull{\mh^{-1/2}\big(\ma^{\top}\Delta_{-}\ma\big)\mh^{-1/2}}_{1}\,.
\end{align*}
We then use the fact $\Delta_+$ is positive semidefinite, and therefore so is $\mh^{-1/2}\left(\ma^{\top}\Delta_{+}\ma\right)\mh^{-1/2}$, to upper bound the spectral norm by the trace: 
\begin{align*}
\normFull{\mh^{-1/2}\big(\ma^{\top}\Delta_{+}\ma\big)\mh^{-1/2}}_{1}
&=\tr\left[\mh^{-1/2}\big(\ma^{\top}\Delta_{+}\ma\big)\mh^{-1/2}\right]=\sum_{i\in[m]}\left[\Delta_{+}\right]_{i}\left[\ma(\ma^{\top}\mv\ma)^{-1}\ma^{\top}\right]_{ii}\\
& =\sum_{i\in[m]}\left[\Delta_{+}\right]_{i}\cdot\frac{\sigma_i(v)}{v_i}\,.
\end{align*}
By symmetry the same bound holds for $\Delta_{-}$.
Combining the two bounds yields that 
\[
\normFull{\mh^{-1/2}(\ma^{\top}\Delta\ma)\mh^{-1/2}}_1
\leq\sum_{i\in[m]}\left|\Delta_{i}\right|\cdot\frac{\sigma_i(v)}{v_{i}}
\]
Recalling that $\Delta = \widetilde{v}- v$ we obtain the desired upper bound:
\[
\sum_{i\in[m]}\left|\Delta_{i}\right|\cdot\frac{\sigma_i(v)}{v_{i}}
=
\sum_{i \in S} |\widetilde v_i^{1+\alphap} - \sigma_i(v)| = \delta.
\]
Finally, we conclude that
\begin{align*}
    \normFull{\mh^{-1/2}(\ma^{\top}\Delta\ma)\mh^{-1/2}}
    \leq \normFull{\mh^{-1/2}(\ma^{\top}\Delta\ma)\mh^{-1/2}}_1 
    \leq \delta
\end{align*}
and thus $(1-\delta)H
\preceq \widetilde H
\preceq (1+\delta)H$.
\end{proof}

\begin{lemma}\label{lem:function_value_gap_to_small}
    For any $0<\epsilon<1$ and any $v\in\R^m_{>0}$ we have 
    \[
    \sum_{i \in [m]: \rho_i(v) \leq 1-\eps} v_i^{1+\alphap} \leq \frac{6 \baralphap\left(\fobj_\primal(v)-\fobj_\primal(v_*)\right) (1+\rho_\max(v))}{\eps^2}.
    \]
\end{lemma}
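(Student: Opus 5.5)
This is a direct consequence of Lemma~\ref{lem:fobj_primal_gap_lower}, obtained by restricting its sum to the indices of interest and lower-bounding the weight $\frac{(\rho_i(v)-1)^2}{\rho_i(v)+1}$ uniformly on that set. The argument mirrors the bound on $\sum_{i\notin\Gamma}w_i$ in the proof of Lemma~\ref{lem:function_value_gap_to_one_sidedness}.

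Let $S\coloneqq\{i\in[m]:\rho_i(v)\le 1-\epsilon\}$. Every summand in Lemma~\ref{lem:fobj_primal_gap_lower} is nonnegative, so dropping the indices outside $S$ gives
\[
\fobj_\primal(v)-\fobj_\primal(v_*)\ \ge\ \frac{1}{6\baralphap}\sum_{i\in S} v_i^{1+\alphap}\,\frac{(\rho_i(v)-1)^2}{\rho_i(v)+1}.
\]
For $i\in S$ we have $0<\rho_i(v)\le 1-\epsilon$, hence $(\rho_i(v)-1)^2\ge \epsilon^2$. For the denominator, $\rho_i(v)+1\le 1+\rho_\max(v)$ holds trivially. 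In fact $\rho_i(v)+1\le 2$ on $S$, which is stronger, but the statement only asks for $1+\rho_\max(v)$.

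Therefore, for each $i\in S$,
\[
\frac{(\rho_i(v)-1)^2}{\rho_i(v)+1}\ \ge\ \frac{\epsilon^2}{1+\rho_\max(v)}.
\]
Inserting this into the restricted sum gives
\[
\fobj_\primal(v)-\fobj_\primal(v_*)\ \ge\ \frac{\epsilon^2}{6\baralphap(1+\rho_\max(v))}\sum_{i\in S}v_i^{1+\alphap}.
\]
Rearranging yields the claimed bound.

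There is no real obstacle. The only points to check are that the terms dropped outside $S$ are nonnegative, and that $\rho_i(v)>0$, so the denominator is positive and the inequalities go in the stated directions.
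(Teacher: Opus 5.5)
Your proof is correct and is essentially the paper's argument: restrict the sum in Lemma~\ref{lem:fobj_primal_gap_lower} to the indices with $\rho_i(v)\le 1-\epsilon$, then use $(\rho_i(v)-1)^2\ge\epsilon^2$ and $\rho_i(v)+1\le 1+\rho_\max(v)$ on that set. One small inaccuracy sits in an aside you never use: $\rho_i(v)+1\le 2$ is only stronger than $\rho_i(v)+1\le 1+\rho_\max(v)$ when $\rho_\max(v)\ge 1$, and this can fail, e.g.\ $v=c\,v_*$ with $c>1$ gives $\rho_\max(v)=c^{-(1+\alphap)}<1$.
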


\begin{proof}
    This follows immediately from Lemma~\ref{lem:fobj_primal_gap_lower}. 
    Indeed, 
    \begin{align*}
        \sum_{i \in [m]: \rho_i(v) \leq 1-\eps} v_i^{1+\alphap} &\leq \sum_{i \in [m]: \rho_i(v) \leq 1-\eps} v_i^{1+\alphap} \frac{(\rho_i(v)-1)^2}{\eps^2} \frac{\rho_\max(v)+1}{\rho_i(v)+1} \\
        &\leq \frac{\rho_\max(v)+1}{\eps^2} \sum_{i \in [m]} v_i^{1+\alphap} \frac{(\rho_i(v)-1)^2}{\rho_i(v)+1} \\
        &\leq \frac{6 \baralphap\left(\fobj_\primal(v)-\fobj_\primal(v_{*})\right) (1+\rho_\max(v))}{\eps^2},
    \end{align*}
    where the last inequality uses Lemma~\ref{lem:fobj_primal_gap_lower}.
\end{proof}
\begin{lemma} \label{lem:post-processing-error}
Let $v\in \R^m_{>0}$ be such that $\fobj_\primal(v)-\fobj_\primal(v_{*}) \leq \eps^3$ and $\rho_\max(v)\leq 1+\eps$. Let $\widetilde v$ be defined as follows
\[
\widetilde v_i = \begin{cases} \left(\sigma_i(v)/v_i\right)^{1/\alphap} & \text{ if } \rho_i(v) \leq 1-\eps \\ v_i & \text{ otherwise}. \end{cases}
\]
Then $\sum_{i \in [m]: \rho_i(v) \leq 1-\eps} |\widetilde v_i^{1+\alphap} - \sigma_i(v)| \leq 12\baralphap(1+\eps) \eps$.
\end{lemma}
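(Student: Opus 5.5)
Fix $S\coloneqq\{i\in[m]:\rho_i(v)\le 1-\eps\}$ and bound each summand of $\sum_{i\in S}|\widetilde v_i^{1+\alphap}-\sigma_i(v)|$ by a multiple of $v_i^{1+\alphap}$. Then apply Lemma~\ref{lem:function_value_gap_to_small} to the sum of the $v_i^{1+\alphap}$ over $S$.

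For $i\in S$, by definition $\widetilde v_i^{\alphap}=\sigma_i(v)/v_i$. Also $\sigma_i(v)=v_i\,a_i^\top(A^\top VA)^{-1}a_i=\rho_i(v)\,v_i^{1+\alphap}$. Hence
\[
\widetilde v_i=\rho_i(v)^{1/\alphap}v_i,
\qquad
\widetilde v_i^{1+\alphap}=\rho_i(v)^{1+1/\alphap}v_i^{1+\alphap}.
\]
Each summand is therefore
\[
\bigl|\rho_i(v)^{1+1/\alphap}-\rho_i(v)\bigr|\,v_i^{1+\alphap}
=\rho_i(v)\bigl|1-\rho_i(v)^{1/\alphap}\bigr|\,v_i^{1+\alphap}.
\]
On $S$ we have $0<\rho_i(v)\le 1-\eps<1$. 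So $\rho_i(v)\le 1$ and $0\le 1-\rho_i(v)^{1/\alphap}\le 1$, and each summand is at most $v_i^{1+\alphap}$. This crude bound suffices; we do not need the sharper fact that the difference is of order $|1-\rho_i|$.

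Summing over $S$ and invoking Lemma~\ref{lem:function_value_gap_to_small} gives
\[
\sum_{i\in S}|\widetilde v_i^{1+\alphap}-\sigma_i(v)|\le\sum_{i\in S}v_i^{1+\alphap}\le\frac{6\baralphap\bigl(\fobj_\primal(v)-\fobj_\primal(v_*)\bigr)\bigl(1+\rho_\max(v)\bigr)}{\eps^2}.
\]
Now plug in the two hypotheses, $\fobj_\primal(v)-\fobj_\primal(v_*)\le\eps^3$ and $1+\rho_\max(v)\le 2+\eps$. The right-hand side becomes at most $6\baralphap(2+\eps)\eps\le 12\baralphap(1+\eps)\eps$, which is the claim.

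There is no real obstacle here. The only points to check are the identity $\sigma_i(v)=\rho_i(v)v_i^{1+\alphap}$, which follows from the definition of $\rho_i$, and the constant bookkeeping $6(2+\eps)\le 12(1+\eps)$.
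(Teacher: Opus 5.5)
Your proof is correct and is the paper's argument: on $S$ both $\widetilde v_i^{1+\alphap}=\rho_i(v)^{1+1/\alphap}v_i^{1+\alphap}$ and $\sigma_i(v)=\rho_i(v)v_i^{1+\alphap}$ are at most $v_i^{1+\alphap}$, and Lemma~\ref{lem:function_value_gap_to_small} then finishes. Your per-term bound is actually slightly sharper than the paper's: the paper uses the triangle inequality to bound each term by $2v_i^{1+\alphap}$, which taken literally gives only $12\baralphap(2+\eps)\eps$, whereas your bound $v_i^{1+\alphap}$ (the gap between two numbers in $[0,v_i^{1+\alphap}]$) is what yields the stated constant $12\baralphap(1+\eps)\eps$.
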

\begin{proof}
Apply the triangle inequality. Observe that for each $i$ for which $\rho_i(v) \leq 1-\eps$ we have: $\widetilde v_i^{1+\alphap} \leq v_i^{1+\alphap}$ since $\left(\sigma_i(v)/v_i\right)^{1/\alphap} = v_i \rho_i(v)^{1/\alphap}$, and $\sigma_i(v) \leq v_i^{1+\alphap}$. Finally, apply Lemma~\ref{lem:function_value_gap_to_small}.     
\end{proof}

\begin{proof}[Proof of Theorem~\ref{thm:approx-min-to-two-sided}]
We first note that by Lemma~\ref{lem:mult_diff_S} and Lemma~\ref{lem:post-processing-error} we have 
\[
(1-\delta) \ma^{\top}V\ma 
\preceq \ma^{\top}\widetilde V\ma
\preceq (1+\delta) \ma^{\top}V\ma
\]
for $\delta = 24\baralphap\eps$. This shows that 
\begin{align} \label{eq:spec}
\frac{1}{1+\delta} \big( \ma^{\top}V\ma \big)^+
\preceq \big( \ma^{\top}\widetilde V\ma \big)^+
\preceq \frac{1}{1-\delta} \big( \ma^{\top}V \ma \big)^+,
\end{align}

We now bound $\rho_i(\widetilde v)$. 
We distinguish two cases: $\rho_i(v) <1-\eps$ and $\rho_i(v) \geq 1-\eps$. First, when $\rho_i(v) <1-\eps$ we have $\widetilde v_i = (\sigma_i(v)/v_i)^{1/\alphap}$, and therefore 
\begin{align*}
\rho_i(\widetilde v)
& =\widetilde v_{i}^{-\alphap}\cdot\left[\ma(\ma^{\top}\widetilde V\ma)^{-1}\ma^{\top}\right]_{ii}
=\frac{\left[\ma(\ma^{\top}\widetilde V\ma)^{+}\ma^{\top}\right]_{ii}}{\left[\ma(\ma^{\top}V\ma)^{+}\ma^{\top}\right]_{ii}}\,.
\end{align*}
Eq.~\eqref{eq:spec} then shows that $\rho_i(\widetilde v) \in [(1+\delta)^{-1},(1-\delta)^{-1}]$. 

Second, when $\rho_i(v) \geq 1-\eps$ we proceed as follows. We have $\widetilde v_i = v_i$ and therefore 
\begin{align*}
\rho_i(\widetilde v) & = v_{i}^{-\alphap}\cdot\left[\ma(\ma^{\top}\widetilde V\ma)^{-1}\ma^{\top}\right]_{ii} = \rho_i(v) \cdot \frac{\left[\ma(\ma^{\top}\widetilde V\ma)^{-1}\ma^{\top}\right]_{ii}}{\left[\ma(\ma^{\top} V\ma)^{-1}\ma^{\top}\right]_{ii}} .
\end{align*}
Eq.~\eqref{eq:spec} then shows that $\rho_i(\widetilde v) \in [(1+\delta)^{-1} \rho_i(v),(1-\delta)^{-1} \rho_i(v)] \subseteq [\frac{1-\eps}{1+\delta},\frac{1+\eps}{1-\delta}]$.

Combining the two cases shows that $\widetilde w=\widetilde v^{1+\alphap}$ is a two-sided $\tilde \eps$-approximation of $\sigma_p(A)$ for $\tilde \eps = \frac{\delta+\eps}{1-\eps} = 25\baralphap\eps/(1-\eps)\leq 50\baralphap\eps$. 
\end{proof}

\subsection{Improved analysis of Lee's algorithm}\label{sec:improved-analysis-Lee's-alg}

In this section, we show how to use \Cref{thm:one-sided-to-two-sided} to prove that a variation of Lee's algorithm~\cite{lee16} computes a two-sided $\epsilon$-approximation of $\sigma_p(A)$ using approximate leverage score computations, at the expense of a $\poly(n,p)$-overhead in precision and a dimension dependent number of iterations, see \cref{thm:algo}.

To prove this result, we first establish a simple lemma that shows that two-sided approximation is ``stable'' with respect to a multiplicative change (i.e., if $w$ is a two-sided approximation then so is its multiplicative approximation).

\begin{lemma} \label{lem:stab}
    Let $\gamma \geq 1$.
    Let $w,\widetilde w \in \R^m_{> 0}$ be such that $\gamma^{-1} \widetilde w_i \leq w_i \leq \gamma \widetilde w_i$ for all $i \in [m]$. Then 
    \[
    \gamma^{-1} \frac{\sigma_i(\widetilde W^{\frac{1}{2}-\frac{1}{p}}\ma)}{\widetilde w_i} \leq \frac{\sigma_i(W^{\frac{1}{2}-\frac{1}{p}}\ma)}{w_i} \leq \gamma \frac{\sigma_i(\widetilde W^{\frac{1}{2}-\frac{1}{p}}\ma)}{\widetilde w_i}
    \]
\end{lemma}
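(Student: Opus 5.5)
The plan is to write the ratio $\sigma_i(W^{\frac12-\frac1p}A)/w_i$ in a form where the coordinate $w_i$ cancels, and then compare the two weighted Gram matrices in the Loewner order. For any $u\in\R^m_{>0}$ we have
\[
\frac{\sigma_i(U^{\frac12-\frac1p}A)}{u_i}=u_i^{-\frac2p}\,a_i^\top\big(A^\top U^{1-\frac2p}A\big)^{-1}a_i .
\]
This is the same identity used in the proof of Theorem~\ref{thm:one-sided-to-two-sided}. So it suffices to control two quantities: the scalar factor $u_i^{-2/p}$ and the quadratic form $a_i^\top(A^\top U^{1-2/p}A)^{-1}a_i$, each under the multiplicative perturbation $w\leftrightarrow\widetilde w$.

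\textbf{Step 1 (matrix comparison).} Since $p>2$, the exponent $1-\frac2p$ lies in $(0,1)$. From $\gamma^{-1}\widetilde w\le w\le\gamma\widetilde w$ we get $\gamma^{-(1-2/p)}\widetilde w^{1-2/p}\le w^{1-2/p}\le\gamma^{1-2/p}\widetilde w^{1-2/p}$ entrywise, so
\[
\gamma^{-(1-2/p)}A^\top\widetilde W^{1-\frac2p}A\preceq A^\top W^{1-\frac2p}A\preceq\gamma^{1-2/p}A^\top\widetilde W^{1-\frac2p}A .
\]
Inverting reverses the order and yields $a_i^\top(A^\top W^{1-2/p}A)^{-1}a_i\in[\gamma^{-(1-2/p)},\gamma^{1-2/p}]\cdot a_i^\top(A^\top\widetilde W^{1-2/p}A)^{-1}a_i$.

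\textbf{Step 2 (scalar factor).} Directly, $w_i^{-2/p}\in[\gamma^{-2/p},\gamma^{2/p}]\cdot\widetilde w_i^{-2/p}$.

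\textbf{Step 3 (combine).} Multiplying the two bounds, the exponents add to $(1-\frac2p)+\frac2p=1$. This gives exactly the factors $\gamma^{-1}$ and $\gamma$ in the statement.

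There is no real obstacle. The only point needing care is the direction of monotonicity. The map $u\mapsto u^{-2/p}$ is decreasing, but its range is symmetric under $\gamma\leftrightarrow\gamma^{-1}$, so the exponents still combine to give exactly $\gamma^{\pm1}$. Because the exponent $1-\frac2p$ is positive, Step 1 goes in the direction written above.
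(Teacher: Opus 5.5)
Your proposal is correct and follows essentially the same route as the paper. The paper likewise writes the ratio as $u_i^{-2/p}\,a_i^\top(A^\top U^{1-2/p}A)^{-1}a_i$, bounds the scalar factor by $\gamma^{2/p}$ and the inverted Gram matrix by $\gamma^{1-2/p}$, multiplies to get $\gamma$, and obtains the other inequality by swapping $w$ and $\widetilde w$.
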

\begin{proof}
We first prove the first inequality. We have that 
\begin{align*}
    \frac{\sigma_i(\widetilde W^{\frac{1}{2}-\frac{1}{p}}\ma)}{\widetilde w_i} &= \widetilde w_{i}^{-\frac{2}{p}}\cdot\left[\ma(\ma^{\top} \widetilde W^{1-\frac{2}{p}}\ma)^{+}\ma^{\top}\right]_{ii}\\
    &\leq \gamma \, w_{i}^{-\frac{2}{p}}\cdot\left[\ma(\ma^{\top}  W^{1-\frac{2}{p}}\ma)^{+} \ma^{\top}\right]_{ii}
    =\gamma \frac{\sigma_i( W^{\frac{1}{2}-\frac{1}{p}}\ma)}{w_i}
\end{align*}
where the inequality uses $\widetilde w_i^{-\frac2p} \leq \gamma^{\frac2p} w_i^{-\frac2p}$ and $(\ma^\top \widetilde W^{1-\frac2p} \ma)^+ \preceq \gamma^{1-\frac{2}{p}} (\ma^\top W^{1-\frac2p} \ma)^+$. The second inequality of the lemma follows by exchanging the roles of $w$ and $\widetilde w$. 
\end{proof}

We can now state our variation of Lee's algorithm and prove its correctness. 

\begin{algorithm}[htbp!]
\caption{Two-sided Lewis weight approximation} \label{alg:low-precision-Lewis}
\Input{$\ma \in \R^{m \times n}$, $p \geq 2$, accuracy $\epsilon>0$}
Let $w_i^{(1)} = n/m$ for all $i \in [m]$, $\eps_1 = \epsilon/(100 p n)$, $\eps_2 = \epsilon/(3p)$, $T = \lceil 2\log(m/n)/\eps_1 \rceil$;

\For{$k=1,\ldots,T-1$}{ 
    Let $w^{(k+1)}$ be $\eps_1/4$-estimates of $\sigma((\mw^{(k)})^{\frac12 - \frac1p} \ma)$\;
}

Let $w  = \frac{1}{T} \sum_{k\in[\totaliters]} w^{(k)}$ and $s$ be $\eps_2$-estimates of $\sigmapw$\;

\Return{$\widetilde w$ with $\widetilde w_i  = w_i (s_i/w_i)^{\frac{p}{2}}$ for all $i \in [m]$}
\end{algorithm}

\begin{theorem}\label{thm:algo} \cref{alg:low-precision-Lewis} outputs a two-sided $\epsilon$-approximation of the $\ell_p$-Lewis weights of $\ma$. Each iteration computes the leverage scores of $DA$ of some diagonal matrix $D$ to multiplicative accuracy $O(\epsilon/(pn))$.
\end{theorem}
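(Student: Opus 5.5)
The plan is to show that the averaged vector $w$ is a one-sided $O(\epsilon/(pn))$-approximation, possibly after a constant rescaling. Then \Cref{thm:one-sided-to-two-sided} applies, with approximate rather than exact leverage scores in the final step; \Cref{lem:stab} will absorb that inexactness.

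\textbf{Step 1: one-sided property of the average.} I would follow Lee's regret-style argument \cite[Theorem~5.3.4]{lee16}, carried out in the potential $\log\det(A^\top W^{1-\frac2p} A)$. The key facts are that $\log\det(A^\top W^{1-2/p}A)$ is concave in $\log w$ up to the exponent $1-\frac2p\le 1$, and that its gradient with respect to $\log w_i$ is $(1-\frac2p)\sigma_i(W^{\frac12-\frac1p}A)$. Write $\hat\sigma^{(k)}$ for the exact leverage scores at iterate $k$. Since $w^{(k+1)}\approx_{\eps_1/4}\hat\sigma^{(k)}$, the update is essentially the multiplicative map $m$. Summing the concavity inequality $\log\det H(w^{(k+1)})\le \log\det H(w^{(k)})+(1-\frac2p)\sum_i \sigma_i^{(k)}\log(w_i^{(k+1)}/w_i^{(k)})$ over $k$ and applying Jensen to the average gives, for each $i$, an inequality of the form
\[
\frac1T\sum_k \log\frac{\hat\sigma_i^{(k)}}{w_i^{(k)}}\lesssim \frac{\log(m/n)}{T}+\eps_1 .
\]
Telescoping in $\log w_i$ gives $\frac1T\log(w_i^{(T)}/w_i^{(1)})$, and $w_i^{(T)}\le 1+\eps_1$ while $w_i^{(1)}=n/m$. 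Concavity of $w\mapsto \sigma_i(W^{\frac12-\frac1p}A)/w_i$ along averages (via concavity of $w\mapsto (A^\top W^{1-2/p}A)^{-1}$ quadratic forms, in Lee's form) then yields $\sigma(W^{\frac12-\frac1p}A)\le(1+O(\eps_1))w$. The normalization $\|w\|_1\le (1+\eps_1/4)n$ holds because each $w^{(k)}$ with $k\ge2$ is an estimate of leverage scores summing to $n$, and $w^{(1)}$ sums to $n$. With $T=\lceil 2\log(m/n)/\eps_1\rceil$, $w$ is therefore a one-sided $O(\eps_1)$-approximation.

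\textbf{Step 2: conversion.} By \Cref{thm:one-sided-to-two-sided}, the exact vector $\widehat w=\sigma(w^{\frac12-\frac1p})^{p/2}/w^{\betap}$ is a two-sided $\epsilon_{\mathrm{two}}$-approximation with $\epsilon_{\mathrm{two}}=O(\barbetap n\eps_1)=O(pn\eps_1)$, which is at most $\epsilon/3$ by the choice $\eps_1=\epsilon/(100pn)$. The output satisfies $\widetilde w_i=w_i(s_i/w_i)^{p/2}=s_i^{p/2}/w_i^{\betap}$. Since $s\approx_{\eps_2}\sigma(w^{\frac12-\frac1p})$, we get $\widetilde w_i/\widehat w_i\in[(1-\eps_2)^{p/2},(1+\eps_2)^{p/2}]$, which is within a factor $\gamma=1+O(p\eps_2)=1+O(\epsilon)$ of $1$. \Cref{lem:stab} then shows that the ratios $\sigma_i/w_i$ change by at most a factor of $\gamma$, so $\widetilde w$ is a two-sided $(\epsilon_{\mathrm{two}}+O(p\eps_2))\le\epsilon$-approximation after tuning the constants in $\eps_2=\epsilon/(3p)$.

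\textbf{Main obstacle.} I expect Step 1 to be the hardest part: redoing Lee's averaging argument with inexact leverage scores while keeping the loss at $O(\eps_1)$ multiplicatively in the bound $\sigma\le(1+\cdot)w$. I would first try to obtain the inequality per coordinate $i$ using concavity of $\log(\sigma_i(W^{\frac12-\frac1p}A)/w_i)$ in $\log w$ along the averaged iterates, and treat the $\eps_1/4$ estimation error as an additive $\eps_1/4$ per step in the log-telescoping. If that per-coordinate concavity fails, the fallback is the global $\log\det$ potential, bounded above by $\log\det(A^\top A)$ and below at the start, to control the average regret.
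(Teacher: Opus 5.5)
Your overall route is the paper's. You show the average $w$ is a one-sided $2\eps_1$-approximation, using Lee's coordinatewise bound $\sigma_i(W^{\frac12-\frac1p}A)\le e^{\eps_1}w_i$ together with $\|w\|_1\le(1+\eps_1/4)n$, which you derive exactly as the paper does. You then apply \Cref{thm:one-sided-to-two-sided} to get $\widehat w$ within $\epsilon/3$, and absorb the inexactness of $s$ via \Cref{lem:stab} with $\gamma=(1-\eps_2)^{-p/2}$. Your Step 2 is essentially the paper's argument. The difference is Step 1. The paper does not re-prove it; it simply cites \cite[Theorem~5.3.4]{lee16}. Your sketched re-proof does not go through as written, because the convexity facts it relies on point the wrong way.

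The first problem is the potential. $\log\det(A^\top W^{1-\frac2p}A)$ is convex, not concave, in $\log w$: by Cauchy--Binet it is a log-sum-exp of linear functions of $\log w$. So the tangent inequality you propose to sum holds with $\geq$, not $\leq$. In any case, the bound $\frac1T\sum_k\log(\hat\sigma_i^{(k)}/w_i^{(k)})\le\frac{\log(m/n)}{T}+O(\eps_1)$ is pure telescoping and needs no potential, so a global $\log\det$ potential is not the right tool.

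The real content is passing from the iterates to the arithmetic average $\bar w=\frac1T\sum_k w^{(k)}$. This needs an \emph{upper} bound on $\sigma_i(\bar W^{\frac12-\frac1p}A)/\bar w_i$. Concavity of $w\mapsto\sigma_i/w_i$, or of $\log(\sigma_i/w_i)$ in $\log w$, would give the opposite inequality. Moreover, $w\mapsto a_i^\top(A^\top W^{1-\frac2p}A)^{-1}a_i$ is in fact convex, and its arithmetic-mean bound does not telescope.

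The fact that works is log-convexity in $w$. Let $r_i(H)\defeq 1/(a_i^\top H^{-1}a_i)=\min\{y^\top Hy: a_i^\top y=1\}$; this is concave and Loewner-monotone. Also $w\mapsto A^\top W^{1-\frac2p}A$ is Loewner-concave for $p\ge 2$. Hence $\log(\sigma_i(W^{\frac12-\frac1p}A)/w_i)=-\frac2p\log w_i-\log r_i(A^\top W^{1-\frac2p}A)$ is convex in $w$. Apply Jensen at $\bar w$, then telescope using $\hat\sigma_i^{(k)}\le w_i^{(k+1)}/(1-\eps_1/4)$ for $k<T$, $\hat\sigma_i^{(T)}\le 1$, and $w_i^{(1)}=n/m$:
\[
\log\frac{\sigma_i(\bar W^{\frac12-\frac1p}A)}{\bar w_i}\le\frac1T\sum_{k\in[T]}\log\frac{\hat\sigma_i^{(k)}}{w_i^{(k)}}\le\frac{\log(m/n)}{T}+\log\frac{1}{1-\eps_1/4}\le\eps_1 .
\]
This is exactly the bound the paper imports from Lee. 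With this replacement, or by simply citing Lee's theorem as the paper does, your proof is complete.
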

\begin{proof}
Steps 1.-4.~of the algorithm correspond to Algorithm 6 by \cite{lee16}.
In Theorem~5.3.4 of~\cite{lee16} it is shown that the resulting $w$ satisfies $w_i/\sigmaipw \geq \exp(-\eps_1)$ and therefore $\sigmaipw \leq \exp(\eps_1) w_i \leq (1+2\eps_1) w_i$.
Moreover, $w$ is an average over $\eps_1/4$-approximate leverage scores so that $\|w\|_1 \leq (1+\eps_1/4)n$, and hence $w$ is a one-sided $2\eps_1$-approximation of $\sigma_p(A)$.
By Theorem~\ref{thm:one-sided-to-two-sided} this implies the vector $\widehat w\coloneqq\sigma(w^{\frac12-\frac1p})^{\frac p2}/w^{\betap}$ is a two-sided Lewis weight approximation with approximation factor 
\[
6\barbetap n\epsilon_1 (1 + 2\epsilon_1)^{\barbetap} n \leq \epsilon/3
\]
by our choice of $\eps_1$.
Finally, we use $\eps_2$-estimates $s$ of $\sigmapw$ to define $\widetilde w_i = w_i (s_i/w_{i})^{\frac{p}{2}}$, so that
\[
(1-\eps_2)^{p/2} \widehat{w}_i
\leq \widetilde{w}_i
\leq (1+\eps_2)^{p/2} \widehat{w}_i
\leq \frac{1}{(1-\eps_2)^{p/2}} \widehat{w}_i.
\]
We can now apply Lemma~\ref{lem:stab} with $\gamma = 1/(1-\eps_2)^{p/2} \leq 1 + \epsilon/3$ by our choice of $\eps_2$. This implies that the~$\widetilde{w}_i$'s are two-sided Lewis weight approximations satisfying
\[
(1 - \epsilon/3)^2
\leq \frac{\sigma_i(\widetilde{W}^{\frac12 - \frac1p} \ma)}{\widetilde w_i}
\leq (1 + \epsilon/3)^2.
\]
Using that $(1-\eps/3)^2 \geq 1-\eps$ and $(1+\eps/3)^2 \leq 1+\eps$, this proves the claim.
\end{proof}
\Cref{thm:algo} implies that we can obtain a two-sided $\eps$-approximation of $\sigma_p(A)$ by iteratively computing $O(p n\log m/\eps)$ many $O(\eps/(pn))$-approximate leverage scores.

\section{Conclusion}

In this paper we provide two algorithms for computing approximations of $\ell_p$-Lewis weights. Additionally, we provide simple procedures that convert weaker notions of approximation into stronger ones, e.g., that turn one-sided approximations into two-sided approximations. For the fundamental problem of computing $\epsilon$-estimates, our methods improve upon the prior state-of-the-art by a factor of $p$. Moreover, we obtain these algorithms by a general \emph{locally} relatively smooth gradient descent method and straightforward applications of it to convex formulations of Lewis weights. 

Altogether, these algorithms and the analysis shed light on the complexity of $\ell_p$-Lewis weight computation, through the lens of relative smoothness and strong convexity. Given the fundamental and pervasive nature of  $\ell_p$-Lewis weights and how natural the associated objective functions are, we hope this work may facilitate the development of efficient optimization algorithms more broadly.

\section*{Acknowledgments}

We thank Simon Apers for many useful discussions during the development of this work. We thank anonymous reviewers from COLT 2026 for their feedback and LLMs for writing advice. Aaron Sidford was supported in part by a Microsoft Research Faculty Fellowship, NSF CAREER Grant CCF1844855, NSF Grant CCF-1955039, and a PayPal research award. Chenyi Zhang was supported by a Shoucheng Zhang Graduate Fellowship. 

\bibliographystyle{alpha}
\bibliography{biblio}

@book{wojtaszczyk1991,
	author = {Wojtaszczyk, P.},
	title = {Banach Spaces for Analysts},
	year = {1991},
	address = {Cambridge University Press}
}

@article{khachiyan1996rounding,
 author = {Leonid G. Khachiyan},
 journal = {Mathematics of Operations Research},
 number = {2},
 pages = {307--320},
 publisher = {INFORMS},
 title = {Rounding of Polytopes in the Real Number Model of Computation},
 volume = {21},
 year = {1996}
}

@incollection{john1948,
    author = {Fritz, John},
    title = {Extremum problems with inequalities as subsidiary conditions},
    booktitle = {Studies and Essays Presented to R. Courant on his 60th Birthday},
    publisher = {Interscience Publishers, New York},
    year = {1948},
}

@INPROCEEDINGS{sidford2018coordinate,
  author={Sidford, Aaron and Tian, Kevin},
  booktitle={2018 IEEE 59th Annual Symposium on Foundations of Computer Science (FOCS)}, 
  title={Coordinate Methods for Accelerating $\ell_\infty$ Regression and Faster Approximate Maximum Flow}, 
  year={2018},
  volume={},
  number={},
  pages={922-933},
  doi={10.1109/FOCS.2018.00091}}

@inproceedings{malitsky2020adaptive,
author = {Malitsky, Yura and Mishchenko, Konstantin},
title = {Adaptive gradient descent without descent},
year = {2020},
publisher = {JMLR.org},
booktitle = {Proceedings of the 37th International Conference on Machine Learning},
articleno = {622},
numpages = {11},
series = {ICML'20}
}

@article{latafat2025,
  title        = {Adaptive proximal algorithms for convex optimization under local Lipschitz continuity of the gradient},
  author       = {Puya Latafat and Andreas Themelis and Lorenzo Stella and Panagiotis Patrinos},
  journal      = {Mathematical Programming},
  volume       = {213},
  pages        = {433--471},
  year         = {2025},
  doi          = {10.1007/s10107-024-02143-7},
}

@article{bauschke2017,
author = {Bauschke, Heinz H. and Bolte, J\'{e}r\^{o}me and Teboulle, Marc},
title = {A Descent Lemma Beyond Lipschitz Gradient Continuity: First-Order Methods Revisited and Applications},
journal = {Mathematics of Operations Research},
volume = {42},
number = {2},
pages = {330-348},
year = {2017},
doi = {10.1287/moor.2016.0817},
}

@inproceedings{BLLSSWW21,
  author       = {Jan van den Brand and
                  Yin Tat Lee and
                  Yang P. Liu and
                  Thatchaphol Saranurak and
                  Aaron Sidford and
                  Zhao Song and
                  Di Wang},
  title        = {Minimum cost flows, MDPs, and $\ell_1$-regression
                  in nearly linear time for dense instances},
  booktitle    = {Proceedings of the fifty-third annual ACM symposium on Theory of Computing},
  publisher    = {{ACM}},
  year         = {2021}
}

@inproceedings{JLS22,
  author       = {Arun Jambulapati and
                  Yang P. Liu and
                  Aaron Sidford},
  title        = {Improved iteration complexities for overconstrained \emph{p}-norm
                  regression},
  booktitle    = {Proceedings of the fifty-fourth annual ACM symposium on Theory of Computing},
  publisher    = {{ACM}},
  year         = {2022}
}

@article{BLM89,
author = {J. Bourgain and J. Lindenstrauss and V. Milman},
title = {{Approximation of zonoids by zonotopes}},
volume = {162},
journal = {Acta Mathematica},
publisher = {Institut Mittag-Leffler},
pages = {73 -- 141},
year = {1989},
doi = {10.1007/BF02392835},
URL = {https://doi.org/10.1007/BF02392835}
}

@inproceedings{fazel22,
author = {Maryam Fazel and Yin Tat Lee and Swati Padmanabhan and Aaron Sidford},
title = {Computing {Lewis} Weights to High Precision},
booktitle = {Proceedings of the 2022 Annual ACM-SIAM Symposium on Discrete Algorithms (SODA)},
chapter = {},
pages = {2723-2742},
doi = {10.1137/1.9781611977073.107},
URL = {https://epubs.siam.org/doi/abs/10.1137/1.9781611977073.107},
year = {2022},
}

@article{Lewis78,
author = {Lewis, D.},
journal = {Studia Mathematica},
number = {2},
pages = {207-212},
title = {{Finite dimensional subspaces of $L_{p}$}},
url = {http://eudml.org/doc/218208},
volume = {63},
year = {1978},
}

@phdthesis{lee16,
  author    = {Yin Tat Lee},
  title     = {Faster Algorithms for Convex and Combinatorial Optimization},
  year      = {2016},
  school    = {Massachusetts Institute of Technology}
}

@inproceedings{cohen2015lp,
  title={$\ell_p$ row sampling by {Lewis} weights},
  author={Cohen, Michael B. and Peng, Richard},
  booktitle={Proceedings of the forty-seventh annual ACM symposium on Theory of Computing},
  pages={183--192},
  year={2015}
}

@article{drineas2012fast,
  title={Fast approximation of matrix coherence and statistical leverage},
  author={Drineas, Petros and Magdon-Ismail, Malik and Mahoney, Michael W. and Woodruff, David P.},
  journal={The Journal of Machine Learning Research},
  volume={13},
  number={1},
  pages={3475--3506},
  year={2012},
  publisher={JMLR. org}
}

@article{rudelson2007sampling,
  title={Sampling from large matrices: An approach through geometric functional analysis},
  author={Rudelson, Mark and Vershynin, Roman},
  journal={Journal of the ACM (JACM)},
  volume={54},
  number={4},
  pages={21--es},
  year={2007},
  publisher={ACM New York, NY, USA}
}

@inproceedings{Woodruff2023online,
author = {David P. Woodruff and Taisuke Yasuda},
title = {Online Lewis Weight Sampling},
booktitle = {Proceedings of the 2023 Annual ACM-SIAM Symposium on Discrete Algorithms (SODA)},
chapter = {},
pages = {4622-4666},
doi = {10.1137/1.9781611977554.ch175},
URL = {https://epubs.siam.org/doi/abs/10.1137/1.9781611977554.ch175},
eprint = {https://epubs.siam.org/doi/pdf/10.1137/1.9781611977554.ch175},
year = {2023},
}

@article{apers2023quantum-journal,
author = {Apers, Simon and Gribling, Sander},
title = {{Quantum Speedups for Linear Programming via Interior Point Methods}},
journal = {SIAM Journal on Computing},
volume = {55},
number = {1},
pages = {93-134},
year = {2026},
URL = {https://doi.org/10.1137/25M1736098}}

@article{apers2024lewis,
  title={{On computing approximate Lewis weights}},
  author={Apers, Simon and Gribling, Sander and Sidford, Aaron},
  journal={arXiv:2404.02881},
  year={2024}
}

@inproceedings{spielman2008graph,
  title={Graph sparsification by effective resistances},
  author={Spielman, Daniel A. and Srivastava, Nikhil},
  booktitle={Proceedings of the fortieth annual ACM symposium on Theory of Computing},
  pages={563--568},
  year={2008}
}

@article{lee2019solving,
  title={Solving linear programs with $\sqrt{\text{rank}}$ linear system solves},
  author={Lee, Yin Tat and Sidford, Aaron},
  journal={arXiv preprint arXiv:1910.08033},
  year={2019}
}

@book{todd2016minimum,
  title={Minimum-volume ellipsoids: Theory and algorithms},
  author={Todd, Michael J.},
  year={2016},
  publisher={SIAM}
}

@article{talagrand1990embedding,
  title={{Embedding subspaces of $L_1$ into $\ell_1^N$}},
  author={Talagrand, Michel},
  journal={Proceedings of the American Mathematical Society},
  volume={108},
  number={2},
  pages={363--369},
  year={1990}
}

@article{lu2018relatively,
  title={Relatively smooth convex optimization by first-order methods, and applications},
  author={Lu, Haihao and Freund, Robert M. and Nesterov, Yurii},
  journal={SIAM Journal on Optimization},
  volume={28},
  number={1},
  pages={333--354},
  year={2018},
  publisher={SIAM},
eprint={arXiv:1610.05708}
}

@misc{tseng2008accelerated,
  title={On accelerated proximal gradient methods for convex-concave optimization},
  author={Tseng, Paul},
  year={2008}
}

@article{clarkson2017low,
  title={Low-rank approximation and regression in input sparsity time},
  author={Clarkson, Kenneth L. and Woodruff, David P.},
  journal={Journal of the ACM (JACM)},
  volume={63},
  number={6},
  pages={1--45},
  year={2017},
  publisher={ACM New York, NY, USA}
}

@misc{li2018general,
  title={A general convergence result for mirror descent with Armijo line search},
  author={Li, Yen-Huan and Riofrio, Carlos A. and Cevher, Volkan},
  note={arXiv:1805.12232},
  year={2018}
}

@article{godeme2023provable,
  title={Provable phase retrieval with mirror descent},
  author={Godeme, Jean-Jacques and Fadili, Jalal and Buet, Xavier and Zerrad, Myriam and Lequime, Michel and Amra, Claude},
  journal={SIAM Journal on Imaging Sciences},
  volume={16},
  number={3},
  pages={1106--1141},
  year={2023},
  publisher={SIAM}
}

\appendix
\section{Technical lemmas}
\label{sec:deferred proofs}

Here we collect several standard estimates used throughout the paper.
\begin{lemma}\label{lem:bregman_divergence_to_spectral_difference}
Let $h:\Sn_{\succ 0} \to \R$ be defined as $h(M) = -\logdet(M)$. Then, for any $M_1,M_2\in\Sn_{\succ 0}$ satisfying $D_h(M_2,M_1)\le \epsilon$ for some $\epsilon\leq 1/10$, we have
\[
(1-2\sqrt{\epsilon})M_1 \preceq M_2 \preceq (1+2\sqrt{\epsilon})M_1.
\]
\end{lemma}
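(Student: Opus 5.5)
The plan is to reduce everything to the eigenvalues of the congruence-normalized matrix $X \coloneqq M_1^{-1/2} M_2 M_1^{-1/2} \succ 0$. Because $\logdet$ and the trace pairing are invariant under this congruence, the Bregman divergence becomes an explicit sum over the eigenvalues $\lambda_1,\dots,\lambda_n > 0$ of $X$. Concretely, $\nabla h(M_1) = -M_1^{-1}$, so
\[
D_h(M_2,M_1) = -\logdet(M_2) + \logdet(M_1) + \Tr\bigl(M_1^{-1}(M_2 - M_1)\bigr) = \Tr(X) - n - \logdet(X) = \sum_{j\in[n]} \bigl(\lambda_j - 1 - \ln \lambda_j\bigr).
\]
Each summand $g(\lambda) \coloneqq \lambda - 1 - \ln\lambda$ is nonnegative, so the hypothesis $D_h(M_2,M_1) \le \epsilon$ gives $g(\lambda_j) \le \epsilon$ for every $j$.

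The goal is then to show $|\lambda_j - 1| \le 2\sqrt{\epsilon}$ for each $j$. This suffices, since $(1-2\sqrt\epsilon) I \preceq X \preceq (1+2\sqrt\epsilon) I$ is equivalent, after conjugating by $M_1^{1/2}$, to the claimed sandwich $(1-2\sqrt{\epsilon})M_1 \preceq M_2 \preceq (1+2\sqrt{\epsilon})M_1$.

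To get the eigenvalue bound, I would use the contrapositive together with convexity of $g$. The function $g$ is convex, has minimum $0$ at $\lambda = 1$, is decreasing on $(0,1]$, and is increasing on $[1,\infty)$. So it suffices to check that $g(1 \pm 2\sqrt\epsilon) > \epsilon$: if some $|\lambda_j - 1| \ge 2\sqrt\epsilon$, monotonicity of $g$ on each side of $1$ would then force $g(\lambda_j) > \epsilon$, a contradiction. Note that $\epsilon \le 1/10$ gives $2\sqrt\epsilon \le 0.64 < 1$, so $1 - 2\sqrt\epsilon > 0$ lies in the domain of $g$.

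I expect this elementary scalar inequality to be the only real obstacle, specifically the side $\lambda > 1$, where $g$ grows more slowly. Writing $x = \lambda - 1$, I plan to proceed as follows.
\begin{itemize}
\item For $x \in [0,1]$, use $\ln(1+x) \le x - x^2/2 + x^3/3$. This gives $g(1+x) \ge x^2/2 - x^3/3 \ge x^2/6$, so at $x = 2\sqrt\epsilon$ we get $g \ge 4\epsilon/6 = 2\epsilon/3$. That falls short of $\epsilon$.
\item To close the gap, sharpen the bound using $x \le 2\sqrt{1/10} \approx 0.632$. This gives $g(1+x) \ge x^2\bigl(1/2 - x/3\bigr) \ge x^2 \cdot 0.289$, so at $x = 2\sqrt\epsilon$ we get $g \ge 4\epsilon \cdot 0.289 > \epsilon$, which suffices.
\item On the side $x \in (-1, 0]$, use $-\ln(1+x) \ge -x + x^2/2$. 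This gives $g(1+x) \ge x^2/2 = 2\epsilon > \epsilon$ at $x = -2\sqrt\epsilon$.
\end{itemize}
This is exactly where the constraint $\epsilon \le 1/10$ enters: it keeps $x$ small enough that the cubic term in the $\lambda > 1$ case does not destroy the quadratic lower bound.
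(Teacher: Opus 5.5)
Your proposal is correct and follows the paper's proof: normalize to $X = M_1^{-1/2}M_2M_1^{-1/2}$, write $D_h(M_2,M_1)=\sum_j(\lambda_j - 1 - \ln\lambda_j)$, use nonnegativity of each summand to get $g(\lambda_j)\le\epsilon$, and conclude $|\lambda_j-1|\le 2\sqrt\epsilon$. The only difference is that you explicitly check the scalar inequality $g(1\pm 2\sqrt\epsilon)>\epsilon$, using $\epsilon\le 1/10$ for the $\lambda>1$ side, whereas the paper just asserts it.
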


\begin{proof}
Denote $\Delta\coloneqq M_1^{-1/2}M_2M_1^{-1/2}\succ 0$. Then  
\[
D_h(M_2,M_1)
= h(M_2)-h(M_1)-\langle \nabla h(M_1), M_2-M_1\rangle
= \operatorname{Tr}(\Delta)-\log\det \Delta - n.
\]
Hence, letting $\lambda_1,\dots,\lambda_n$ be the eigenvalues of $\Delta$ and $\phi(\lambda)\coloneqq \lambda-\log\lambda-1$, we have  
\[
D_h(M_2,M_1) = \sum_{i \in [n]} \phi(\lambda_i).
\]
Note that the function $\phi$ is convex on $(0,\infty)$, with a unique minimizer at $\lambda=1$ and $\phi(1)=0$. Hence, $\phi(\lambda)\ge 0$ for all $\lambda>0$, which gives $\phi(\lambda_i)\le D_h(M_2,M_1)\le \epsilon\le 1/10$ and thus $|\lambda_i-1|\le 2\sqrt{\epsilon}$ for all $i$. Therefore,
\[
(1-2\sqrt{\epsilon})I \preceq \Delta \preceq (1+2\sqrt{\epsilon})I.
\]
Conjugating by $M_1^{1/2}$ yields
\[
(1-2\sqrt{\epsilon})M_1 \preceq M_2 \preceq (1+2\sqrt{\epsilon})M_1,
\]
which completes the proof.
\end{proof}

\begin{lemma} \label{lem:abs_diff_2}
    Suppose $x,y \in \R^m_{\geq 0}$ and $\delta>0$ are such that $y \leq (1+\delta) x$ entrywise and $\|x\|_1 \leq (1+\delta)\|y\|_1$. Then $\|x-y\|_1 \leq 3 \delta \|y\|_1$.
\end{lemma}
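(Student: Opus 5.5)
The plan is to split $x-y$ into its positive and negative parts and bound each separately. Write $\|x-y\|_1 = \sum_i (y_i-x_i)_+ + \sum_i (x_i-y_i)_+$. The two hypotheses control these two parts in different ways. The entrywise bound controls the part where $y$ exceeds $x$. The $\ell_1$ bound, combined with the first part, then controls the part where $x$ exceeds $y$.

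First, since $y_i \le (1+\delta)x_i$, we have $(y_i-x_i)_+ \le \delta x_i$ for every $i$. Moreover, $(y_i - x_i)_+ \le y_i$ and $y_i-x_i\le \frac{\delta}{1+\delta} y_i$. This gives the cleaner bound
\[
\sum_i (y_i-x_i)_+ \le \tfrac{\delta}{1+\delta}\|y\|_1 \le \delta \|y\|_1 .
\]

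Second, we use the identity $\sum_i (x_i-y_i)_+ - \sum_i (y_i-x_i)_+ = \|x\|_1-\|y\|_1 \le \delta\|y\|_1$, where the inequality follows from $\|x\|_1\le(1+\delta)\|y\|_1$ and the nonnegativity of the entries. Combining this with the first step gives
\[
\sum_i (x_i-y_i)_+ \le \delta\|y\|_1 + \delta\|y\|_1 = 2\delta\|y\|_1 .
\]

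Adding the two parts yields $\|x-y\|_1 \le 3\delta\|y\|_1$, which is the claim. (The argument in fact gives a constant slightly better than $3$.)

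No step is a real obstacle. The only point needing care is to bound the positive part $(y-x)_+$ in terms of $\|y\|_1$ rather than $\|x\|_1$. Since $\|x\|_1 \le (1+\delta)\|y\|_1$, even the cruder route via $\delta x_i$ would give $\delta(1+\delta)\|y\|_1$. That is still at most $3\delta\|y\|_1$ provided $\delta\le1$, but using $y_i-x_i\le \frac{\delta}{1+\delta}y_i$ avoids any restriction on $\delta$.
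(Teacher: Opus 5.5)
Your proof is correct and is essentially the paper's argument. Both use the entrywise hypothesis to reach $\|x-y\|_1 \le \|x\|_1 - \frac{1-\delta}{1+\delta}\|y\|_1$: the paper applies the triangle inequality to $x-y = (x-\frac{1}{1+\delta}y) - \frac{\delta}{1+\delta}y$, whose first summand is entrywise nonnegative, while your positive/negative-part split amounts to $\|x-y\|_1 = \|x\|_1-\|y\|_1+2\sum_i (y_i-x_i)_+$. Both then apply $\|x\|_1\le(1+\delta)\|y\|_1$ and get the same constant $\frac{\delta(3+\delta)}{1+\delta}\le 3\delta$.

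Your closing aside needs one correction. Bounding $\sum_i (y_i-x_i)_+$ by $\delta\|x\|_1\le\delta(1+\delta)\|y\|_1$ and using this in both parts gives $(3\delta+2\delta^2)\|y\|_1$, which exceeds $3\delta\|y\|_1$ for every $\delta>0$. So bounding that term by $\delta\|y\|_1$ (for instance via your $\frac{\delta}{1+\delta}y_i$) is what gives the constant $3$ at all; it does not merely remove a restriction $\delta\le 1$.
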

\begin{proof}
The proof follows from writing $x-y = (x-\frac{1}{1+\delta} y) - \frac{\delta}{1+\delta} y$ and applying the triangle inequality:
\begin{align*}
\norm{x-y}_{1} &\leq\sum_{i\in[m]}\left|x_{i}-\frac{1}{1+\delta}y_{i}\right|+ \frac{\delta}{1+\delta} \sum_{i\in[m]}|y_{i}| =\sum_{i\in[m]}x_{i}-\frac{1}{1+\delta}y_{i}+ \frac{\delta}{1+\delta} \sum_{i\in[m]}y_{i} \\
&= \|x\|_1 - \frac{1}{1+\delta} \|y\|_1 + \frac{\delta}{1+\delta} \|y\|_1. 
\end{align*}
Finally, using $\|x\|_1 \leq (1+\delta) \|y\|_1$ we obtain $\norm{x-y}_{1} \leq \left(\frac{(1+\delta)^2 - 1+\delta}{1+\delta}\right) \|y\|_1 \leq 3 \delta \|y\|_1$.
\end{proof}

\begin{lemma}\label{lem:leverage-score-one-norm}
$\Tr[\hat{B} U \hat{B}^\top]\leq \| U \|_1$ for any full column rank $B\in\R^{m\times n}$ and PSD symmetric matrix $ U \in\R^{n\times n}$, where $\hat{B} \defeq B(B^\top B )^{-1/ 2}$. 
\end{lemma}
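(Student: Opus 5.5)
The plan is to reduce to the unweighted case by a change of variables and then use trace cyclicity together with the basic fact about projections. Set $Q\coloneqq \hat B^\top \hat B = (B^\top B)^{-1/2}B^\top B(B^\top B)^{-1/2}$; because $B$ has full column rank, $B^\top B$ is invertible and $Q=I_n$. Here $\hat B\in\R^{m\times n}$ has orthonormal columns, so $\hat B\hat B^\top$ is the orthogonal projection onto the column space of $B$.

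The computation then has two steps. First, by cyclicity of the trace,
\[
\Tr[\hat B U\hat B^\top]=\Tr[U\hat B^\top\hat B]=\Tr[U]\,.
\]
Second, since $U$ is symmetric positive semidefinite, its eigenvalues are nonnegative. Hence $\Tr[U]=\sum_j\lambda_j(U)=\sum_j|\lambda_j(U)|=\|U\|_1$, which gives the claim, in fact with equality.

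There is no real obstacle here; the only point to state explicitly is that full column rank makes $(B^\top B)^{-1/2}$ well defined, so $\hat B^\top\hat B=I$. If one preferred to avoid the equality $\hat B^\top\hat B=I$ (for instance for rank-deficient $B$ with pseudo-inverses), the fallback is the weaker bound $\Tr[U\hat B^\top\hat B]\le \|\hat B^\top\hat B\|\,\Tr[U]\le\|U\|_1$. This uses that $U\succeq0$ and $\hat B^\top\hat B\preceq I$, being a projection.
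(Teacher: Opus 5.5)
Your proof is correct and is essentially the paper's argument: trace cyclicity gives $\Tr[U\hat B^\top\hat B]$, and $\Tr[U]=\|U\|_1$ because $U\succeq 0$. The paper writes the trace as $\Tr[U^{1/2}\hat B^\top\hat B U^{1/2}]$ and uses only $\hat B^\top\hat B\preceq I$ (your fallback), whereas you correctly note that full column rank gives $\hat B^\top\hat B=I$, so the bound holds with equality.
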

\begin{proof} Since $U \succeq 0$, we can write 
$\Tr[\hat{ B }U\hat{ B }^\top]=\Tr[\hat{ B }U^{1/2}U^{1/2}\hat{ B }^\top]=\Tr[U^{1/2}\hat{ B }^\top\hat{ B }U^{1/2}]$.
We then note that 
$\hat{ B }^\top\hat{ B }=( B ^\top B )^{-1 / 2} B ^\top B ( B ^\top B )^{-1 / 2} \preceq I$,
which leads to 
$\Tr[\hat{ B }U\hat{ B }^\top]\leq\Tr[U]\leq\|U\|_1$.
\end{proof}

\begin{lemma}\label{lem:matrix-ell1-distance-2}
For any $\zeta>0$ and two full-rank PSD matrices $M_1$ and $M_2$ satisfying
\begin{align}\label{eq:inverse-norm-distance}
\|M_1^{-1/2}(M_2-M_1)M_1^{-1/2}\|_1\leq\zeta\leq\frac{1}{2},
\end{align}
we have
\begin{align*}
\|M_2^{-1/2}(M_1-M_2)M_2^{-1/2}\|_1\leq \frac{\zeta}{1-\zeta}.
\end{align*}
\end{lemma}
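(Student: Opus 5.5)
Since the Schatten 1-norm is invariant under congruence by an invertible matrix only up to change of basis, the cleanest route is to express both normalized differences through a single symmetric matrix. Let $X \coloneqq M_1^{-1/2}(M_2-M_1)M_1^{-1/2}$, so that $M_1^{-1/2}M_2M_1^{-1/2} = I+X$ and, by hypothesis, $\|X\|_1\le\zeta\le\frac12$. In particular $\|X\|\le\|X\|_1\le\frac12$, so $I+X\succeq \frac12 I$ is invertible.

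The target matrix $Y\coloneqq M_2^{-1/2}(M_1-M_2)M_2^{-1/2}$ is related to $X$ by a similarity-type transformation. We have $M_2^{-1/2}M_1M_2^{-1/2} - I = Y$. Moreover, $M_2^{-1/2}M_1M_2^{-1/2}$ has the same eigenvalues as $M_1^{1/2}M_2^{-1}M_1^{1/2} = (M_1^{-1/2}M_2M_1^{-1/2})^{-1} = (I+X)^{-1}$. To see this, write $M_2^{-1/2}M_1M_2^{-1/2} = C C^\top$ with $C = M_2^{-1/2}M_1^{1/2}$, and compare with $C^\top C = M_1^{1/2}M_2^{-1}M_1^{1/2}$. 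Hence $Y$ and $(I+X)^{-1}-I$ are symmetric matrices with identical spectra, so their Schatten 1-norms coincide. Equivalently, $Y = Q\big((I+X)^{-1}-I\big)Q^\top$ for the orthogonal polar factor $Q$ of $C$.

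It remains to bound $\|(I+X)^{-1}-I\|_1$. Diagonalize $X$ with eigenvalues $x_1,\dots,x_n$, where $\sum_j |x_j|\le\zeta$ and each $|x_j|\le\zeta\le\frac12$. The eigenvalues of $(I+X)^{-1}-I$ are $-x_j/(1+x_j)$, so
\[
\|(I+X)^{-1}-I\|_1=\sum_j\frac{|x_j|}{1+x_j}\le\frac{1}{1-\zeta}\sum_j|x_j|\le\frac{\zeta}{1-\zeta}.
\]

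The only mildly delicate step is the spectral identification between $Y$ and $(I+X)^{-1}-I$, which the $CC^\top$ versus $C^\top C$ argument handles. The rest is scalar arithmetic. The assumption $\zeta\le\frac12$ is used only to guarantee $1+x_j\ge 1-\zeta>0$.
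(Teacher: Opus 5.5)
Your proof is correct and follows essentially the paper's route: set $N = M_1^{-1/2}M_2M_1^{-1/2} = I+X$, rewrite the target as $\|N^{-1}-I\|_1$, and use that the eigenvalues of $N$ lie in $[1-\zeta,1+\zeta]$. The paper finishes via $\|(N-I)N^{-1}\|_1 \le \|N-I\|_1\|N^{-1}\|$, while you use the explicit eigenvalues $-x_j/(1+x_j)$. Your $CC^\top$ versus $C^\top C$ argument also justifies the identity $\|M_2^{-1/2}(M_1-M_2)M_2^{-1/2}\|_1 = \|N^{-1}-I\|_1$, which the paper states without justification.
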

\begin{proof}
Let $N \defeq M_1^{-1/2} M_2 M_1^{-1/2}$ 
so that \eqref{eq:inverse-norm-distance} is equivalent to the statement that $\norm{N - I}_1 \leq \zeta\leq 1/2$. Note that
\begin{align*}
\|M_2^{-1/2}(M_1-M_2)M_2^{-1/2}\|_1
= \norm{N^{-1} - I}_1  
= \norm{(N - I) (N)^{-1}}_1
\leq \norm{N - I}_1 \norm{N^{-1}}\,.
\end{align*}
However, since $\norm{N - I} \leq \norm{N - I}_1 \leq \zeta$ we know that every eigenvalue of $N$ is between $1- \zeta$ and $1 + \zeta$. Consequently, $\norm{N^{-1}} \leq (1 - \zeta)^{-1}$ yielding the result.
\end{proof}

\begin{lemma}\label{lem:inverse-ell1-distance}
For any symmetric matrix $M$ satisfying $\|M\|\leq1/2$, we have
\begin{align*}
\big\|\left(I+M\right)^{-1}-I\big\|_1\leq 2\|M\|_1.
\end{align*}
\end{lemma}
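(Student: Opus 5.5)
Since $M$ is symmetric with $\|M\|\le 1/2$, we can work in its eigenbasis, so everything reduces to scalar inequalities on the eigenvalues. Write $M = Q\Lambda Q^\top$ with $Q$ orthogonal and $\Lambda=\mDiag(\lambda_1,\dots,\lambda_n)$. Then $(I+M)^{-1}-I = Q\,\mDiag\big((1+\lambda_i)^{-1}-1\big)\,Q^\top$, and this matrix is also symmetric and diagonalized by $Q$. Its Schatten $1$-norm is therefore the sum of the absolute values of its eigenvalues:
\[
\big\|(I+M)^{-1}-I\big\|_1=\sum_{i\in[n]}\left|\frac{1}{1+\lambda_i}-1\right|=\sum_{i\in[n]}\frac{|\lambda_i|}{1+\lambda_i}.
\]

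The next step is a per-eigenvalue bound. The hypothesis $\|M\|\le 1/2$ gives $|\lambda_i|\le 1/2$, hence $1+\lambda_i\ge 1/2$. It follows that $\frac{|\lambda_i|}{1+\lambda_i}\le 2|\lambda_i|$ for every $i$.

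Summing these bounds over $i$ gives $\sum_i 2|\lambda_i| = 2\|M\|_1$, which is the claimed inequality. None of the steps is a real obstacle. The only point to be careful about is that the Schatten $1$-norm of a symmetric matrix is the sum of absolute eigenvalues, and not simply its trace; this is what lets us handle eigenvalues of both signs at once.
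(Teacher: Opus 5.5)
Your proposal is correct and follows the same argument as the paper: diagonalize $M$, write the Schatten $1$-norm as $\sum_{i\in[n]} |\lambda_i|/(1+\lambda_i)$, and use $1+\lambda_i \geq 1/2$ to bound each term by $2|\lambda_i|$. Your explicit remark that the Schatten $1$-norm of a symmetric matrix is the sum of absolute eigenvalues (and not its trace) is a step the paper uses without stating it.
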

\begin{proof}
We use $\lambda_1,\ldots,\lambda_n$ to denote the eigenvalues of $M$. Then we have
\begin{align*}
\left\|\left(I+M\right)^{-1}-I\right\|_1=\sum_{i\in[n]}\left|\frac{1}{1+\lambda_i}-1\right|\leq 2\sum_{i\in[n]}|\lambda_i|=2\|M\|_1,
\end{align*}
where the inequality is due to the fact that for each $\lambda_i$ we have $|\lambda_i|\leq \|M\|\leq 1/2$.
\end{proof}

\end{document}